\def\dim{d}
\newcommand{\N}{\mathcal{N}}
\newcommand{\E}{E}
\newcommand{\R}{\mathbb{R}}
\renewcommand{\(}{\left(}
\renewcommand{\)}{\right)}
\renewcommand{\geq}{\geqslant}
\renewcommand{\leq}{\leqslant}
\def\R{\mathbb{R}}
\def\Prob{\mathbb{P}}
\def\E{\mathbb{E}}
\DeclareMathOperator{\VaR}{VaR}
\DeclareMathOperator{\TVaR}{TVaR}
\def\Expectile{e}
\newcommand{\abs}[1]{\left|#1\right|}
\DeclareMathOperator*{\argmin}{argmin} %
\DeclareMathOperator{\median}{med}
\newcommand{\ind}[1]{\mathds{1}_{#1}}
\newcommand{\xNorm}[2]{\left\Vert #1 \right\Vert_{#2}}
\newcommand{\scalarProduct}[2]{\left\langle #1 , #2 \right\rangle}
\DeclareMathOperator{\tr}{\top}
\newtheorem{Proposition}{Proposition}[section]
\newtheorem{Theorem}{Theorem}[section]
\newtheorem{Definition}{Definition}[section]
\newtheorem{Lemma}{Lemma}[section]
\newtheorem{Corollary}{Corollary}[section]
\newcommand{\NN}{\mathbb{N}}
\def\bfa{\bm{a}}
\def\bft{\bm{t}}
\def\bfu{\bm{u}}
\def\bfx{\bm{x}}
\def\bfy{\bm{y}}
\def\bfX{\bm{X}}
\def\bfc{\bm{c}}
\def\bfalpha{\bm{\alpha}}
\def\bfbeta{\bm{\beta}}
\def\bfZ{\bm{Z}}
\def\bfz{\bm{z}}
\def\bfzero{\bm{0}}
\def\bfY{\bm{Y}}
\def\bfxi{\bm{\xi}}
\def\bfU{\bm{U}}
\def\bfE{\bm{E}}
\def\Copula{C}
\DeclareMathOperator{\Poisson}{Pois}
\DeclareMathOperator{\Exponential}{Exp}
\def\inprob{\overset{p}{\longrightarrow}}
\def\as{\overset{a.s.}{\longrightarrow}}
\def\eqdist{\overset{d}{=}}
\def\indI{\ell}
\def\bfmu{\bm{\mu}}
\begin{document}

\author{\textsc{Klaus Herrmann}\footnote{Department of Mathematics
and Statistics, Concordia University, 1400 de Maisonneuve Blvd. West,
Montr\'eal (Qu\'ebec) Canada H3G 1M8; e-mail:
\texttt{klaus.herrmann@concordia.ca} }
\and \textsc{Marius Hofert}\footnote{Department of Statistics and
Actuarial Science, University of Waterloo, 200 University Avenue West,
Waterloo (Ontario), Canada N2L 3G1; e-mail:
\texttt{marius.hofert@uwaterloo.ca}}
\and \textsc{M\'elina Mailhot}\footnote{Department of Mathematics and
Statistics, Concordia University, 1400 de Maisonneuve Blvd. West,
Montr\'eal (Qu\'ebec) Canada H3G 1M8; e-mail:
\texttt{melina.mailhot@concordia.ca}.}
}

\title{\bf\LARGE Multivariate Geometric Expectiles}
\date{\today}

\maketitle

\begin{abstract}
\noindent
A generalization of expectiles for $\dim$-dimensional multivariate distribution functions is introduced.
The resulting geometric expectiles are unique solutions to a convex risk minimization problem and are given by $\dim$-dimensional vectors.
They are well behaved under common data transformations and the corresponding sample version is shown to be a consistent estimator.
We exemplify their usage as risk measures in a number of multivariate settings, highlighting the influence of varying margins and dependence structures.
\end{abstract}

{\bf Keywords:} expectile, geometric quantile, elicitability, dependence, minimizing expected loss, multivariate risk measure

\setcounter{page}{1}

\section{Introduction}
A fundamental task in risk management and applied actuarial science is to quantify the risks associated with a given position.
Prime examples of risky positions are portfolio holdings or (re-)insurance contracts.
Quantifying risks is not only necessary for the internal decision making process of financial institutions, insurance companies or individual investors, but also mandatory from a regulatory perspective.
For example, the regulatory frameworks for banks (OSFI, AMF, Basel II, 2.5, III) and insurance companies (CIA and, in Europe, Solvency II, Swiss Solvency Test) require not only internal risk modeling, but also specifically demand that businesses quantify and report risks in a specific way, using risk measures.
This task is intrinsically multivariate in nature as one of the American Property and Casualty Minimum Capital Target Advisory Committee key principle is that \lq Risks should be aggregated. No diversification between risk categories is permitted until evidence confirms diversification will hold in a stress situation\rq\phantom{a}(see~\cite{OSFI2010}).
The Office of the Superintendent of Financial Institutions of Canada states : \lq Gross, ceded and net provisions for claims liabilities must be provided by actuarial lines of business\rq\phantom{a}(see~\cite{AAR2014}).

Until recently, regulatory economic capital has been calculated based on univariate risk measures.
In this case, i.e., when considering risks separately, the theory of risk measures is well established, see, e.g., \cite{McNeilFreyEmbrechts2015} Chapter 2 for an overview.
The two most popular risk measures in this setting are value-at-risk (VaR) and tail-value-at-risk (TVaR; sometimes also referred to as conditional-tail-expectation or expected shortfall).

However, capital allocation has to be reinvestigated when dealing with a portfolio when it is more appropriate to secure capital simultaneously for multiple business activities.
In this paper, we introduce a method that allows users to allocate capital to each risk based on possibly different confidence levels, and considering the dependence between and among business lines.

In a real world scenario markets and assets are interconnected or prone to systemic risk.
The same holds true when considering insurance contracts where dependence can play an important role.
It can thus be beneficial to consider risks in a joint framework rather than treating them as isolated entities, such as the top-down allocation rule.
Many problems from this consideration have been pointed out in the last decade, notably by Bank of Canada (see~\cite{gauthier2010}).
To this end, a general theory for multivariate risk measures which specifically take the underlying dependence structure into account has recently emerged.

Based on multivariate risk measures, the trade-off between two stock indices has been studied by \cite{Cherubini2004} using bivariate inverse quantiles.
Losses and adjustment loss allocation expenses (ALAE) have been studied by \cite{di2013plug}, using multivariate value-at-risk and tail-value-at-risk.
\cite{GueganHassani2014} allocate risk capital based on bivariate quantiles, where operational risk and other related risks are considered as separate dependent classes.
Most of the techniques use an acceptance set, as presented in \cite{Jouini2004}, and calculate a metric for each risk class, considering the dependence between those classes.
\cite{balbasEtAl2011} present several properties from a general representation of multivariate risk functions.

From an actuarial perspective, multivariate risk measures generalizing VaR are treated in \cite{EmbrechtsPuccetti2006}, \cite{CossetteMailhotMarceau2012}, \cite{CousinDiBernardino2013} and \cite{TorresLilloLaniado2015}.
Multivariate versions of TVaR have been defined in \cite{CousinDiBernardino2014} and \cite{CossetteMailhotMarceauMesfioui2015}.
\cite{MaumeDeschampsRulliereSaid2017} also introduce a multivariate extension of expectiles.
However, this approach differs from ours in the sense that it is non-geometrical.
Likewise, the statistical community has generalized the notion of quantiles, i.e., VaR, to higher dimensions, e.g., via the notion of statistical depth, see, e.g., \cite{Mosler2013} for an overview, and optimization-based definitions as in \cite{AbdousTheodorescu1992}, \cite{Chaudhuri1996} or \cite{Chakraborty2001}.
Although the two approaches set out from different starting points, interconnections are possible in some cases, see for example \cite{HallinPaindaveineSiman2010}.
A thorough overview of different approaches can be found in \cite{Serfling2002}, while a connection between half-space depth and stress testing risk factors is established in \cite{McNeilSmith2012}.

Our work is motivated by the fact that despite its good properties and popularity, the tail-value-at-risk is not elicitable in the univariate case, see \cite{Gneiting2011}.
Elicitability is a property that has been investigated in \cite{osband1985providing} in order to score the estimation of risks.
Therefore, using the same criterion to do forecasting-based model selection and risk assessment is not possible when using $\TVaR$, or, as shown in \cite{Ziegel2014}, any other spectral risk measure other than the expectation.
While univariate quantiles are elicitable, and can thus be utilized in forecasting-based model selection, they, however, do not adhere to the broadly accepted framework of coherence, see \cite{ArtznerDelbaenEberHeath1999}, which establishes preferable properties of risk measures in an axiomatic fashion.
This is a serious drawback in actuarial applications.

As shown in \cite{Ziegel2014}, the only elicitable, law-invariant and coherent risk measures are expectiles, introduced by \cite{NeweyPowell1987}.
Expectiles generalize the mean for a given probability distribution in much the same way as quantiles generalize the median.
Furthermore, they have a natural interpretation when considering the gain-loss ratio connected to a given position, i.e., the ratio of the expected gains over the expected losses, which is a popular performance measure in portfolio management, see \cite{BelliniDiBernardino2017}.
The amount of money that needs to be added to a given position in order to achieve a pre-specified, and in practical applications sufficiently high, gain-loss ratio is given by an expectile.
In the univariate case, expectiles therefore combine favourable properties of risk measures and constitute an important addition to the well established VaR and TVaR.

Considering both, the need for multivariate risk measures and the favourable properties of univariate expectiles, the main target of the present study is therefore to define a multivariate version of expectiles and to study its properties.
Moreover, this paper introduces the novel concept of allocating a distinct confidence level to each risk, while considering the dependence structure between them.

The paper is structured as follows.
We briefly summarize univariate quantiles and expectiles in Section~\ref{section_univariate_quantiles}, while the main ideas behind the multivariate framework are introduced in Section~\ref{sec:MGeoExp}.
Specifically, Section~\ref{ssec:MGeoVaR} reviews geometric quantiles, while Section~\ref{section_geometric_expectiles} defines geometric expectiles as the main contribution of our paper.
We discuss population and asymptotic properties of the newly introduced statistical functional in Section~\ref{sec:Prop}, while examples are discussed in Section~\ref{section_illustration}.
Finally Section~\ref{section_conclusion} concludes. Selected plots can be
reproduced with the latest version of the \textsf{R} package \texttt{qrmtools}; see the vignette \texttt{geometric\_risk\_measures}.
\section{Univariate Quantiles and Expectiles}\label{section_univariate_quantiles}
It is a standard approach in statistics to express population characteristics in terms of minimizing the expected loss of a random variable under a given loss function.
Considering the absolute value $\abs{\cdot}$, the median solves $\median X = \argmin_{c \in \R} \E[\abs{X-c}]$, while the mean is obtained when considering the square loss $\E[X] = \argmin_{c \in \R} \E[(X-c)^2]$.
In case of the absolute value loss function it is readily observed that, using an asymmetric generalization of $\abs{\cdot}$, quantiles other than the median can be obtained.
For $\alpha \in (0,1)$ we define the \emph{check loss} as
\begin{align}
  \label{eq_check_function_loss}
	\rho_{\alpha} \colon \R \to [0,\infty), \quad t \mapsto \abs{\alpha - \ind{(-\infty,0]}(t)}\abs{t},
\end{align}
where we see that the case $\alpha = 0.5$ is directly related to the usual absolute value.
Similar to the median this leads to $F^{-1}(\alpha) = \argmin_{c \in \R} \E[\rho_{\alpha}(X - c)]$.
In \cite{KoenkerBassett1978} this observation is the starting point to introduce the quantile regression framework.
As an alternative \cite{NeweyPowell1987}, introduced an asymmetric version of the square loss along the same lines.
To this end we set
\begin{align}\label{eq_univariate_expectile_loss}
	\lambda_{\alpha} \colon \R \to [0,\infty), \quad t \mapsto \abs{\alpha - \ind{(-\infty,0]}(t)}\abs{t}^2,
\end{align}
 where again $\alpha \in (0,1)$.
The minimizers $\Expectile(\alpha) = \argmin_{c \in \R} \E[\lambda_{\alpha}(X - c)]$ are called expectiles, analogously to quantiles minimizing the check loss.
Again the case $\alpha = 0.5$ reduces to the well known motivating example, i.e., $\Expectile(0.5) = \E[X]$.
The generalized loss functions are asymmetric versions of their symmetric $\alpha = 0.5$ counterparts.
Compared to $\rho_{\alpha}$ the loss function $\lambda_{\alpha}$, however, is continuously differentiable, leading to favorable analytic properties in a minimization context.

\section{Multivariate Geometric Risk Measures}\label{sec:MGeoExp}
In order to generalize univariate expectiles to the multivariate setting we first revisit in Section~\ref{ssec:MGeoVaR} the framework introduced by \cite{Chaudhuri1996}.
This allows for a suitable generalization of the loss function in \eqref{eq_check_function_loss}, leading to multivariate geometric quantiles.
In Section~\ref{section_geometric_expectiles} we then apply the underlying idea of \cite{Chaudhuri1996} to give a multivariate generalization of \eqref{eq_univariate_expectile_loss} and to introduce multivariate geometric expectiles as the main contribution of this paper.
\subsection{Multivariate Geometric Value-at-Risk}\label{ssec:MGeoVaR}
\cite{Chaudhuri1996} provides a definition of multivariate quantiles by generalizing the approach outlined in Section~\ref{section_univariate_quantiles}.
The resulting geometric quantiles are obtained by minimizing the expected loss based on a multivariate loss function generalizing $\rho_\alpha$ given in \eqref{eq_check_function_loss}.

For $\bfx,\bfy \in \R^{\dim}$ we denote by $\xNorm{\bfx}{2} = \sqrt{\bfx^{\tr}\bfx}$ and $\scalarProduct{\bfx}{\bfy} = \bfx^{\tr}\bfy$ the Euclidean norm and inner product respectively, and by $B^d = \{\bfx \in {\R^{\dim}} : \xNorm{\bfx}{2} < 1 \} \subset {\R^{\dim}}$ the open unit ball in $\R^{\dim}$, where we neglect the superscript in unambiguous situations.
For a fixed index $\bfu \in B$ \cite{Chaudhuri1996} defines the loss function $\Phi_{\bfu}$ as
\begin{align}
\label{eq_multivariate_loss}
	\Phi_{\bfu} \colon \R^{\dim} \to [0,\infty), \quad \bft \mapsto \Phi_{\bfu}(\bft) = \frac{1}{2}(\xNorm{\bft}{2} + \scalarProduct{\bfu}{\bft}).
\end{align}
While it is immediately clear that $\Phi_{\bfu}(\bfzero) = 0$ for all $\bfu \in B$, we also have that $\Phi_{\bfu}(\bft) \geq 0$ for all $(\bfu,\bft) \in B \times {\R^{\dim}}$ using the Cauchy-Schwarz inequality.
Convexity of $\Phi_{\bfu}$ follows directly from properties of the norm and inner product.

Based on $\Phi_{\bfu}$ the \emph{(multivariate) geometric quantile}, or \emph{geometric $\VaR$}, at level $\bfalpha \in B$ for a random vector $\bfX$ is then defined as
\begin{align}\label{eq_multivariate_var}
  \VaR_{\bfalpha}(\bfX) = \argmin_{\bfc\in\R^{\dim}} \E[\Phi_{\bfalpha}(\bfX - \bfc)].
\end{align}
As shown in \cite{Chaudhuri1996}, the right hand side of \eqref{eq_multivariate_var} is always finite and the minimization is thus well posed.
Furthermore, the resulting geometric $\VaR$ is the unique minimizer of \eqref{eq_multivariate_var}.

In \eqref{eq_multivariate_var} the vector $\bfalpha \in B$ takes the role of the confidence level.
However, due to the multivariate context, $\VaR$ is now indexed by a $\dim$-dimensional vector instead of a scalar.
This adds additional flexibility compared to other approaches such as \cite{CousinDiBernardino2013}, \cite{Bentahar2006} and \cite{CossetteMailhotMarceauMesfioui2015}, where only one scalar confidence level can be set for the multivariate risk $\bfX$.
It is also important to notice that the geometric quantile $\VaR_{\bfalpha}(\bfX)$ itself is represented by a vector in $\R^{\dim}$.
This makes the resulting risk measure easier to use for risk analysis than approaches such as \cite{CousinDiBernardino2014}, \cite{CousinDiBernardino2013} and \cite{MailhotMoralesOmidi} where the resulting multivariate quantiles are subsets in $\R^{\dim}$.

When comparing traditional confidence levels in $(0,1)$ to the univariate case of our setting, care has to be taken to adjust the indices.
Both settings are equivalent by simply re-indexing according to $f \colon [0,1] \to [-1,1], \quad x \mapsto f(x) = 2x-1$.
An index of $99\%$ in the traditional setting is therefore comparable to an index of $98\%$ using the convention adopted in this paper.

The orientation of the contour lines of the objective function is influenced by the direction of the index $\bfu$, while the magnitude of the index changes the shape of the contour lines.
For smaller values of $\xNorm{\bfu}{2}$ the contour lines are more norm like, i.e., more circular, and in the limit $\xNorm{\bfu}{2} = 0$, i.e., if and only if $\bfu = (0,0)$ we are indeed left with the circular contour lines of the norm.
\subsection{Multivariate Geometric Expectiles}\label{section_geometric_expectiles}
Analogously to the approach of \cite{Chaudhuri1996} we introduce our multivariate representation of expectiles via a multivariate generalization of $\lambda_{\alpha}$.
For this purpose it is more convenient to rewrite the original definition of $\lambda_{\alpha}$ given in \eqref{eq_univariate_expectile_loss} as
\begin{align*}
\lambda_{\alpha}(t) = \frac{1}{2}\abs{t}(\abs{t} + (2\alpha - 1)t).
\end{align*}
It can easily be verified that both definitions coincide for all $t \in \R$.
Similarly to \eqref{eq_multivariate_loss} this motivates our definition of the loss function $\Lambda_{\bfu}$ as
\begin{align}\label{eq_multivariate_expectile_loss}
	\Lambda_{\bfu} \colon \R^{\dim} \to [0,\infty), \quad \bft \mapsto \Lambda_{\bfu}(\bft) = \frac{1}{2}\xNorm{\bft}{2}(\xNorm{\bft}{2} + \scalarProduct{\bfu}{\bft}),
\end{align}
where $\bfu \in B$ is a fixed element of the open unit ball.
Given that $\Phi_{\bfu}(\bft) \geq 0$ for all $(\bfu,\bft) \in B \times \R^{\dim}$ it is clear that we also have $\Lambda_{\bfu}(\bft) \geq 0$ for all $(\bfu,\bft) \in B \times\R^{\dim}$.
As for $\Phi_{\bfu}$ we have $\Lambda_{\bfu}(\bfzero) = 0$ for all $\bfu \in B$.

For a given confidence level $\bfalpha \in B$ we now define the \emph{geometric expectile} of a random vector $\bfX$ as the minimizer of the expected loss based on $\Lambda_{\bfalpha}$, i.e.
\begin{align}\label{eq_multivariate_expectile}
  \Expectile_{\bfalpha}(\bfX) = \argmin_{\bfc \in \R^{\dim}} \E[\Lambda_{\bfalpha}(\bfX-\bfc)].
\end{align}
As in the case of the geometric $\VaR$, the definition of geometric expectiles is based on an index $\bfalpha \in B$ allowing to specify a direction and magnitude of the confidence level.
Furthermore, geometric expectiles are vectors in $\R^{\dim}$.
This makes them easier to interpret than multivariate risk measures that are given as subsets of $\R^{\dim}$.
For example, for $\bfalpha = \bfzero$ it is easy to see that $\Expectile_{\bfzero}(\bfX) = \left(\E[X_1],\ldots,\E[X_{\dim}]\right)$.
The mean vector is therefore, analogously to the univariate case, a special case of the geometric expectiles defined in \eqref{eq_multivariate_expectile}.
In Section~\ref{sec:Prop} we discuss the existence of a minimizer $\Expectile_{\bfalpha}$ and its uniqueness  together with further properties of $\Expectile_{\bfalpha}$.

Figure~\ref{fig_contour_lines_expectiles} displays the contour lines for a two dimensional example of $\Lambda_{\bfu}(\bft)$ for three indices $\bfu_1 = 0.9/\sqrt{2}(1,1)$, $\bfu_2 = 0.9/\sqrt{2}(-1,1)$ and $\bfu_3 = 0.5/\sqrt{2}(-1,1)$.
The figure shows how the direction of the index, visualized by the arrow, changes the orientation of the contour lines (compare the left and middle plots).
Also, the magnitude of the index $\xNorm{\bfu}{2}$ influences the shape of the contour lines (compare the middle and right plots), where smaller values of $\xNorm{\bfu}{2}$ lead to more norm like contours as already discussed in the case of quantiles.

The examples in Section~\ref{section_illustration} provide numerical illustrations of the resulting expectiles for a number of bivariate distributions, see Figures~\ref{fig_numerical_expectiles} and \ref{fig_numerical_expectiles_compound_model}, as well as an analytic solution to \eqref{eq_multivariate_expectile} in the special case of a bivariate uniform distribution.
\begin{figure}
  \centering
  \includegraphics[width=0.32\textwidth]{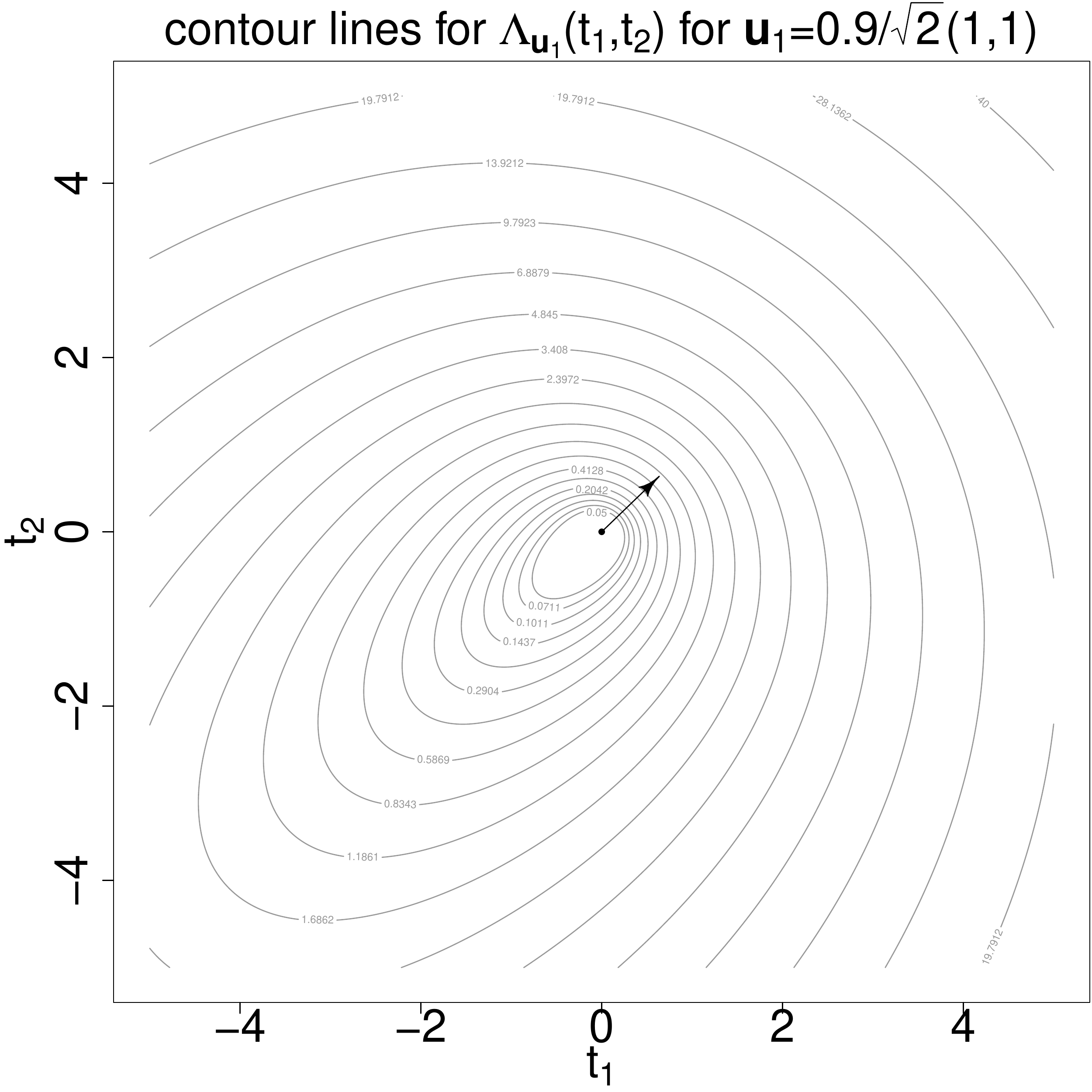}
  \includegraphics[width=0.32\textwidth]{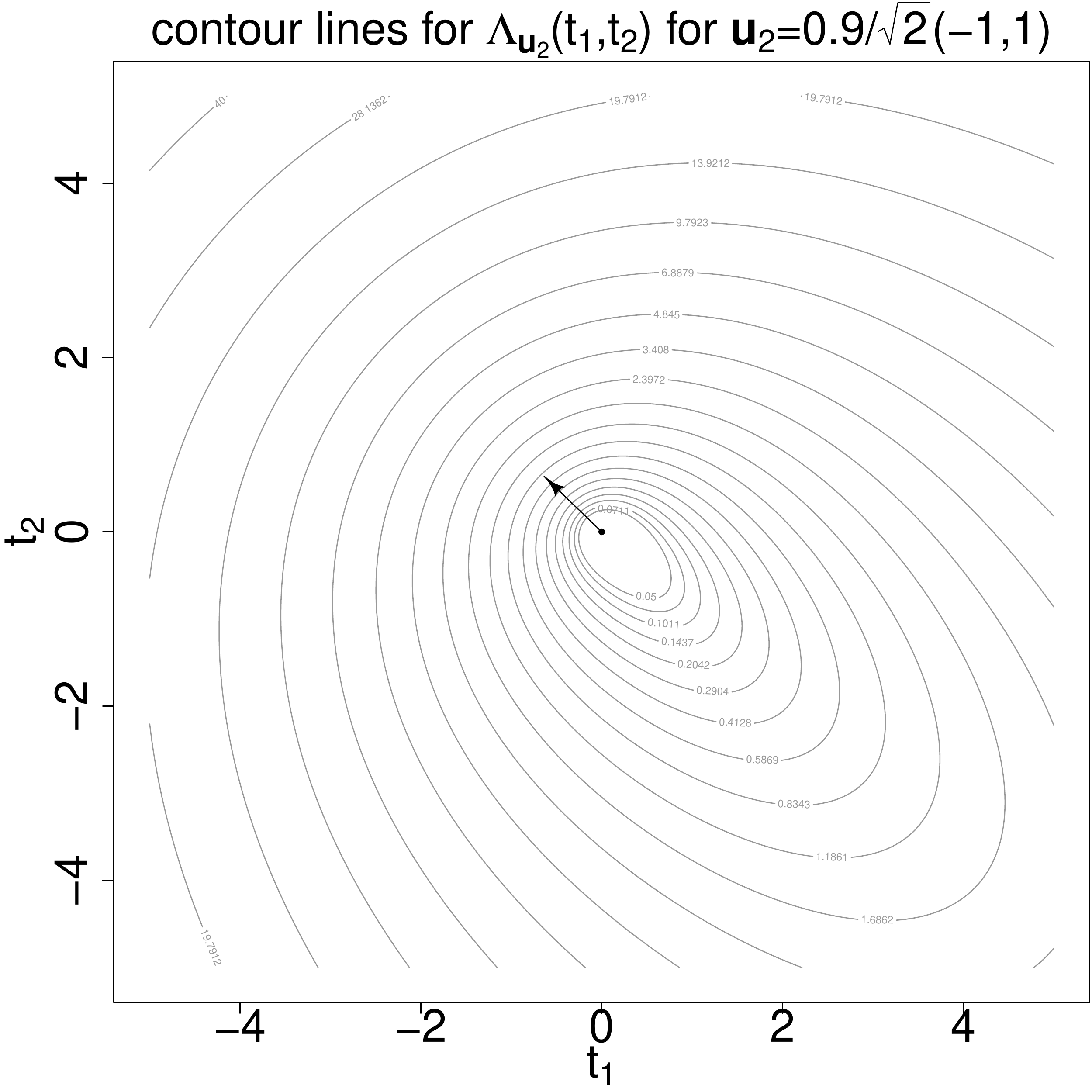}
  \includegraphics[width=0.32\textwidth]{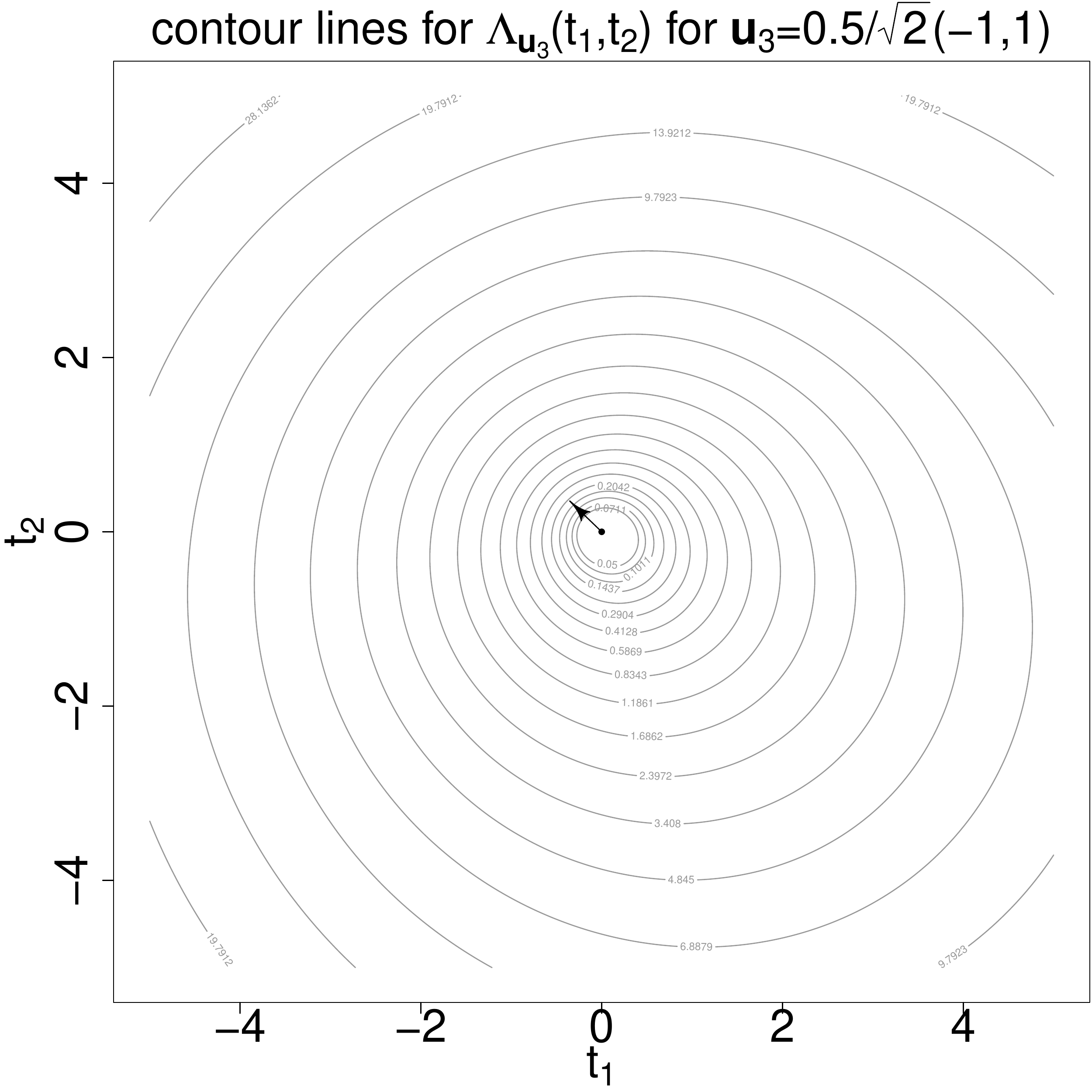}
	\caption{Contour lines for $\Lambda_{\bfu_i}(t_1,t_2)$, $i \in \{1,2,3\}$ for indices $\bfu_1=0.9/\sqrt{2}(1,1)$, $\bfu_2=0.9/\sqrt{2}(-1,1)$ and $\bfu_3=0.5/\sqrt{2}(-1,1)$. The global minimum is marked with a black dot at $(0,0)$, while the arrow visualizes direction and magnitude of the index $\bfu_i$, $i \in \{1,2,3\}$.}
    \label{fig_contour_lines_expectiles}
\end{figure}
\section{Properties of Geometric Expectiles}\label{sec:Prop}
In this section we discuss properties of geometric expectiles $\Expectile_{\bfalpha}$ defined in \eqref{eq_multivariate_expectile}.
Clearly, properties of the associated loss function $\Lambda_{\bfu}$ play a major role in this discussion which is why we discuss them first in Section~\ref{section_properties_Lambda}.
In Section~\ref{section_properties_Expectiles} we then derive properties of $\Expectile_{\bfalpha}$.
Finally, we discuss asymptotics in Section~\ref{section_estimation_asymptotics} when $\Expectile_{\bfalpha}$ needs to be estimated from observed data or approximated when closed-form solutions to the minimization problem cannot be obtained.
\subsection{Properties of $\Lambda_{\bfu}$}\label{section_properties_Lambda}
In the univariate setting an advantage of expectiles over quantiles is that the underlying loss function is differentiable at zero.
This is also true for geometric quantiles and expectiles when $\dim \geq 2$.
The following theorem shows that the geometric expectile loss function continues to be differentiable for $\dim \geq 2$, while it is straightforward to see that this is not the case for the geometric quantile loss function $\Phi_{\bfu}$ defined in \eqref{eq_multivariate_loss}.
\begin{Theorem}[Differentiability of $\Lambda_{\bfu}$]\label{theorem_differentiability_of_Lambda}
For $\Lambda_{\bfu}$ defined in \eqref{eq_multivariate_expectile_loss} the gradient $\nabla \Lambda_{\bfu}(\bft)$ exists for all $(\bfu,\bft) \in B\times\R^{\dim}$ with $\nabla\Lambda_{\bfu}(\bfzero) = \bfzero$.
\end{Theorem}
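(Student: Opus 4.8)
The plan is to compute the gradient of $\Lambda_{\bfu}$ explicitly away from the origin, where $\xNorm{\bft}{2}$ is smooth, and then to handle the origin separately by a direct limiting argument. Write $\Lambda_{\bfu}(\bft) = \tfrac12\xNorm{\bft}{2}^2 + \tfrac12\xNorm{\bft}{2}\scalarProduct{\bfu}{\bft}$. The first term is a polynomial in the coordinates of $\bft$ and is therefore differentiable everywhere with gradient $\bft$. So the only issue is the second term $g(\bft) = \tfrac12\xNorm{\bft}{2}\scalarProduct{\bfu}{\bft}$, which is the product of the (non-differentiable-at-zero) norm and the linear functional $\bft \mapsto \scalarProduct{\bfu}{\bft}$.

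For $\bft \neq \bfzero$ I would apply the product and chain rules: since $\nabla \xNorm{\bft}{2} = \bft/\xNorm{\bft}{2}$ for $\bft\neq\bfzero$ and $\nabla \scalarProduct{\bfu}{\bft} = \bfu$, we get
\begin{align*}
\nabla\Lambda_{\bfu}(\bft) = \bft + \frac{\scalarProduct{\bfu}{\bft}}{2\xNorm{\bft}{2}}\,\bft + \frac{\xNorm{\bft}{2}}{2}\,\bfu, \qquad \bft \neq \bfzero.
\end{align*}
At the origin I would verify differentiability from the definition: I must exhibit a linear map $L$ with $\Lambda_{\bfu}(\bft) - \Lambda_{\bfu}(\bfzero) - L(\bft) = o(\xNorm{\bft}{2})$ as $\bft\to\bfzero$. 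Since $\Lambda_{\bfu}(\bfzero)=0$, the candidate is $L = \bfzero$, and the claim reduces to $\Lambda_{\bfu}(\bft) = o(\xNorm{\bft}{2})$. This follows because $\abs{\Lambda_{\bfu}(\bft)} \le \tfrac12\xNorm{\bft}{2}^2 + \tfrac12\xNorm{\bft}{2}\,\xNorm{\bfu}{2}\,\xNorm{\bft}{2} = \tfrac12(1+\xNorm{\bfu}{2})\xNorm{\bft}{2}^2$ by Cauchy--Schwarz, which is indeed $o(\xNorm{\bft}{2})$; hence $\nabla\Lambda_{\bfu}(\bfzero) = \bfzero$ as asserted. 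Strictly, since I only claimed existence of the gradient (not continuous differentiability), the $o(\xNorm{\bft}{2})$ estimate at $\bfzero$ together with the explicit formula away from $\bfzero$ suffices; the formula away from $\bfzero$ also confirms $\nabla\Lambda_{\bfu}$ is continuous on all of $\R^{\dim}$ when $\dim\ge 2$, since the apparently singular term $\tfrac{\scalarProduct{\bfu}{\bft}}{2\xNorm{\bft}{2}}\bft$ is bounded in norm by $\tfrac12\xNorm{\bfu}{2}\xNorm{\bft}{2}\to 0$, though that observation is not needed for the statement.

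The main (and only mild) obstacle is the bookkeeping at the origin: one must resist the temptation to "plug in $\bft=\bfzero$" into the away-from-zero formula, and instead argue from the difference quotient / little-$o$ definition directly, as above. The estimate $\abs{\Lambda_{\bfu}(\bft)}\le \tfrac12(1+\xNorm{\bfu}{2})\xNorm{\bft}{2}^2$ does all the work and is immediate from the triangle/Cauchy--Schwarz inequalities already invoked in the text for nonnegativity of $\Phi_{\bfu}$. Everything else is the routine product and chain rule on the open set $\R^{\dim}\setminus\{\bfzero\}$, where smoothness of $\xNorm{\cdot}{2}$ is standard.
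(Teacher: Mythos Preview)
Your proof is correct. Away from the origin you and the paper do the same thing: write down the explicit gradient via the product and chain rules. At the origin, however, the approaches diverge. The paper takes the $k$-th partial derivative formula valid for $\bft\neq\bfzero$, passes to $\dim$-dimensional polar coordinates $\bft_n = r_n\bfxi$, and shows each partial tends to zero as $r_n\to 0$; this establishes that the partials extend continuously to $\bfzero$ with value zero (implicitly invoking that continuous partials give differentiability). You instead verify Fr\'echet differentiability directly from the definition, using the single-line estimate $\abs{\Lambda_{\bfu}(\bft)}\le \tfrac12(1+\xNorm{\bfu}{2})\xNorm{\bft}{2}^2$ from Cauchy--Schwarz. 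Your route is shorter and avoids both the polar-coordinate machinery and the appeal to the continuous-partials criterion; the paper's route, on the other hand, yields continuity of $\nabla\Lambda_{\bfu}$ at $\bfzero$ as an immediate byproduct (which you note separately but correctly flag as not required by the statement).
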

\begin{proof}
For $\bft \neq \bfzero$ it is clear that the partial derivatives with respect to each variable once exist and are finite.
To show the claim for $\bft = \bfzero$ we first consider the $k$-th element of the gradient given by
\begin{align*}
        \frac{\partial}{\partial t_k} \Lambda_{\bfu}(\bft) = t_k + \frac{t_k}{2\xNorm{\bft}{2}}\scalarProduct{\bfu}{\bft} + \frac{1}{2}\xNorm{\bft}{2}u_k.
\end{align*}
Now consider a sequence $(\bft_n)_{n=1}^{\infty}$ such that $\lim_{n\to\infty} \bft_n = \bfzero$, and we can represent each element $\bft_n$ via $\dim$-dimensional polar coordinates, i.e., we consider a radius $r_n$ and angles $\phi_{n,1},\ldots,\phi_{n,\dim-1}$ such that $\bft_n$ can be represented by
\begin{align*}
\bft_{n} = r_n
 \begin{pmatrix}
\cos(\phi_{n,1})\\
\sin(\phi_{n,1})\cos(\phi_{n,2})\\
\sin(\phi_{n,1})\sin(\phi_{n,2})\cos(\phi_{n,3})\\
\vdots\\
\sin(\phi_{n,1})\cdots\sin(\phi_{n,\dim-2})\cos(\phi_{n,\dim-1})\\
\sin(\phi_{n,1})\cdots\sin(\phi_{n,\dim-2})\sin(\phi_{n,\dim-1}),
 \end{pmatrix}
 = r_n \, \bfxi(\phi_{n,1},\ldots,\phi_{n,\dim-1})
\end{align*}
where $r_n \to 0$ as $n \to \infty$.
Writing $\xi_1,\ldots,\xi_{\dim}$ for the components of $\bfxi = \bfxi(\phi_{n,1},\ldots,\phi_{n,\dim-1})$ and noting that $\xNorm{\bfxi}{2} = 1$ we observe that
\begin{align*}
        \frac{\partial}{\partial t_k} \Lambda_{\bfu}(\bft) = r_n\xi_k + \frac{r_n\xi_k}{2r_n}r_n\scalarProduct{\bfu}{\bfxi} + \frac{1}{2}r_n\xNorm{\bm{\xi}}{2}u_k,
\end{align*}
which converges to zero for $n \to \infty$ for any sequence $(\bft_n)_{n=1}^{\infty}$ converging to zero.
\end{proof}
From the definition of $\Phi_{\bfu}$ in \eqref{eq_multivariate_loss} it is straightforward to see that $\Phi_{\bfu}$ is a convex function.
While this is also true for the loss function $\Lambda_{\bfu}$ tied to geometric expectiles, it is not immediately clear from \eqref{eq_multivariate_expectile_loss}.
To simplify the discussion we first recall a well-known result from convex analysis, see for example \cite{Rudin1976}.
\begin{Lemma}[Midpoint convexity]\label{theorem_midpoint_convexity}
Denote by $f \colon \R^{\dim} \to \R$ a continuous function.
Then $f$ is convex if and only if it is midpoint convex, i.e.
\begin{align*}
0.5 f(\bft_1) + 0.5 f(\bft_2) - f(0.5(\bft_1+\bft_2)) \geq 0
\end{align*}
for all $\bft_1,\bft_2 \in \R^{\dim}$.
\end{Lemma}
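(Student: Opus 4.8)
\emph{Proof sketch.} The forward implication is immediate: by the definition of convexity applied with convex weight $1/2$, one gets $f(\tfrac12(\bft_1+\bft_2)) \le \tfrac12 f(\bft_1) + \tfrac12 f(\bft_2)$, which rearranges to exactly the midpoint inequality. So the plan is to establish the converse — that continuity together with midpoint convexity forces convexity.

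First I would fix $\bft_1,\bft_2 \in \R^{\dim}$ and reduce the global statement to a one-dimensional one by introducing the defect function $h\colon [0,1]\to\R$, $h(s) = f((1-s)\bft_1 + s\bft_2) - (1-s)f(\bft_1) - s f(\bft_2)$. Then $h$ is continuous, being a composition and sum of continuous functions, with $h(0)=h(1)=0$, and the convexity of $f$ along the segment from $\bft_1$ to $\bft_2$ is precisely the assertion $h(s)\le 0$ for all $s\in[0,1]$; since $\bft_1,\bft_2$ are arbitrary, this yields convexity of $f$. The one routine verification here is that $h$ inherits midpoint convexity on $[0,1]$: the argument $(1-s)\bft_1+s\bft_2$ and the subtracted part $(1-s)f(\bft_1)+sf(\bft_2)$ are both affine in $s$, so the affine part satisfies the midpoint identity with equality and does not disturb the midpoint inequality that comes from applying midpoint convexity of $f$ at the two endpoints $(1-s_1)\bft_1+s_1\bft_2$ and $(1-s_2)\bft_1+s_2\bft_2$.

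The remaining, and only substantive, step is to show that a continuous, midpoint-convex $h\colon[0,1]\to\R$ with $h(0)=h(1)=0$ is nonpositive. I would argue by contradiction: if $M := \max_{[0,1]} h > 0$, which is attained since $h$ is continuous on a compact interval, let $s^{\ast}$ be the \emph{largest} point of $[0,1]$ with $h(s^{\ast})=M$. Because $h(0)=h(1)=0<M$ we have $s^{\ast}\in(0,1)$, so for all sufficiently small $\varepsilon>0$ both $s^{\ast}\pm\varepsilon$ lie in $[0,1]$, with $h(s^{\ast}-\varepsilon)\le M$ and, by maximality of $s^{\ast}$, $h(s^{\ast}+\varepsilon)<M$ strictly. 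Midpoint convexity applied at $s^{\ast}-\varepsilon$ and $s^{\ast}+\varepsilon$, whose midpoint is $s^{\ast}$, then gives $M = h(s^{\ast}) \le \tfrac12 h(s^{\ast}-\varepsilon) + \tfrac12 h(s^{\ast}+\varepsilon) < M$, a contradiction; hence $M\le 0$ and $h\le 0$.

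I expect the only point needing genuine care to be this last step, specifically the observation that $h(s^{\ast}+\varepsilon)$ is \emph{strictly} below $M$, which is exactly where choosing $s^{\ast}$ to be the largest maximizer (rather than an arbitrary one) is essential; everything else is bookkeeping. An equivalent and more classical alternative would be to prove $h(k/2^{n})\le 0$ for every dyadic rational $k/2^{n}\in[0,1]$ by induction on $n$, and then pass to an arbitrary $s\in[0,1]$ along a sequence of dyadic rationals using continuity of $h$; I would flag this route as well, but the maximizer argument is shorter and sidesteps the inductive bookkeeping.
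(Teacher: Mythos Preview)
Your proof is correct: the reduction to the one-dimensional defect function $h$ is clean, the verification that $h$ inherits midpoint convexity is right, and the ``largest maximizer'' contradiction argument is valid (the strict inequality $h(s^{\ast}+\varepsilon)<M$ is exactly what the maximality of $s^{\ast}$ buys). The dyadic-rationals alternative you flag is the more classical textbook route and is equally fine.

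As for comparison with the paper: there is nothing to compare. The paper does not prove this lemma at all --- it simply states it as a well-known fact from convex analysis and cites Rudin. Your write-up therefore supplies a self-contained argument where the paper relies on a reference.
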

To further prepare the result we first present a theorem generalizing the familiar parallelogram identity.
While this is an essential component of our convexity proof in Theorem~\ref{theorem_strictly_convex}, the result is interesting in its own right.
\begin{Theorem}[Parallelogram Inequality]\label{theorem_parallelogram_inequality}
Denote by $\overline{B} = \{ \bfu \in \R^{\dim}: \xNorm{\bfu}{2} \leq 1\}$ the closed unit ball in $\R^{\dim}$.
For any fixed vectors $\bfx,\bfy \in \R^{\dim}$ it holds that
\begin{align}\label{eq_parallelogram_inequality}
        -\xNorm{\bfx - \bfy}{2}^2 \leq 2\xNorm{ \bfx }{2} \scalarProduct{ \bfu}{ \bfx } + 2\xNorm{ \bfy}{2} \scalarProduct{ \bfu}{ \bfy } - \xNorm{ \bfx + \bfy}{2} \scalarProduct{ \bfu}{ \bfx + \bfy } \leq \xNorm{ \bfx - \bfy }{2}^2
\end{align}
for all $\bfu \in \overline{B}$.
For $\bfu$ such that $\xNorm{ \bfu }{2} < 1$ equality holds in \eqref{eq_parallelogram_inequality} if and only if $\bfx = \bfy$.
\end{Theorem}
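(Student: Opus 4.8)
The plan is to exploit that the central expression in \eqref{eq_parallelogram_inequality} depends linearly on $\bfu$, so that the two-sided bound over all $\bfu\in\overline{B}$ collapses, via the Cauchy--Schwarz inequality, to a single inequality between the lengths $\xNorm{\bfx}{2}$, $\xNorm{\bfy}{2}$ and $\xNorm{\bfx+\bfy}{2}$.

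First I would set $\bm{v} = 2\xNorm{\bfx}{2}\,\bfx + 2\xNorm{\bfy}{2}\,\bfy - \xNorm{\bfx+\bfy}{2}\,(\bfx+\bfy)$, so that the middle term of \eqref{eq_parallelogram_inequality} equals $\scalarProduct{\bfu}{\bm{v}}$. Cauchy--Schwarz together with $\xNorm{\bfu}{2}\le 1$ then gives $\abs{\scalarProduct{\bfu}{\bm{v}}}\le\xNorm{\bfu}{2}\xNorm{\bm{v}}{2}\le\xNorm{\bm{v}}{2}$, so both inequalities in \eqref{eq_parallelogram_inequality} follow simultaneously from the single key estimate $\xNorm{\bm{v}}{2}\le\xNorm{\bfx-\bfy}{2}^2$.

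To establish this estimate I would write $a=\xNorm{\bfx}{2}$, $b=\xNorm{\bfy}{2}$ and $c=\xNorm{\bfx+\bfy}{2}$, use $2\scalarProduct{\bfx}{\bfy}=c^2-a^2-b^2$ so that $\xNorm{\bfx-\bfy}{2}^2=2a^2+2b^2-c^2$ and $\bm{v}=(2a-c)\bfx+(2b-c)\bfy$, and expand $\xNorm{\bm{v}}{2}^2$ in terms of $a,b,c$ and $\scalarProduct{\bfx}{\bfy}$. The goal is then the algebraic identity
\begin{align*}
  \xNorm{\bfx-\bfy}{2}^4-\xNorm{\bm{v}}{2}^2 = 2\,(a+b-c)^2\bigl((a+b)c+2ab\bigr),
\end{align*}
whose right-hand side is nonnegative because $a,b,c\ge 0$; this gives $\xNorm{\bm{v}}{2}^2\le\xNorm{\bfx-\bfy}{2}^4$ and hence the key estimate. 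I expect this polynomial identity to be essentially the only computation in the proof: the cleanest route is to collect $\xNorm{\bfx-\bfy}{2}^4-\xNorm{\bm{v}}{2}^2$ as a polynomial in $c$ (the top-degree terms cancel), verify that $c=a+b$ is a double root, and divide out the factor $(c-a-b)^2$.

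Finally, for the equality statement suppose $\xNorm{\bfu}{2}<1$. If $\bm{v}\neq\bfzero$, the Cauchy--Schwarz step is strict, so $\abs{\scalarProduct{\bfu}{\bm{v}}}<\xNorm{\bm{v}}{2}\le\xNorm{\bfx-\bfy}{2}^2$ and neither bound in \eqref{eq_parallelogram_inequality} can be attained. If $\bm{v}=\bfzero$, the middle term vanishes and equality would force $\xNorm{\bfx-\bfy}{2}^2=0$, i.e.\ $\bfx=\bfy$; conversely, $\bfx=\bfy$ makes all three expressions in \eqref{eq_parallelogram_inequality} equal to $0$. This yields the claimed characterization, so the only real obstacle is the bookkeeping in the displayed polynomial identity.
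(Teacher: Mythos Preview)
Your proposal is correct and follows the same overall strategy as the paper: write the middle expression as $\scalarProduct{\bfu}{\bm v}$, apply Cauchy--Schwarz to reduce both bounds to the single inequality $\xNorm{\bm v}{2}^2\le\xNorm{\bfx-\bfy}{2}^4$, and then establish this by factoring the difference as a manifestly nonnegative expression. The equality discussion is also handled the same way, and your case split on $\bm v=\bfzero$ versus $\bm v\neq\bfzero$ is slightly cleaner than the paper's treatment of special cases.

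The only genuine difference lies in how the quartic inequality is verified. The paper first uses scale invariance to normalize $\xNorm{\bfx+\bfy}{2}=1$, then parameterizes $(\xNorm{\bfx}{2},\xNorm{\bfy}{2})$ by planar polar coordinates $(r\cos\theta,r\sin\theta)$ and arrives at the factorization $2r(r\cos\theta+r\sin\theta-1)^2(2r\sin\theta\cos\theta+\sin\theta+\cos\theta)$. You skip both the normalization and the trigonometric substitution and factor directly in $a=\xNorm{\bfx}{2}$, $b=\xNorm{\bfy}{2}$, $c=\xNorm{\bfx+\bfy}{2}$ as $2(a+b-c)^2\bigl((a+b)c+2ab\bigr)$; this is exactly the paper's expression de-homogenized (set $c=1$ and substitute $a=r\cos\theta$, $b=r\sin\theta$ to recover it). Your route is shorter, avoids the need to treat $\bfx+\bfy=\bfzero$ and $\bfx=\bfzero$ separately, and makes the double root at $c=a+b$ transparent, whereas the paper's coordinates obscure that structure behind the trigonometry.
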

\begin{proof}
We start by considering the bounded components of \eqref{eq_parallelogram_inequality} as a function of $\bfu$ which can be rewritten as
\begin{align*}
f_{\bfx,\bfy}(\bfu) = \scalarProduct{ \bfu}{ (2\xNorm{ \bfx}{2} - \xNorm{ \bfx+\bfy }{2})\bfx + (2\xNorm{ \bfy}{2} - \xNorm{ \bfx+\bfy }{2})\bfy }.
\end{align*}
First, we consider two special cases.
For $\bfx = \bfy$ all terms on all sides in \eqref{eq_parallelogram_inequality} vanish and equality holds for all $\bfu \in \overline{B}$.
For $\bfx = \bfzero \neq \bfy$ we have $f_{\bfzero,\bfy}(\bfu) = \xNorm{\bfy}{2}\scalarProduct{\bfu}{\bfy}$. Therefore
\begin{align*}
-\xNorm{\bfy}{2}^2 \leq f_{\bfzero,\bfy}(\bfu) \leq \xNorm{\bfy}{2}^2
\end{align*}
holds as a consequence of the Cauchy-Schwarz inequality. Equality can only hold if $\xNorm{\bfu}{2} = 1$.
Then, for the general case, we consider $\bfy \neq \bfx \neq \bfzero$ and define
\begin{align*}
L(\bfx,\bfy) = \xNorm{ (2\xNorm{ \bfx}{2} - \xNorm{ \bfx+\bfy}{2} )\bfx + (2\xNorm{ \bfy}{2} - \xNorm{ \bfx+\bfy }{2})\bfy }{2}.
\end{align*}
With the Cauchy-Schwarz inequality and $\xNorm{ \bfu }{2} \leq 1$ we have that
\begin{align*}
-L(\bfx,\bfy) \leq f_{\bfx,\bfy}(\bfu) \leq L(\bfx,\bfy),
\end{align*}
where the equality only holds if $\xNorm{\bfu}{2} = 1$.
Our claim is now equivalent to $L(\bfx,\bfy) \leq \xNorm{ \bfx - \bfy }{2}^2$, or equivalently to
\begin{align}\label{eq_reformulation}
 \xNorm{ \bfx - \bfy }{2}^4 - L(\bfx,\bfy)^2 \geq 0.
\end{align}
Due to the scale invariance of both terms
\begin{align*}
        L(\sigma \bfx, \sigma \bfy) &= \sigma^2 L(\bfx, \bfy),\\
        \xNorm{ \sigma \bfx - \sigma \bfy }{2}^2 &= \sigma^2 \xNorm{ \bfx - \bfy }{2}^2,
\end{align*}
for any $\sigma > 0$, we consider without loss of generality $\bfx$ and $\bfy$ such that $\xNorm{\bfx+\bfy}{2} = 1$.
Any other cases can be handled by rescaling with $\sigma = 1 / \xNorm{\bfx+\bfy}{2}$.
We continue by considering polar coordinates of $( \xNorm{\bfx}{2},\xNorm{\bfy}{2}) \in \R^2$ leading to $\xNorm{\bfx}{2} = r \cos(\theta)$ and $\xNorm{\bfy}{2} = r \sin(\theta)$, where $r > 0$ and $0 \leq \theta \leq \pi/2$ due to the strict component wise positivity of $( \xNorm{\bfx}{2},\xNorm{\bfy}{2})$.
This yields
\begin{align*}
        \xNorm{\bfx+\bfy}{2}^2 = \xNorm{\bfx}{2}^2 + \xNorm{\bfy}{2}^2 + 2\scalarProduct{\bfx}{\bfy} = r^2 + 2\scalarProduct{\bfx}{\bfy} = 1,
\end{align*}
or alternatively ($2\scalarProduct{\bfx}{\bfy} = 1-r^2$).
For $\xNorm{\bfx-\bfy}{2}^2$ we have
\begin{align*}
        \xNorm{\bfx-\bfy}{2}^2 = \xNorm{\bfx}{2}^2 + \xNorm{\bfy}{2}^2 - 2\scalarProduct{\bfx}{\bfy} = r^2 - (1-r^2) = 2r^2 -1.
\end{align*}
For the first term in \eqref{eq_reformulation} we therefore have $\xNorm{\bfx-\bfy}{2}^4 = (2r^2 - 1)^2$.
Concerning $L(\bfx,\bfy)^2$ we have $L(\bfx,\bfy)^2 = \alpha^2 \xNorm{\bfx}{2}^2 + \beta^2 \xNorm{\bfy}{2}^2 + \alpha\beta 2\scalarProduct{\bfx}{\bfy}$ with $\alpha = (2\xNorm{\bfx}{2} - \xNorm{\bfx+\bfy}{2})$ and $\beta = (2\xNorm{\bfy}{2} - \xNorm{\bfx+\bfy}{2})$.
In terms of $r$ and $\theta$ we then have
\begin{align*}
        \alpha^2 &= (2r\cos(\theta)-1)^2\\
        \beta^2 &= (2r\sin(\theta)-1)^2\\
        \alpha\beta &= (2r\cos(\theta)-1)(2r\sin(\theta)-1)\\
        L(\bfx,\bfy)^2 &= \alpha^2 r^2\cos(\theta)^2 + \beta^2 r^2 \sin(\theta)^2 + \alpha\beta (1-r^2).
\end{align*}
Reformulating \eqref{eq_reformulation} in terms of $r$ and $\theta$ then yields
\begin{align*}
        \xNorm{ \bfx - \bfy }{2}^4 - L(\bfx,\bfy)^2 = 2r(r\cos(\theta)+r\sin(\theta)-1)^2 (2r\sin(\theta)\cos(\theta)+\sin(\theta)+\cos(\theta))
\end{align*}
which is non-negative given the restrictions on $\theta$.
\end{proof}
Given Lemma~\ref{theorem_midpoint_convexity} and Theorem~\ref{theorem_parallelogram_inequality} we can now establish the strict convexity of $\Lambda_{\bfu}$.
\begin{Theorem}[Strict convexity of $\Lambda_{\bfu}$]\label{theorem_strictly_convex}
  For every fixed $\bfu \in B$ the function $\Lambda_{\bfu}$ defined in \eqref{eq_multivariate_loss} is strictly convex on $\R^{\dim}$.
\end{Theorem}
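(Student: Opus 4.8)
The plan is to derive strict convexity from Lemma~\ref{theorem_midpoint_convexity} together with the Parallelogram Inequality of Theorem~\ref{theorem_parallelogram_inequality}. Since $\Lambda_{\bfu}$ is manifestly continuous on $\R^{\dim}$, Lemma~\ref{theorem_midpoint_convexity} already reduces convexity to midpoint convexity. To upgrade this to \emph{strict} convexity, I would first record the elementary fact that it suffices to establish the strict midpoint inequality
\begin{align*}
0.5\,\Lambda_{\bfu}(\bft_1) + 0.5\,\Lambda_{\bfu}(\bft_2) - \Lambda_{\bfu}\!\left(0.5(\bft_1+\bft_2)\right) > 0 \qquad \text{for all } \bft_1 \neq \bft_2 .
\end{align*}
Indeed, the lemma first yields convexity; and if $\Lambda_{\bfu}$ were affine on some segment $[\bft_1,\bft_2]$ with $\bft_1 \neq \bft_2$, then evaluating that affine function at the midpoint would force equality above, a contradiction.

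The core computation is then to expand the left-hand side using the rewriting $\Lambda_{\bfu}(\bft) = \tfrac12\xNorm{\bft}{2}^2 + \tfrac12\xNorm{\bft}{2}\scalarProduct{\bfu}{\bft}$ and to split it into a purely quadratic part and a part involving $\bfu$. For the quadratic part I would invoke the ordinary parallelogram identity $\xNorm{\bft_1+\bft_2}{2}^2 + \xNorm{\bft_1-\bft_2}{2}^2 = 2\xNorm{\bft_1}{2}^2 + 2\xNorm{\bft_2}{2}^2$, which collapses it to exactly $\tfrac18\xNorm{\bft_1-\bft_2}{2}^2$. For the remaining part, after factoring out $\tfrac18$ what is left is precisely the middle expression in \eqref{eq_parallelogram_inequality} evaluated at $\bfx = \bft_1$, $\bfy = \bft_2$, so Theorem~\ref{theorem_parallelogram_inequality} bounds it below by $-\tfrac18\xNorm{\bft_1-\bft_2}{2}^2$. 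Adding the two contributions gives a lower bound of $0$ (re-proving convexity), and the strict case of Theorem~\ref{theorem_parallelogram_inequality}, available because $\bfu \in B$ means $\xNorm{\bfu}{2} < 1$, makes the second bound strict whenever $\bft_1 \neq \bft_2$. This yields the required strict midpoint inequality and hence the claim.

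I expect the only subtle point to be the passage from midpoint convexity to strict convexity, since Lemma~\ref{theorem_midpoint_convexity} as stated only delivers convexity; the "affine on a segment forces equality at the midpoint'' argument must be spelled out explicitly. Everything else is routine: the bookkeeping of the factor $\tfrac18$ and the identification of the linear remainder with the bounded quantity in \eqref{eq_parallelogram_inequality} are mechanical, and all the genuine analytic content has already been isolated in Theorem~\ref{theorem_parallelogram_inequality}.
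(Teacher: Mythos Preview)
Your proposal is correct and follows essentially the same route as the paper: both split the midpoint defect into the quadratic part (handled by the ordinary parallelogram identity) and the $\bfu$-part (handled by the lower bound in Theorem~\ref{theorem_parallelogram_inequality}), and both deduce strictness from $\xNorm{\bfu}{2}<1$. The only cosmetic difference is that the paper clears denominators by working with $h(\bfx,\bfy)=4D(\bfx,\bfy)$ instead of your $\tfrac18$-bookkeeping; your explicit ``affine on a segment forces midpoint equality'' remark is in fact slightly more careful than the paper, which leaves that passage implicit.
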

\begin{proof}
Due to continuity of $\Lambda_{\bfu}$ and Lemma~\ref{theorem_midpoint_convexity} we focus on midpoint convexity.
To this end, define $D(\bfx,\bfy) = 0.5 \Lambda_{\bfu}(\bfx) + 0.5 \Lambda_{\bfu}(\bfy) - \Lambda_{\bfu}(0.5(\bfx+\bfy))$, $\bfx,\bfy \in \R^{\dim}$, where we have that $\Lambda_{\bfu}(0.5(\bfx+\bfy)) = 0.25 \xNorm{\bfx+\bfy}{2}^2 + 0.25 \xNorm{\bfx+\bfy}{2}\scalarProduct{\bfu}{\bfx+\bfy} = 0.25 \Lambda_{\bfu}(\bfx+\bfy)$.
The function $\Lambda_{\bfu}$ is then convex if and only if $h \colon \R^{\dim} \times \R^{\dim} \to \R, \quad (\bfx,\bfy) \mapsto h(\bfx,\bfy) = 4D(\bfx,\bfy) = 2 \Lambda_{\bfu}(\bfx) + 2 \Lambda_{\bfu}(\bfy) - \Lambda_{\bfu}(\bfx+\bfy)$ is non-negative.
For $h$ we have that
\begin{align*}
        h(\bfx,\bfy) &= 2 \xNorm{\bfx}{2}^2 + 2 \xNorm{\bfx}{2}\scalarProduct{\bfu}{\bfx} + 2 \xNorm{\bfy}{2}^2 + 2 \xNorm{\bfy}{2}\scalarProduct{\bfu}{\bfy} - \xNorm{\bfx+\bfy}{2}^2 - \xNorm{\bfx+\bfy}{2}\scalarProduct{\bfu}{\bfx+\bfy}\\
           &= \xNorm{\bfx-\bfy}{2}^2 + 2 \xNorm{\bfx}{2}\scalarProduct{\bfu}{\bfx} + 2 \xNorm{\bfy}{2}\scalarProduct{\bfu}{\bfy} -  \xNorm{\bfx+\bfy}{2}\scalarProduct{\bfu}{\bfx+\bfy},
\end{align*}
where we used the parallelogram identity $2 \xNorm{\bfx}{2}^2 + 2 \xNorm{\bfy}{2}^2 = \xNorm{\bfx+\bfy}{2}^2 + \xNorm{\bfx-\bfy}{2}^2$ to get the second equality.
The condition $h(\bfx,\bfy) \geq 0$ is equivalent to
\begin{align*}
-\xNorm{\bfx-\bfy}{2}^2 \leq 2 \xNorm{\bfx}{2}\scalarProduct{\bfu}{\bfx} + 2 \xNorm{\bfy}{2}\scalarProduct{\bfu}{\bfy} - \xNorm{\bfx+\bfy}{2}\scalarProduct{\bfu}{\bfx+\bfy},
\end{align*}
which holds by Theorem~\ref{theorem_parallelogram_inequality}.
The fact that $\Lambda_{\bfu}$ is strictly convex follows, as the index $\bfu$ is assumed to lie in the open ball, i.e., $\xNorm{\bfu}{2} = 1$ is not permitted.
\end{proof}
We have so far established that $\Lambda_{\bfu}$ is differentiable with a stationary point at $\bft = \bfzero$.
Furthermore, the strict convexity of $\Lambda_{\bfu}$ guarantees that there exists at most one global minimum for $\Lambda_{\bfu}$.
To finally ensure the existence of such a minimizer we establish coercivity of $\Lambda_{\bfu}$.
\begin{Definition}[Coercive function on $\R^{\dim}$]
A real valued function $f \colon \R^{\dim} \to \R$ is said to be coercive if $\lim_{n \to \infty} f(\bfx_n) = \infty$ for all sequences $(\bfx_n)_{n=1}^{\infty}$ such that $\lim_{n \to \infty} \xNorm{\bfx_n}{2} = \infty$.
\end{Definition}
Coercivity plays an important role in optimization theory as it ensures the existance of at least one minimizer for a large class of real valued functions.
This fact is formalized in the following theorem.
\begin{Theorem}\label{theorem_coercive_inf}
Denote by $f \colon \R^{\dim} \to \R$ a coercive and convex function. Then there exists an element $\bfx_0 \in \R^{\dim}$ such that $f(\bfx_0) = \inf_{\bfx \in \R^{\dim}} f(\bfx)$.
\end{Theorem}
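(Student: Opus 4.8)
The plan is to reduce the statement to the classical Weierstrass extreme value theorem by exhibiting a nonempty compact set on which the infimum is already attained. Fix an arbitrary point $\bfx_1 \in \R^{\dim}$ and set $\gamma = f(\bfx_1) \in \R$. Consider the sublevel set $S = \{\bfx \in \R^{\dim} : f(\bfx) \le \gamma\}$, which is nonempty since $\bfx_1 \in S$. First I would argue that $S$ is bounded: if it were not, there would exist a sequence $(\bfx_n)_{n=1}^{\infty}$ in $S$ with $\xNorm{\bfx_n}{2} \to \infty$, and coercivity would force $f(\bfx_n) \to \infty$, contradicting $f(\bfx_n) \le \gamma$ for all $n$.

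Next I would show that $S$ is closed. This is the only place where convexity is used: a finite convex function on all of $\R^{\dim}$ is continuous (indeed locally Lipschitz); see, e.g., \cite{Rudin1976}. Consequently $S = f^{-1}((-\infty,\gamma])$ is the preimage of a closed set under a continuous map, hence closed. Combined with boundedness, $S$ is compact and $f$ restricted to $S$ is continuous.

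Finally, by the Weierstrass theorem $f$ restricted to the nonempty compact set $S$ attains its minimum at some $\bfx_0 \in S$, i.e.\ $f(\bfx_0) \le f(\bfx)$ for all $\bfx \in S$. For $\bfx \notin S$ we have $f(\bfx) > \gamma \ge f(\bfx_0)$, using the definition of $\gamma$ and $\bfx_0 \in S$; hence $f(\bfx_0) \le f(\bfx)$ for every $\bfx \in \R^{\dim}$, which gives $f(\bfx_0) = \inf_{\bfx \in \R^{\dim}} f(\bfx)$, and in particular shows the infimum is finite.

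I do not expect a genuine obstacle here, as the argument is routine; the only point that deserves care is invoking continuity of a finite convex function on $\R^{\dim}$, which is precisely what makes the sublevel set closed. (Without convexity one would instead need to assume lower semicontinuity of $f$ to run the same compactness argument.) Note that convexity is not needed for uniqueness of the minimizer in our application, since strict convexity of $\Lambda_{\bfu}$ has already been established in Theorem~\ref{theorem_strictly_convex}.
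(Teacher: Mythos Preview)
Your argument is correct and complete. It differs from the paper's proof, which does not carry out the minimization directly but instead invokes a general existence result for lower-semicontinuous coercive functions on reflexive Banach spaces (Theorem~2.11 and Remark~2.13 in \cite{BarbuPrecupanu2012}), and then uses the fact that a proper convex function on $\R^{\dim}$ is continuous on the interior of its effective domain (Proposition~2.3 in \cite{Tuy2016}) to supply the required regularity. Your route is more elementary and self-contained: you stay in finite dimensions, use coercivity to bound a sublevel set, continuity (from convexity) to close it, and then the classical Weierstrass theorem on the resulting compact set. The paper's approach buys generality (the cited theorem applies well beyond $\R^{\dim}$), whereas yours buys transparency and avoids heavy machinery. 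Both arguments ultimately rely on the same ingredient to pass from convexity to the needed topological property, namely automatic continuity of finite convex functions; the paper cites \cite{Tuy2016} for this rather than \cite{Rudin1976}, so you may want to adjust the reference accordingly.
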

\begin{proof}
The proof follows from the more general Theorem $2.11$ and Remark $2.13$ in \cite{BarbuPrecupanu2012} applicable to lower-semicontinuous functions on reflexive Banach spaces.
The necessary continuity of $f$ is guaranteed by Proposition $2.3$ in \cite{Tuy2016} stating that a proper convex function on $\R^{\dim}$ is continuous on every interior point of its effective domain.
\end{proof}
To finally tie all parts together we establish the coercivity of $\Lambda_{\bfu}$ in the following theorem.
Given the strict convexity of $\Lambda_{\bfu}$ established in Theorem~\ref{theorem_strictly_convex}, an application of Theorem~\ref{theorem_coercive_inf} ensures the existence of a unique and global minimizer.
From our previous observations, especially Theorem~\ref{theorem_differentiability_of_Lambda}, we know that this minimum is located at $\bfzero$ for every given $\bfu \in B$.
\begin{Theorem}[Coercivity of $\Lambda_{\bfu}$]\label{theorem_coercive_lambda}
The function $\Lambda_{\bfu}$ is coercive on $\R^{\dim}$.
\end{Theorem}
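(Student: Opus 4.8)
The plan is to bound $\Lambda_{\bfu}$ from below by a positive multiple of $\xNorm{\bft}{2}^2$ and then let $\xNorm{\bft}{2}\to\infty$. The only inequality needed is Cauchy--Schwarz, which gives $\scalarProduct{\bfu}{\bft} \geq -\xNorm{\bfu}{2}\xNorm{\bft}{2}$ for all $\bft\in\R^{\dim}$.

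First I would fix $\bfu \in B$, so that $\xNorm{\bfu}{2} < 1$, and write $c_{\bfu} := 1 - \xNorm{\bfu}{2} > 0$. Then, for any $\bft \in \R^{\dim}$,
\begin{align*}
\Lambda_{\bfu}(\bft) = \frac{1}{2}\xNorm{\bft}{2}\bigl(\xNorm{\bft}{2} + \scalarProduct{\bfu}{\bft}\bigr) \geq \frac{1}{2}\xNorm{\bft}{2}\bigl(\xNorm{\bft}{2} - \xNorm{\bfu}{2}\xNorm{\bft}{2}\bigr) = \frac{c_{\bfu}}{2}\,\xNorm{\bft}{2}^2.
\end{align*}
Here the inequality uses $\xNorm{\bft}{2} \geq 0$ and the Cauchy--Schwarz bound on $\scalarProduct{\bfu}{\bft}$.

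Next, given any sequence $(\bfx_n)_{n=1}^{\infty}$ with $\lim_{n\to\infty}\xNorm{\bfx_n}{2} = \infty$, the estimate above yields $\Lambda_{\bfu}(\bfx_n) \geq \frac{c_{\bfu}}{2}\xNorm{\bfx_n}{2}^2 \to \infty$ as $n\to\infty$, since $c_{\bfu} > 0$ is a fixed positive constant. This is exactly the definition of coercivity, so the claim follows. I do not anticipate a genuine obstacle here; the only point worth emphasizing is that positivity of the constant $c_{\bfu}$ crucially relies on $\bfu$ lying in the \emph{open} ball $B$ — on the boundary $\xNorm{\bfu}{2} = 1$ one would only get $\Lambda_{\bfu}(\bft) \geq 0$, and indeed coercivity can fail there (e.g. along $\bft = -s\bfu$, $s\to\infty$). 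Combining this theorem with the strict convexity from Theorem~\ref{theorem_strictly_convex} and Theorem~\ref{theorem_coercive_inf} then gives existence and uniqueness of the global minimizer, which by Theorem~\ref{theorem_differentiability_of_Lambda} is $\bfzero$.
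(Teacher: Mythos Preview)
Your proof is correct and follows exactly the paper's approach: the paper also uses Cauchy--Schwarz to obtain $\Lambda_{\bfu}(\bfx) \geq \tfrac{1}{2}(1-\xNorm{\bfu}{2})\xNorm{\bfx}{2}^2$ and concludes coercivity from there. Your additional remarks on the necessity of $\bfu$ lying in the open ball and the link to Theorems~\ref{theorem_strictly_convex}, \ref{theorem_coercive_inf}, and \ref{theorem_differentiability_of_Lambda} are accurate and match the surrounding discussion in the paper.
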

\begin{proof}
Given that $\xNorm{\bfu}{2} = s < 1$ the Cauchy-Schwarz inequality implies $\scalarProduct{\bfu}{\bfx} \geq -s\xNorm{\bfx}{2}$.
Therefore $\Lambda_{\bfu}(\bfx) \geq 0.5\xNorm{\bfx}{2}^2(1-s)$ which proofs the claim.
\end{proof}
\subsection{Properties of $\Expectile_{\bfalpha}$}\label{section_properties_Expectiles}
With the properties of $\Lambda_{\bfu}$ in place we can now tackle those of $\Expectile_{\bfalpha}$.
In \eqref{eq_multivariate_expectile}, it is necessary to ensure that the objective function, i.e., the expected loss, is finite.
Similarly to the univariate case, we recover that a finite second moment condition for the marginal distributions is sufficient.
We thus introduce the condition
\begin{itemize}[leftmargin=*]
	\item[](C) For a $\dim$-dimensional random vector $\bfX$ assume $\E[X_j^2] < \infty$ for all $j \in \{ 1,\ldots,\dim \}$
\end{itemize}
for ease of reference.
This leads to the following result.
\begin{Theorem}\label{theorem_second_order_condition}
If (C) holds for a $\dim$-dimensional random vector $\bfX = (X_1,\ldots,X_{\dim})$ then $0 \leq \E[\Lambda_{\bfu}(\bfX-\bfc)] < \infty$ for every $\bfc \in \R^{\dim}$ and $\bfu \in B$.
\end{Theorem}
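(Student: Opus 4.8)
The non-negativity $\E[\Lambda_{\bfu}(\bfX-\bfc)]\geq 0$ is immediate: we already observed that $\Lambda_{\bfu}(\bft)\geq 0$ for all $(\bfu,\bft)\in B\times\R^{\dim}$, and since $\Lambda_{\bfu}$ is continuous it is Borel measurable, so $\E[\Lambda_{\bfu}(\bfX-\bfc)]$ is well defined as an element of $[0,\infty]$. The entire content of the statement is therefore the finiteness of the expectation, and the plan is to dominate $\Lambda_{\bfu}$ by a quadratic and invoke condition~(C).

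The key estimate is a pointwise bound on the loss function. For fixed $\bfu\in B$ we have $\xNorm{\bfu}{2}<1$, so by the Cauchy--Schwarz inequality $\scalarProduct{\bfu}{\bft}\leq\xNorm{\bfu}{2}\xNorm{\bft}{2}\leq\xNorm{\bft}{2}$ for every $\bft\in\R^{\dim}$, hence $\xNorm{\bft}{2}+\scalarProduct{\bfu}{\bft}\leq 2\xNorm{\bft}{2}$ and
\begin{align*}
0\leq\Lambda_{\bfu}(\bft)=\tfrac12\xNorm{\bft}{2}\bigl(\xNorm{\bft}{2}+\scalarProduct{\bfu}{\bft}\bigr)\leq\xNorm{\bft}{2}^2
\end{align*}
for all $\bft\in\R^{\dim}$. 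Applying this with $\bft=\bfX-\bfc$ and taking expectations gives $\E[\Lambda_{\bfu}(\bfX-\bfc)]\leq\E[\xNorm{\bfX-\bfc}{2}^2]=\sum_{j=1}^{\dim}\E[(X_j-c_j)^2]$.

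It then remains to check that each summand is finite under~(C). Expanding, $\E[(X_j-c_j)^2]=\E[X_j^2]-2c_j\E[X_j]+c_j^2$; the first term is finite by~(C), and~(C) also forces $\E[\abs{X_j}]<\infty$ (e.g.\ by Jensen's inequality or Cauchy--Schwarz applied to $X_j$ and the constant $1$), so $\E[X_j]$ is finite as well. Thus every term on the right-hand side is finite, whence $\E[\Lambda_{\bfu}(\bfX-\bfc)]<\infty$, completing the argument.

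I do not anticipate any real obstacle here: the only thing to be slightly careful about is noting that~(C), a second-moment condition, also yields the first moments needed when expanding the square, and that measurability/well-definedness of the expectation is granted by continuity and non-negativity of $\Lambda_{\bfu}$. The proof is otherwise a two-line domination argument.
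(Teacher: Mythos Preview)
Your proof is correct and essentially the same as the paper's: both bound $\Lambda_{\bfu}(\bft)$ by a constant times $\xNorm{\bft}{2}^2$ via Cauchy--Schwarz and then invoke~(C). Your version is slightly cleaner (pointwise bound first, with the sharper constant $1$ instead of $2$) and you spell out why $\E[(X_j-c_j)^2]<\infty$, but the argument is the same domination.
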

\begin{proof}
We use Jensen's inequality and $\xNorm{\bfu}{2} < 1$ to obtain that
\begin{align*}
        \abs{\E[\xNorm{\bfX}{2}\scalarProduct{\bfu}{\bfX}]} \leq \E[\xNorm{\bfX}{2}\abs{\scalarProduct{\bfu}{\bfX}}] \leq \E[\xNorm{\bfX}{2}^2].
\end{align*}
For a given $\bfc \in \R^d$ and $\bfu \in B$, this leads to
\begin{align*}
        \E[\Lambda_{\bfu}(\bfX-\bfc)] &= \abs{\E[\Lambda_{\bfu}(\bfX-\bfc)]} \leq \E[\xNorm{\bfX-\bfc}{2}^2] + \E[\xNorm{\bfX-\bfc}{2}\abs{\scalarProduct{\bfu}{\bfX-\bfc}}]\\
                                    &\leq 2\E[\xNorm{\bfX-\bfc}{2}^2]\\
                                    &= 2 \sum^{\dim}_{j=1} \E[(X_j-c_j)^2] < \infty,
\end{align*}
as $\Lambda_{\bfu}(\bft) \geq 0$ for all $(\bfu,\bft) \in B\times\R^{\dim}$ and $\E[X_j^2] < \infty$.
\end{proof}
Now that the finiteness of the expected loss is addressed, we turn to the existence and uniqueness of $\Expectile_{\bfalpha}$.
To do so we adapt the proof of Theorem $6.8$ in \cite{Lehmann1983} to our more general setting.
To this end let
\begin{align}\label{eq_obj_func}
  \phi(\bfc) = \E[\Lambda_{\bfalpha}(\bfX-\bfc)]
\end{align}
denote the objective function used in \eqref{eq_multivariate_expectile} and recall the convergence in probability to infinity.
\begin{Definition}[Convergence in probability to $\infty$]
A sequence of positive random variables $(Y_n)_{n=1}^{\infty}$ converges in probability to $\infty$, if, for every $K > 0$,
$
\lim_{n\to\infty} \Prob[ Y_n > K] = 1.
$
\end{Definition}
In preparation of showing coercivity of $\phi$, we first discuss the probabilistic behaviour of $\Lambda_{\bfalpha}(\bfX - \bfc)$ when $\xNorm{\bfc}{2}$ tends towards $\infty$.
\begin{Lemma}\label{lemma_convergence_to_infinity}
  For a sequence of vectors $(\bfc_n)_{n=1}^{\infty}$ such that $\xNorm{\bfc_n}{2} \to \infty$ and for a fixed random vector $\bfX$, the sequence $(\Lambda_{\bfalpha}(\bfX - \bfc_n))_{n=1}^{\infty}$ converges in probability to $\infty$.
\end{Lemma}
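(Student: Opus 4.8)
The plan is to bound $\Lambda_{\bfalpha}(\bfX - \bfc_n)$ from below by a quantity that deterministically grows to infinity on the event that $\xNorm{\bfX}{2}$ is bounded, and then to control the probability of that event. Concretely, write $s = \xNorm{\bfalpha}{2} < 1$. By Cauchy--Schwarz, $\scalarProduct{\bfalpha}{\bfX-\bfc_n} \geq -s\xNorm{\bfX-\bfc_n}{2}$, so from \eqref{eq_multivariate_expectile_loss} we get the pointwise bound
\begin{align*}
\Lambda_{\bfalpha}(\bfX-\bfc_n) \;\geq\; \tfrac{1}{2}(1-s)\,\xNorm{\bfX-\bfc_n}{2}^2.
\end{align*}
This is exactly the estimate used in the proof of Theorem~\ref{theorem_coercive_lambda}; here I would apply it with the random argument $\bfX - \bfc_n$.

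Next I would fix an arbitrary $K > 0$ and an arbitrary $\varepsilon > 0$. Since $\bfX$ is a fixed random vector, $\xNorm{\bfX}{2}$ is a finite-valued random variable, so there is $M > 0$ with $\Prob[\xNorm{\bfX}{2} \leq M] \geq 1 - \varepsilon$. On the event $\{\xNorm{\bfX}{2}\leq M\}$ the reverse triangle inequality gives $\xNorm{\bfX-\bfc_n}{2} \geq \xNorm{\bfc_n}{2} - M$, which is eventually positive and tends to infinity because $\xNorm{\bfc_n}{2}\to\infty$. Hence there is $N$ such that for all $n \geq N$ we have $\tfrac{1}{2}(1-s)(\xNorm{\bfc_n}{2}-M)^2 > K$, and therefore on $\{\xNorm{\bfX}{2}\leq M\}$ we have $\Lambda_{\bfalpha}(\bfX-\bfc_n) > K$. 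Combining, for $n \geq N$,
\begin{align*}
\Prob[\Lambda_{\bfalpha}(\bfX-\bfc_n) > K] \;\geq\; \Prob[\xNorm{\bfX}{2}\leq M] \;\geq\; 1-\varepsilon.
\end{align*}
Since $\varepsilon$ was arbitrary, $\lim_{n\to\infty}\Prob[\Lambda_{\bfalpha}(\bfX-\bfc_n) > K] = 1$, and since $K$ was arbitrary, $\Lambda_{\bfalpha}(\bfX-\bfc_n)$ converges in probability to $\infty$.

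I do not anticipate a serious obstacle here; the only point requiring a little care is the order of quantifiers — one must choose $M$ (the truncation level for $\xNorm{\bfX}{2}$) \emph{before} invoking $\xNorm{\bfc_n}{2}\to\infty$ to pick $N$, so that $N$ is allowed to depend on both $K$ and $M$. The tightness of $\xNorm{\bfX}{2}$ is automatic since it is an almost-surely finite random variable and needs no moment assumption, so condition (C) is not required for this lemma. Everything else is the deterministic coercivity bound already recorded in Theorem~\ref{theorem_coercive_lambda} applied along the random argument.
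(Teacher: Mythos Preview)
Your argument is correct and follows essentially the same route as the paper: both start from the coercivity lower bound $\Lambda_{\bfalpha}(\bft)\geq\tfrac{1}{2}(1-s)\xNorm{\bft}{2}^2$, apply the reverse triangle inequality, and then use that $\xNorm{\bfX}{2}$ is an a.s.\ finite random variable. The only cosmetic difference is that the paper writes the last step via the distribution function of $\xNorm{\bfX}{2}$ (showing $F_{\xNorm{\bfX}{2}}(\xNorm{\bfc_n}{2}+\delta)-F_{\xNorm{\bfX}{2}}(\xNorm{\bfc_n}{2}-\delta)\to 0$), whereas you phrase it as a tightness/truncation argument; your observation that condition~(C) is not needed for this lemma is also correct.
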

\begin{proof}
  From the proof of Theorem~\ref{theorem_coercive_lambda} we have that $\Lambda_{\bfalpha}(\bft) \geq 0.5(1-s)\xNorm{\bft}{2}^2$, where $s = \xNorm{\bfalpha}{2}$. Therefore, almost surely,
\begin{align}\label{equation_inequality_1}
  \Lambda_{\bfalpha}(\bfX - \bfc) \geq 0.5(1-s)\xNorm{\bfX - \bfc}{2}^2.
\end{align}
With the reverse triangle inequality, we also have that
\begin{align}\label{equation_inequality_2}
0.5(1-s)\abs{\xNorm{\bfX}{2} - \xNorm{\bfc}{2}}^2 \leq 0.5(1-s)\xNorm{\bfX - \bfc}{2}^2,
\end{align}
almost surely.
For an arbitrary fixed $K > 0$, we define the sets
\begin{align*}
A_n(K) &= \{\omega \in \Omega : 0.5(1-s)\abs{\xNorm{\bfX}{2} - \xNorm{\bfc_n}{2}}^2 > K \},\\
B_n(K) &= \{ \omega \in \Omega : 0.5(1-s)\xNorm{\bfX - \bfc_n}{2}^2 > K\},\\
  C_n(K) &= \{ \omega \in \Omega : \Lambda_{\bfu}(\bfX - \bfc_n) > K\},
\end{align*}
leading to $A_n(K) \subset B_n(K) \subset C_n(K)$ on the basis of inequalities \eqref{equation_inequality_1} and \eqref{equation_inequality_2}.
For $A_n(K)$ and every $K>0$, we have that
\begin{align*}
\Prob[ A_n(K) ] &= 1 - \Prob[0.5(1-s)\abs{\xNorm{\bfX}{2} - \xNorm{\bfc_n}{2}}^2 \leq K],\\
&= 1 - \Prob\left[\abs{\xNorm{\bfX}{2} - \xNorm{\bfc_n}{2}} \leq \sqrt{\frac{2K}{1-s}}\right],\\
&= 1 - \( F_{\xNorm{\bfX}{2}}\(\xNorm{\bfc_n}{2}+\sqrt{\frac{2K}{1-s}}\) - F_{\xNorm{\bfX}{2}}\(\xNorm{\bfc_n}{2}-\sqrt{\frac{2K}{1-s}}\) \) \to 1,
\end{align*}
for $n \to \infty$.
Since $\Prob[A_n(K)] \leq \Prob[B_n(K)] \leq \Prob[C_n(K)]$, $\lim_{n\to\infty}\Prob[C_n(K)] = 1$ for every $K > 0$.
\end{proof}
In a second step, we show that the strict convexity of $\Lambda_{\bfu}$ established in Theorem~\ref{theorem_strictly_convex} carries over to $\phi$.
\begin{Theorem}[Strict convexity and continuity of $\phi$]\label{theorem_risk_is_convex_and_cont}
If (C) holds for a random vector $\bfX$ then $\phi \colon \R^{\dim} \to [0,\infty), \quad \bfc \mapsto \phi(\bfc) = \E[\Lambda_{\bfalpha}(\bfX-\bfc)]$ is strictly convex and continuous on $\R^{\dim}$ for every fixed $\bfalpha \in B$.
\end{Theorem}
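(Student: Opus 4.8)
The plan is to transfer the pointwise, deterministic properties of $\Lambda_{\bfalpha}$ obtained in Theorem~\ref{theorem_strictly_convex} to the averaged objective $\phi$, using Theorem~\ref{theorem_second_order_condition} to supply the integrability that lets us interchange inequalities with expectations. First I would establish continuity. Fix $\bfc \in \R^{\dim}$ and an arbitrary sequence $\bfc_n \to \bfc$; eventually $\xNorm{\bfc_n}{2} \le M := \xNorm{\bfc}{2}+1$. The pointwise bound $\Lambda_{\bfalpha}(\bft) \le 2\xNorm{\bft}{2}^2$ used in the proof of Theorem~\ref{theorem_second_order_condition} gives $\Lambda_{\bfalpha}(\bfX-\bfc_n) \le 2(\xNorm{\bfX}{2}+M)^2$, whose right-hand side is integrable under (C) since $\E[\xNorm{\bfX}{2}^2] = \sum_{j=1}^{\dim}\E[X_j^2] < \infty$ and $\E[\xNorm{\bfX}{2}] < \infty$ by Jensen. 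As $\bft \mapsto \Lambda_{\bfalpha}(\bft)$ is continuous, $\Lambda_{\bfalpha}(\bfX-\bfc_n) \to \Lambda_{\bfalpha}(\bfX-\bfc)$ pointwise, so dominated convergence gives $\phi(\bfc_n) \to \phi(\bfc)$ and, $\R^{\dim}$ being metric, $\phi$ is continuous. (Alternatively, Proposition 2.3 in \cite{Tuy2016} makes continuity automatic once convexity is known, since $\phi$ is finite on all of $\R^{\dim}$.)

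Next I would prove strict convexity via strict midpoint convexity. For distinct $\bfc_1,\bfc_2$ and every outcome $\omega$, the points $\bfX(\omega)-\bfc_1$ and $\bfX(\omega)-\bfc_2$ are distinct, so Theorem~\ref{theorem_strictly_convex} applied at their midpoint yields the surely-strict inequality $\Lambda_{\bfalpha}(\bfX-\tfrac12(\bfc_1+\bfc_2)) < \tfrac12\Lambda_{\bfalpha}(\bfX-\bfc_1) + \tfrac12\Lambda_{\bfalpha}(\bfX-\bfc_2)$. Hence $g := \tfrac12\Lambda_{\bfalpha}(\bfX-\bfc_1) + \tfrac12\Lambda_{\bfalpha}(\bfX-\bfc_2) - \Lambda_{\bfalpha}(\bfX-\tfrac12(\bfc_1+\bfc_2))$ is a strictly positive, and by Theorem~\ref{theorem_second_order_condition} integrable, random variable, so $\E[g] > 0$, i.e. $\phi(\tfrac12(\bfc_1+\bfc_2)) < \tfrac12\phi(\bfc_1) + \tfrac12\phi(\bfc_2)$. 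The same computation with $\bfc_1 = \bfc_2$ shows $\phi$ is midpoint convex, so by Lemma~\ref{theorem_midpoint_convexity} together with the continuity just shown, $\phi$ is convex. Finally, if $\phi$ were not strictly convex there would exist distinct $\bfc_1,\bfc_2$ and $\theta \in (0,1)$ with equality in the convexity inequality, which for a convex function forces $\phi$ to be affine on the entire segment $[\bfc_1,\bfc_2]$ and in particular $\phi(\tfrac12(\bfc_1+\bfc_2)) = \tfrac12\phi(\bfc_1) + \tfrac12\phi(\bfc_2)$, contradicting strict midpoint convexity; thus $\phi$ is strictly convex.

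I do not anticipate a real obstacle. The only points requiring some care are the passage from the surely-strict pointwise inequality to $\E[g] > 0$ (a nonnegative random variable with zero mean vanishes almost surely) and the elementary fact that a convex function attaining equality at an interior point of a segment is affine along that segment; the domination needed for dominated convergence is immediate from (C).
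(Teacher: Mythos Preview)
Your proof is correct and follows essentially the same route as the paper: both transfer the pointwise strict convexity of $\Lambda_{\bfalpha}$ (Theorem~\ref{theorem_strictly_convex}) to $\phi$ by noting that $\bfX-\bfc_1$ and $\bfX-\bfc_2$ are surely distinct when $\bfc_1\neq\bfc_2$, so the strict inequality survives expectation; the paper simply applies Theorem~\ref{theorem_strictly_convex} directly for arbitrary $\lambda\in(0,1)$ rather than detouring through midpoints and Lemma~\ref{theorem_midpoint_convexity}. For continuity the paper relies solely on Proposition~2.3 of \cite{Tuy2016}, which you already mention as an alternative to your dominated-convergence argument.
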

\begin{proof}
Given that the marginal second moments of $\bfX$ are finite, Theorem~\ref{theorem_second_order_condition} guarantees that $\phi$ is well-defined for every $\bfc \in \R^{\dim}$.
With $\lambda \in [0,1]$ and $\bfc_1, \bfc_2 \in \R^{\dim}$ such that $\bfc_1 \neq \bfc_2$ we have
\begin{align*}
  \phi(\lambda \bfc_1 + (1-\lambda) \bfc_2) &=  \E[\Lambda_{\bfalpha}(\bfX - (\lambda \bfc_1 + (1-\lambda) \bfc_2))],\\
&= \E[\Lambda_{\bfalpha}(\lambda \bfX +(1-\lambda)\bfX - \lambda \bfc_1 - (1-\lambda) \bfc_2)],\\
&= \E[\Lambda_{\bfalpha}(\lambda (\bfX-\bfc_1) +(1-\lambda)(\bfX-\bfc_2))],\\
&< \E[\lambda \Lambda_{\bfalpha}(\bfX-\bfc_1)] + (1-\lambda)\Lambda_{\bfalpha}(\bfX-\bfc_2)],\\
&= \lambda \phi(\bfc_1) + (1-\lambda) \phi(\bfc_2).
\end{align*}
Continuity follows from the fact that every proper convex function on $\R^{\dim}$ is continuous on every interior point of its effective domain, see, for example, Proposition 2.3 of \cite{Tuy2016}.
Having established that $\phi$ is strictly convex on all of $\R^{\dim}$ the claim follows.
\end{proof}
Combining Lemma~\ref{lemma_convergence_to_infinity} and Theorem~\ref{theorem_risk_is_convex_and_cont} now allows us to ensure the existance and uniqueness of geometric expectiles.
\begin{Theorem}[Existence and uniqueness of $\Expectile_{\bfalpha}$]\label{theorem_existance_and_uniqueness_expectiles}
If (C) holds for a $\dim$-dimensional random vector $\bfX$ then there exists a unique solution $\Expectile_{\bfalpha}(\bfX) = \argmin_{\bfc \in \R^{\dim}} \phi(\bfc)$ for every fixed $\bfalpha \in B$.
\end{Theorem}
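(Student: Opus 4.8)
The plan is to establish existence and uniqueness by invoking the machinery already assembled in the preceding results, combined with a coercivity argument for the objective function $\phi$. Uniqueness is essentially free: Theorem~\ref{theorem_risk_is_convex_and_cont} tells us that $\phi$ is strictly convex on $\R^{\dim}$, and a strictly convex function admits at most one minimizer. So the entire burden of the proof is \emph{existence}, i.e.\ showing that the infimum of $\phi$ is attained.

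To get existence I would follow the template of Theorem~6.8 in \cite{Lehmann1983} referenced in the text, and reduce existence to coercivity of $\phi$ together with Theorem~\ref{theorem_coercive_inf}. Concretely, the key step is to show that $\phi$ is coercive, i.e.\ $\phi(\bfc_n) \to \infty$ whenever $\xNorm{\bfc_n}{2} \to \infty$. This is exactly where Lemma~\ref{lemma_convergence_to_infinity} enters: along any such sequence, $\Lambda_{\bfalpha}(\bfX - \bfc_n) \to \infty$ in probability. I would then upgrade ``$\to\infty$ in probability'' to ``$\E[\Lambda_{\bfalpha}(\bfX-\bfc_n)] \to \infty$'' by a Fatou-type / truncation argument: for any fixed $K > 0$, $\phi(\bfc_n) = \E[\Lambda_{\bfalpha}(\bfX-\bfc_n)] \geq K\,\Prob[\Lambda_{\bfalpha}(\bfX-\bfc_n) > K]$, and the right-hand side tends to $K$ as $n \to \infty$ by Lemma~\ref{lemma_convergence_to_infinity}; hence $\liminf_n \phi(\bfc_n) \geq K$, and since $K$ was arbitrary, $\phi(\bfc_n) \to \infty$. (Alternatively one can avoid probabilistic language entirely and argue directly from the pointwise lower bound $\Lambda_{\bfalpha}(\bft) \geq 0.5(1-s)\xNorm{\bft}{2}^2$ established in Theorem~\ref{theorem_coercive_lambda}: $\phi(\bfc) \geq 0.5(1-s)\E[\xNorm{\bfX-\bfc}{2}^2] = 0.5(1-s)(\E[\xNorm{\bfX}{2}^2] - 2\scalarProduct{\E[\bfX]}{\bfc} + \xNorm{\bfc}{2}^2)$, which is a quadratic in $\bfc$ with positive leading coefficient and therefore coercive; this route is arguably cleaner since it does not need Lemma~\ref{lemma_convergence_to_infinity}, but the paper has clearly set up the probabilistic path, so I would present that one and perhaps remark on the shortcut.)

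With coercivity of $\phi$ in hand, I apply Theorem~\ref{theorem_coercive_inf} — using that $\phi$ is convex (hence, being finite-valued on all of $\R^{\dim}$, continuous and lower-semicontinuous) by Theorem~\ref{theorem_risk_is_convex_and_cont} — to conclude that there exists $\Expectile_{\bfalpha}(\bfX) \in \R^{\dim}$ attaining $\inf_{\bfc}\phi(\bfc)$. Combining this with the uniqueness from strict convexity finishes the proof, so the $\argmin$ in \eqref{eq_multivariate_expectile} is well-defined and single-valued.

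The main obstacle, and really the only non-bookkeeping step, is the passage from convergence in probability to infinity (Lemma~\ref{lemma_convergence_to_infinity}) to divergence of the \emph{expectations} $\phi(\bfc_n)$ — i.e.\ justifying the interchange of limit and expectation. One has to be a little careful because $\Lambda_{\bfalpha}(\bfX-\bfc_n)$ is not monotone in $n$, so a naive monotone/dominated convergence invocation does not apply; the truncation bound $\phi(\bfc_n) \geq K\,\Prob[\Lambda_{\bfalpha}(\bfX-\bfc_n) > K]$ together with Lemma~\ref{lemma_convergence_to_infinity} is the clean way around this, and condition (C) is what guarantees $\phi(\bfc_n)$ is finite (via Theorem~\ref{theorem_second_order_condition}) so that the statement ``$\phi(\bfc_n)\to\infty$'' is meaningful rather than vacuous. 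Everything else is an immediate citation of the results proved above.
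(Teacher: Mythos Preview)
Your proposal is correct and follows essentially the same route as the paper: establish coercivity of $\phi$ via the truncation bound $\phi(\bfc_n) \geq K\,\Prob[\Lambda_{\bfalpha}(\bfX-\bfc_n) > K]$ together with Lemma~\ref{lemma_convergence_to_infinity}, then invoke Theorem~\ref{theorem_coercive_inf} and strict convexity from Theorem~\ref{theorem_risk_is_convex_and_cont}. Your parenthetical shortcut via the quadratic lower bound $\Lambda_{\bfalpha}(\bft)\geq 0.5(1-s)\xNorm{\bft}{2}^2$ is a valid and indeed cleaner alternative that the paper does not take.
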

\begin{proof}
First, we show that $\phi$ is coercive and fix a sequence $(\bfc_n)_{n=1}^{\infty}$ such that $\xNorm{\bfc_n}{2} \to \infty$.
To show that $\phi(\bfc_n)$ diverges, we fix an arbitrary $K > 0$ and define
\begin{align*}
C_n(K) = \{\omega \in \Omega : \Lambda_{\bfalpha}(\bfX - \bfc_n) > K \}.
\end{align*}
  Given that $\Lambda_{\bfalpha}$ is positive, it follows that
\begin{align*}
  \phi(\bfc_n) &= \E[\Lambda_{\bfalpha}(\bfX-\bfc_n)]
  = \int_{C_n(K)} \Lambda_{\bfalpha}(\bfX-\bfc_n) d\Prob + \int_{\Omega \setminus C_n(K)} \Lambda_{\bfalpha}(\bfX-\bfc_n) d\Prob\\
  &\geq \int_{C_n(K)} \Lambda_{\bfalpha}(\bfX-\bfc_n) d\Prob > K \Prob[C_n(K)].
\end{align*}
Lemma~\ref{lemma_convergence_to_infinity} yields $ K \Prob[C_n(K)] \to K$ as $n \to \infty$, which shows that $\phi$ diverges, i.e., that $\phi$ is coercive.
Note that from Theorem~\ref{theorem_risk_is_convex_and_cont}, $\phi$ is also continuous and strictly convex. Then apply Theorem~\ref{theorem_coercive_inf}.
\end{proof}
Having established the basic properties of geometric expectiles, we now discuss their behaviour under data transformations.
As in \cite{Chaudhuri1996} for $\VaR_{\bfalpha}(\bfX)$, it is straightforward to show how geometric expectiles behave for translation, rotation and rescaling of the underlying random vector $\bfX$.
Adding a deterministic amount to an uncertain position simply shifts the resulting risk measure, in line with translation invariance of coherent risk measures.
\begin{Proposition}[Translation invariance]\label{proposition_translation}
If (C) holds for a $\dim$-dimensional random vector $\bfX$ then $\Expectile_{\bfalpha}(\bfX + \bfa) = \Expectile_{\bfalpha}(\bfX) + \bfa$ for all $\bfa \in \R^{\dim}$.
\end{Proposition}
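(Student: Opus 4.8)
The plan is to exploit the definition of $\Expectile_{\bfalpha}(\bfX+\bfa)$ as an argmin and perform a change of variables in the minimization problem. First I would write down the objective function for the shifted vector: for $\bfc \in \R^{\dim}$ we have
\begin{align*}
\E[\Lambda_{\bfalpha}((\bfX+\bfa)-\bfc)] = \E[\Lambda_{\bfalpha}(\bfX-(\bfc-\bfa))].
\end{align*}
Note that condition (C) holds for $\bfX+\bfa$ whenever it holds for $\bfX$, since $\E[(X_j+a_j)^2] \le 2\E[X_j^2] + 2a_j^2 < \infty$, so by Theorem~\ref{theorem_existance_and_uniqueness_expectiles} both $\Expectile_{\bfalpha}(\bfX)$ and $\Expectile_{\bfalpha}(\bfX+\bfa)$ exist and are unique.

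The key step is the substitution $\bfc' = \bfc - \bfa$, which is a bijection of $\R^{\dim}$ onto itself. Since
\begin{align*}
\Expectile_{\bfalpha}(\bfX+\bfa) = \argmin_{\bfc\in\R^{\dim}} \E[\Lambda_{\bfalpha}(\bfX-(\bfc-\bfa))] = \argmin_{\bfc'\in\R^{\dim}} \E[\Lambda_{\bfalpha}(\bfX-\bfc')] + \bfa,
\end{align*}
and the latter argmin is precisely $\Expectile_{\bfalpha}(\bfX)$, we conclude $\Expectile_{\bfalpha}(\bfX+\bfa) = \Expectile_{\bfalpha}(\bfX) + \bfa$. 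The point worth stating carefully is that the minimizer of $\bfc \mapsto g(\bfc-\bfa)$ is exactly the minimizer of $g$ shifted by $\bfa$: if $\bfc^\ast$ minimizes $g$, then for any $\bfc$ we have $g((\bfc^\ast+\bfa)-\bfa) = g(\bfc^\ast) \le g(\bfc-\bfa)$, so $\bfc^\ast+\bfa$ minimizes $\bfc\mapsto g(\bfc-\bfa)$, and uniqueness (from Theorem~\ref{theorem_existance_and_uniqueness_expectiles}) makes this the only minimizer.

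There is essentially no obstacle here — this is a routine change-of-variables argument. The only thing to be careful about is invoking uniqueness at the right moment so that the identification of argmins is legitimate rather than merely an inclusion of solution sets, and checking that (C) is inherited by $\bfX+\bfa$ so that Theorem~\ref{theorem_existance_and_uniqueness_expectiles} applies to both sides. Everything else follows from the elementary fact that translating the argument of a function translates its set of minimizers.
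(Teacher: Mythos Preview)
Your proof is correct and follows essentially the same approach as the paper: both use the change of variables $\bfc' = \bfc - \bfa$ to reduce the minimization for $\bfX+\bfa$ to the one for $\bfX$. Your version is more careful in explicitly verifying that (C) is inherited by $\bfX+\bfa$ and in invoking uniqueness from Theorem~\ref{theorem_existance_and_uniqueness_expectiles}, whereas the paper's proof leaves these points implicit.
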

\begin{proof}
By definition, we have $\Expectile_{\bfalpha}(\bfX)= \argmin_{\bfc \in \R^{\dim}} \E[\Lambda_{\bfalpha}(\bfX - \bfc)]$.
Therefore, $\E[\Lambda_{\bfalpha}(\bfX - (\bfc - \bfa))]$ will be minimized by $\Expectile_{\bfalpha}(\bfX) + \bfa$.
\end{proof}
Reasonable behaviour under scaling transformations ensures that a change in the underlying measurement units (for example, going from cents to dollars) is appropriately reflected in the behaviour of the risk measure.
For geometric expectiles this is the case as shown next.
This is in resemblance with positive homogeneity of coherent risk measures.
\begin{Proposition}[Positive homogeneity]
If (C) holds for a $\dim$-dimensional random vector $\bfX$ then $\Expectile_{\bfalpha}(\sigma \bfX) = \sigma \Expectile_{\bfalpha}(\bfX)$ for every positive scalar $\sigma > 0$.
\end{Proposition}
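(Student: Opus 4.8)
The plan is to mimic the argument used for translation invariance in Proposition~\ref{proposition_translation}, exploiting the specific scaling structure of the loss function $\Lambda_{\bfalpha}$. The key observation is that $\Lambda_{\bfu}$ is positively homogeneous of degree $2$: for any $\sigma > 0$ and $\bft \in \R^{\dim}$ we have $\Lambda_{\bfu}(\sigma \bft) = \frac{1}{2}\xNorm{\sigma\bft}{2}(\xNorm{\sigma\bft}{2} + \scalarProduct{\bfu}{\sigma\bft}) = \sigma^2 \Lambda_{\bfu}(\bft)$, since the norm is positively homogeneous of degree $1$ and the inner product is linear in its second argument. This is immediate from \eqref{eq_multivariate_expectile_loss} and requires no real work.

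First I would note that condition (C) for $\bfX$ implies condition (C) for $\sigma\bfX$, since $\E[(\sigma X_j)^2] = \sigma^2 \E[X_j^2] < \infty$, so by Theorem~\ref{theorem_existance_and_uniqueness_expectiles} both $\Expectile_{\bfalpha}(\bfX)$ and $\Expectile_{\bfalpha}(\sigma\bfX)$ exist and are unique. Then I would write out the objective function for $\sigma\bfX$ and reparametrize: for $\bfc \in \R^{\dim}$,
\begin{align*}
\E[\Lambda_{\bfalpha}(\sigma\bfX - \bfc)] = \E\left[\Lambda_{\bfalpha}\left(\sigma\left(\bfX - \tfrac{\bfc}{\sigma}\right)\right)\right] = \sigma^2 \, \E\left[\Lambda_{\bfalpha}\left(\bfX - \tfrac{\bfc}{\sigma}\right)\right].
\end{align*}
Since $\sigma^2 > 0$ is a constant not depending on $\bfc$, minimizing the left-hand side over $\bfc \in \R^{\dim}$ is equivalent to minimizing $\E[\Lambda_{\bfalpha}(\bfX - \bfc/\sigma)]$ over $\bfc$. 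As $\bfc$ ranges over $\R^{\dim}$, so does $\bfc/\sigma$, and the latter is minimized precisely when $\bfc/\sigma = \Expectile_{\bfalpha}(\bfX)$, i.e.\ when $\bfc = \sigma\Expectile_{\bfalpha}(\bfX)$. By uniqueness this is the unique minimizer, hence $\Expectile_{\bfalpha}(\sigma\bfX) = \sigma\Expectile_{\bfalpha}(\bfX)$.

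There is no real obstacle here — the proof is a one-line change of variables once the degree-$2$ homogeneity of $\Lambda_{\bfalpha}$ is recorded. The only mild subtlety worth a sentence is the direction-dependence: because $\sigma > 0$, the index $\bfalpha$ is unchanged (scaling by a \emph{negative} constant would reflect the problem and alter the effective index, which is why the statement is restricted to $\sigma > 0$), and because $\sigma$ is a scalar rather than a diagonal matrix, the factor pulls cleanly out of both the norm term and the inner-product term. I would make sure to invoke Theorem~\ref{theorem_existance_and_uniqueness_expectiles} explicitly so that the equality of minimizers is justified by genuine uniqueness rather than just equality of argmin sets.
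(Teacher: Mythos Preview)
Your proof is correct and follows essentially the same route as the paper: both exploit the degree-$2$ positive homogeneity of $\Lambda_{\bfalpha}$ to write $\Lambda_{\bfalpha}(\sigma\bfX-\bfc)=\sigma^2\Lambda_{\bfalpha}(\bfX-\sigma^{-1}\bfc)$ and then observe that the constant factor $\sigma^2$ does not affect the location of the minimizer. Your version is slightly more careful in explicitly verifying that (C) transfers to $\sigma\bfX$ and in invoking Theorem~\ref{theorem_existance_and_uniqueness_expectiles} for uniqueness, but the argument is the same.
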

\begin{proof}
\begin{align*}
  \Lambda_{\bfalpha}(\sigma \bfX - \bfc) &= \tfrac{1}{2}\xNorm{\sigma \bfX - \bfc}{2}\left(\xNorm{\sigma \bfX - \bfc}{2} + \scalarProduct{\bfalpha}{\sigma \bfX-\bfc}\right)\\
                 &= \tfrac{1}{2}\sigma \xNorm{\bfX - \sigma^{-1} \bfc}{2}\left(\sigma \xNorm{\bfX - \sigma^{-1} \bfc}{2} + \sigma \scalarProduct{\bfalpha}{\bfX- \sigma^{-1} \bfc}\right)\\
                 &= \sigma^2 \Lambda_{\bfalpha}(\bfX - \sigma^{-1} \bfc).
\end{align*}
  Given that $\Expectile_{\bfalpha}(\bfX)$ minimizes $\sigma^2 \E[\Lambda_{\bfalpha}(\bfX - \bfc)]$, as the positive factor $\sigma^2$ only changes the value of the objective function but not the location of the optimum, we have that $\sigma \Expectile_{\bfalpha}(\bfX)$ minimizes $\E[\Lambda_{\bfalpha}(\sigma \bfX - \bfc)]$, i.e., $\Expectile_{\bfalpha}(\sigma \bfX) = \sigma \Expectile_{\bfalpha}(\bfX)$.
\end{proof}
It is reasonable to expect that a permutation of the components of $\bfX$ should likewise result in a permutation of the entries of the resulting risk measure.
Geometric expectiles are not only well behaved under permutations, but under general orthogonal rotations.
\begin{Proposition}[Rotation with orthogonal matrix]\label{proposition_rotation}
If (C) holds for a $\dim$-dimensional random vector $\bfX$ then $\Expectile_{A\bfalpha}(A\bfX) = A\Expectile_{\bfalpha}(\bfX)$ for every orthogonal matrix $A \in \R^{\dim \times \dim}$.
\end{Proposition}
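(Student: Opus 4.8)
The plan is to exploit the fact that an orthogonal matrix is an isometry for both the Euclidean norm and the inner product, so that the loss family $\Lambda_{\bfu}$ is invariant under the simultaneous rotation of index and argument. First I would record the elementary identity
\[
\Lambda_{A\bfu}(A\bft) = \tfrac12\xNorm{A\bft}{2}\bigl(\xNorm{A\bft}{2} + \scalarProduct{A\bfu}{A\bft}\bigr) = \tfrac12\xNorm{\bft}{2}\bigl(\xNorm{\bft}{2} + \scalarProduct{\bfu}{\bft}\bigr) = \Lambda_{\bfu}(\bft),
\]
valid for all $\bft,\bfu\in\R^{\dim}$, using $\xNorm{A\bft}{2}=\xNorm{\bft}{2}$ and $\scalarProduct{A\bfu}{A\bft}=\bfu^{\tr}A^{\tr}A\bft=\scalarProduct{\bfu}{\bft}$.

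Next I would check that both geometric expectiles in the statement are well defined. Since $A$ is orthogonal, $\xNorm{A\bfalpha}{2}=\xNorm{\bfalpha}{2}<1$, hence $A\bfalpha\in B$; and since each component $(A\bfX)_j=\sum_{k}A_{jk}X_k$ is a finite linear combination of the $X_k$, condition (C) for $\bfX$ implies $\E[(A\bfX)_j^2]<\infty$, i.e.\ (C) holds for $A\bfX$. Thus Theorem~\ref{theorem_existance_and_uniqueness_expectiles} applies to both pairs $(\bfX,\bfalpha)$ and $(A\bfX,A\bfalpha)$, yielding unique minimizers in each case.

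The core of the argument is then a change of variables in the minimization. Writing $\phi_{A\bfX,A\bfalpha}(\bfc)=\E[\Lambda_{A\bfalpha}(A\bfX-\bfc)]$ and substituting $\bfc=A\bfc'$ (a bijection of $\R^{\dim}$ onto itself), I would use $A\bfX-A\bfc'=A(\bfX-\bfc')$ together with the invariance identity above to obtain
\[
\phi_{A\bfX,A\bfalpha}(A\bfc') = \E[\Lambda_{A\bfalpha}(A(\bfX-\bfc'))] = \E[\Lambda_{\bfalpha}(\bfX-\bfc')] = \phi(\bfc'),
\]
so that the two objective functions coincide up to the orthogonal reparametrization $\bfc=A\bfc'$. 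Since the unique minimizer of $\phi$ is $\Expectile_{\bfalpha}(\bfX)$, the unique minimizer of $\phi_{A\bfX,A\bfalpha}$ is $A\Expectile_{\bfalpha}(\bfX)$, which is precisely the claim.

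I do not anticipate a genuine obstacle here; the only points needing a moment's care are verifying that $A\bfalpha$ remains in the \emph{open} unit ball and that (C) is preserved under $A$, so that the existence-and-uniqueness result is actually available on both sides, and keeping the direction of the substitution $\bfc=A\bfc'$ (equivalently $\bfc'=A^{\tr}\bfc$) straight throughout.
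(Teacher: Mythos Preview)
Your proposal is correct and follows essentially the same approach as the paper: both exploit the orthogonal invariance $\Lambda_{A\bfalpha}(A\bft)=\Lambda_{\bfalpha}(\bft)$ and then perform the change of variables $\bfc=A\bfc'$ (equivalently the paper writes $\Lambda_{A\bfalpha}(A\bfX-\bfc)=\Lambda_{\bfalpha}(\bfX-A^{\tr}\bfc)$) to transfer the minimizer. Your added checks that $A\bfalpha\in B$ and that (C) is preserved under $A$ are sound and in fact make the argument slightly more complete than the paper's version.
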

\begin{proof}
	By orthogonality of $A$, for every $\bfx,\bfy \in \R^{\dim}$, we have that $\scalarProduct{A\bfx}{A\bfy} = \scalarProduct{\bfx}{\bfy}$ and $\xNorm{A\bfx}{2} = \xNorm{\bfx}{2}$.
Denoting by $A^{\tr}$ the transpose of $A$ we therefore get
\begin{align*}
  \Lambda_{A\bfalpha}(A\bfX - \bfc) &= \tfrac{1}{2}\xNorm{A\bfX - \bfc}{2}\left(\xNorm{A\bfX - \bfc}{2} + \scalarProduct{A\bfalpha}{A\bfX-\bfc}\right)\\
                           &= \tfrac{1}{2}\xNorm{\bfX - A^{\tr} \bfc}{2}\left(\xNorm{\bfX - A^{\tr} \bfc}{2} + \scalarProduct{\bfalpha}{\bfX-A^{\tr}\bfc}\right)\\
                           &= \Lambda_{\bfalpha}(\bfX - A^{\tr} \bfc).
\end{align*}
  Given that $\Expectile_{\bfalpha}(\bfX)$ minimizes $\E[\Lambda_{\bfalpha}(\bfX - \bfc)]$, the minimizer of $\E[\Lambda_{\bfalpha}(\bfX - A^{\tr} \bfc)]$ is given by $A\Expectile_{\bfalpha}(\bfX)$.
\end{proof}
Taken together, Propositions~\ref{proposition_translation}--\ref{proposition_rotation} guarantee that $\Expectile_{\bfalpha}(\bfX)$ is well behaved for the most relevant data transformations.
In this context it is natural to ask if there exists a suitable ordering $\prec$ for random vectors such that $\bfX \prec \bfY$ implies $\Expectile_{\bfalpha}(\bfX) \sqsubset \Expectile_{\bfalpha}(\bfY)$ for a possibly different ordering $\sqsubset$.
While this point is of great interest it proofed too difficult to establish a suitable result and we thus leave it as an open question for further research.

In the following Corollary~\ref{proposition_negative_sign} and Proposition~\ref{proposition_symmetry} we generalize well known symmetry properties of univariate expectiles to the multivariate setting.
We start by establishing a connection between the geometric expectiles of $\bfX$ and $-\bfX$ as a corolloary of Proposition~\ref{proposition_rotation}.
\begin{Corollary}[Vector sign symmetry]\label{proposition_negative_sign}
If (C) holds for a $\dim$-dimensional random vector $\bfX$ then $\Expectile_{\bfalpha}(-\bfX) = -\Expectile_{-\bfalpha}(\bfX)$ for all $\bfalpha \in B$.
\end{Corollary}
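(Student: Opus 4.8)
The plan is to read off the statement as a direct specialization of Proposition~\ref{proposition_rotation}. The key observation is that the map $\bfx \mapsto -\bfx$ on $\R^{\dim}$ is exactly multiplication by the orthogonal matrix $A = -I_{\dim}$, since $(-I_{\dim})^{\tr}(-I_{\dim}) = I_{\dim}$. So Proposition~\ref{proposition_rotation} is applicable with this choice of $A$.

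First I would invoke Proposition~\ref{proposition_rotation} with $A = -I_{\dim}$, which gives $\Expectile_{-\bfalpha}(-\bfX) = -\Expectile_{\bfalpha}(\bfX)$ for every $\bfalpha \in B$; here I should note that (C) is inherited by $-\bfX$ since $\E[(-X_j)^2] = \E[X_j^2] < \infty$, so the right-hand side is well defined. Second, I would substitute $\bfalpha \mapsto -\bfalpha$ in this identity; this is legitimate because $\xNorm{-\bfalpha}{2} = \xNorm{\bfalpha}{2} < 1$, so $-\bfalpha \in B$ whenever $\bfalpha \in B$. This substitution yields $\Expectile_{\bfalpha}(-\bfX) = -\Expectile_{-\bfalpha}(\bfX)$, which is precisely the claim.

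There is essentially no obstacle here: the only points requiring a word of justification are that $-I_{\dim}$ is orthogonal, that condition (C) transfers to $-\bfX$, and that $B$ is symmetric about the origin so the index substitution is valid. No new computation with the loss function $\Lambda_{\bfu}$ is needed, since all of that work has already been done in the proof of Proposition~\ref{proposition_rotation}.
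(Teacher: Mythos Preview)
Your proposal is correct and matches the paper's own proof essentially verbatim: the paper also applies Proposition~\ref{proposition_rotation} with $A=-I$ and then re-indexes by $-\bfalpha$. The extra remarks you add (that $-I$ is orthogonal, that (C) transfers to $-\bfX$, and that $B$ is symmetric) are fine but not strictly needed beyond what the paper states.
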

\begin{proof}
Apply Proposition~\ref{proposition_rotation} with $A = -I$, where $I$ is the appropriate identity matrix, and $-\bfalpha$ and re-arrange terms.
\end{proof}
For radially symmetric distributions, see for example \cite{McNeilFreyEmbrechts2015} Chapter 7, the resulting expectiles also obey a symmetry relation when changing the sign of the underlying index $\bfalpha$.
\begin{Proposition}[Index sign symmetry]\label{proposition_symmetry}
If (C) holds for a $\dim$-dimensional radially symmetric random vector $\bfX$ with mean vector $\bfmu$, then $\bfmu = \tfrac{1}{2}\left(\Expectile_{\bfalpha}(\bfX) + \Expectile_{-\bfalpha}(\bfX)\right)$ for all $\bfalpha \in B$.
\end{Proposition}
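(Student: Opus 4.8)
The plan is to reduce the statement to results already in hand, namely translation invariance (Proposition~\ref{proposition_translation}), vector sign symmetry (Corollary~\ref{proposition_negative_sign}), and the elementary fact that $\Expectile_{\bfalpha}$ is a law-invariant functional.

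First I would record what radial symmetry provides. By definition a radially symmetric random vector satisfies $\bfX - \bfmu \eqdist O(\bfX - \bfmu)$ for every orthogonal matrix $O \in \R^{\dim\times\dim}$; taking $O = -I$ gives $\bfX - \bfmu \eqdist -(\bfX - \bfmu)$, that is, $\bfX \eqdist 2\bfmu - \bfX$. Note that (C) already ensures finite first moments, so $\bfmu$ is well defined, and Theorem~\ref{theorem_existance_and_uniqueness_expectiles} guarantees that all the expectiles appearing below exist and are unique. Since the objective function $\bfc \mapsto \E[\Lambda_{\bfalpha}(\bfX - \bfc)]$ depends on $\bfX$ only through its distribution, equality in distribution transfers to equality of the corresponding minimizers, so $\Expectile_{\bfalpha}(\bfX) = \Expectile_{\bfalpha}(2\bfmu - \bfX)$ for every $\bfalpha \in B$.

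Next I would rewrite the right-hand side using the transformation rules. Writing $2\bfmu - \bfX = (-\bfX) + 2\bfmu$ and applying translation invariance with $\bfa = 2\bfmu$ gives $\Expectile_{\bfalpha}(2\bfmu - \bfX) = \Expectile_{\bfalpha}(-\bfX) + 2\bfmu$, while Corollary~\ref{proposition_negative_sign} turns $\Expectile_{\bfalpha}(-\bfX)$ into $-\Expectile_{-\bfalpha}(\bfX)$. Chaining these identities yields $\Expectile_{\bfalpha}(\bfX) = 2\bfmu - \Expectile_{-\bfalpha}(\bfX)$, i.e.\ $\Expectile_{\bfalpha}(\bfX) + \Expectile_{-\bfalpha}(\bfX) = 2\bfmu$, which is the assertion after dividing by two.

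There is essentially no hard step: the proof is a short assembly of the earlier propositions. The only point that warrants a sentence of care is the first one — justifying that radial symmetry yields $\bfX \eqdist 2\bfmu - \bfX$ and that this distributional identity is inherited by the expectile functional — after which the computation is purely mechanical.
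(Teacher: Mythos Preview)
Your proof is correct and follows essentially the same route as the paper: both use the radial-symmetry identity $\bfX-\bfmu \eqdist -(\bfX-\bfmu)$ together with translation invariance and the sign-flip relation between $\Lambda_{\bfalpha}$ and $\Lambda_{-\bfalpha}$; you simply invoke Corollary~\ref{proposition_negative_sign} for the last ingredient where the paper re-derives it by hand. One minor terminological slip: the condition $\bfX-\bfmu \eqdist O(\bfX-\bfmu)$ for all orthogonal $O$ is \emph{spherical} symmetry, whereas radial symmetry (as in \cite{McNeilFreyEmbrechts2015}) only requires the $O=-I$ case --- but since that is all you use, the argument is unaffected.
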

\begin{proof}
For $\bfalpha \in B$ we have
\begin{align*}
\E[\Lambda_{-\bfalpha}(\bfX - \bfc)] &= \tfrac{1}{2} \E[\xNorm{\bfX - \bfc}{2}^2] + \tfrac{1}{2}\E[\xNorm{\bfX - \bfc}{2}\scalarProduct{-\bfalpha}{\bfX-\bfc}]\\
	&= \tfrac{1}{2}\E[\xNorm{-(\bfc - \bfX)}{2}^2] + \tfrac{1}{2}\E[\xNorm{-(\bfc-\bfX)}{2}\scalarProduct{\bfalpha}{\bfc-\bfX}]\\
	&= \E[\Lambda_{\bfalpha}(\bfc-\bfX)]\\
	&= \E[\Lambda_{\bfalpha}(-(\bfX - \bfc))]
\end{align*}
	for all $\bfc \in \R^{\dim}$.
	By translation invariance (Proposition \ref{proposition_translation}) and radial symmetry $\bfX - \bfmu \eqdist -(\bfX-\bfmu)$ we therefore have
	\begin{align*}
		\E[\Lambda_{-\bfalpha}(\bfX - (2\bfmu - \Expectile_{\bfalpha}(\bfX)))] &=
		\E[\Lambda_{\bfalpha}(-(\bfX - \bfmu) - \Expectile_{\bfalpha}(\bfX)+\bfmu)]\\
		&=\E[\Lambda_{\bfalpha}((\bfX - \bfmu) - (\Expectile_{\bfalpha}(\bfX)-\bfmu))]\\
&=\E[\Lambda_{\bfalpha}((\bfX - \bfmu) - \Expectile_{\bfalpha}(\bfX-\bfmu))],
	\end{align*}
	where the right hand side is minimized implying that $\Expectile_{-\bfalpha}(\bfX) = 2\bfmu -\Expectile_{\bfalpha}(\bfX)$.
\end{proof}

In the univariate setting, expectiles are an attractive choice among possible risk measures due to their elicitability.
As discussed in \cite{Gneiting2011}, elicitability is a property of statistical functionals when considering point forecasts.
Denoting by $\mathcal{F}$ the class of probability distributions on $\R^{\dim}$ with finite second marginal moments, we denote by $T$ a statistical functional, i.e.,
\begin{align*}
T \colon \mathcal{F} \to \R^{\dim}, \quad F \mapsto T(F).
\end{align*}
Statistical functionals can, in general, be set valued maps, as for example in the case of quantiles.
However, we will here concentrate on the case where they take values in the Euclidean space, and we thus adjust the definition of elicitability given in \cite{Gneiting2011} to this case.
\begin{Definition}[Elicitability]
A statistical functional $T$ is called \emph{elicitable} relative to the class $\mathcal{F}$, if
\begin{itemize}
  \item[1)] there exists a scoring function
$
S \colon \R^{\dim} \times \R^{\dim} \to [0,\infty), \quad (\bfx,\bfy) \mapsto S(\bfx,\bfy)
$
such that there is a representation
\begin{align*}
  T(F) = \argmin_{\bfc \in \R^{\dim}}\E[S(\bfc,\bfX)],
\end{align*}
for every $F \in \mathcal{F}$ where $\bfX \sim F$, and
\item[2)]$\E[S(T(F),\bfX)]=\E[S(\bfc,\bfX)]$ implies $\bfc = T(F)$.
\end{itemize}
\end{Definition}
A functional $T$ is therefore elicitable, if it can be represented as the unique minimizer of a Bayes rule for a suitable scoring function.
For geometric expectiles, we can define, for $\bfalpha \in B$, an associated functional $T_{\bfalpha}$ as
\begin{align*}
T_{\bfalpha} \colon \mathcal{F} \to \R^{\dim}, \quad F \mapsto \Expectile_{\bfalpha}(\bfX) \mbox{ for } \bfX \sim F,
\end{align*}
where Theorem~\ref{theorem_existance_and_uniqueness_expectiles} guarantees that $T_{\bfalpha}(F)$ is not set-valued.
It is then clear from the defintion of $\Expectile_{\bfalpha}$ that the scoring function
\begin{align*}
  S_{\bfalpha} \colon \R^{\dim} \times \R^{\dim}, \quad (\bfx,\bfy) \mapsto \Lambda_{\bfalpha}(\bfx-\bfy)
\end{align*}
makes $T_{\bfalpha}$ elicitable relative to the class $\mathcal{F}$.
Again here, Theorem~\ref{theorem_existance_and_uniqueness_expectiles} plays a crucial role under the assumption of a joint distribution with margins with finite second moments.

In the univariate case elicitability allows to assess and compare the forecasting performance of different competing models, see \cite{NoldeZiegel2017} for a discussion.
In a practical setting this allows one to select a best model based on expectile point-forecasting performance and to implement meaningful expectile-based backtesting procedures against real data.
Elicitability of geometric expectiles now possibly opens the door to implement model selection and backtesting procedures for the underlying joint distribution as opposed to the the marginal distributions only.
From a theoretical perspective, geometric expectiles also add to a further understanding of multivariate elicitability by provding a scoring function that is not a linear combination of univariate scoring functions; see \cite{FisslerZiegel2016} for an in-depth discussion.

The scoring function $S_{\bfalpha}$ tied to geometric expectiles is furthermore positively homogeneous of order two as shown in Proposition~\ref{proposition_homogenious} below.
\cite{Efron1991} highlights the necessity of positive homogeneity, or scale invariance, in an estimation context.
Scale invariance and estimation of scale is also central to the theory of robust statistics; see, for example, \cite{HuberRonchetti2009}.
\cite{Patton2011} furthermore argues for homogeneity in the context of forecast rankings, as the rankings obtained from a homogenious scoring function are invariant to a re-scaling of the underlying data.
See also \cite{Gneiting2011} and \cite{NoldeZiegel2017} for a discussion in the context of univariate expectlies.

By establishing the positive homogeneity of $S_{\bfalpha}$ we prepare likewise applications in the multivariate case.
\begin{Proposition}[Positive homogeneity of $S_{\bfalpha}$ of order $2$]\label{proposition_homogenious}
For $c > 0$ and $(\bfx,\bfy) \in \R^{\dim} \times \R^{\dim}$, $S_{\bfalpha}(c\bfx,c\bfy) = c^2 S_{\bfalpha}(\bfx,\bfy)$.
\end{Proposition}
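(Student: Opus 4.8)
The plan is to reduce everything to the scaling behaviour of $\Lambda_{\bfalpha}$, which was already exploited in the proof of positive homogeneity of $\Expectile_{\bfalpha}$. First I would recall that by definition $S_{\bfalpha}(\bfx,\bfy) = \Lambda_{\bfalpha}(\bfx - \bfy)$, so that $S_{\bfalpha}(c\bfx, c\bfy) = \Lambda_{\bfalpha}(c\bfx - c\bfy) = \Lambda_{\bfalpha}\big(c(\bfx - \bfy)\big)$ for every $c > 0$. It thus suffices to establish the single-argument identity $\Lambda_{\bfalpha}(c\bft) = c^2 \Lambda_{\bfalpha}(\bft)$ for all $\bft \in \R^{\dim}$ and then apply it with $\bft = \bfx - \bfy$.

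For the single-argument identity I would simply expand the definition in \eqref{eq_multivariate_expectile_loss}, using that the Euclidean norm is positively homogeneous of degree one, $\xNorm{c\bft}{2} = c\xNorm{\bft}{2}$ for $c > 0$, and that the inner product is linear in its second slot, $\scalarProduct{\bfalpha}{c\bft} = c\scalarProduct{\bfalpha}{\bft}$. This gives
\begin{align*}
\Lambda_{\bfalpha}(c\bft) = \tfrac{1}{2}\xNorm{c\bft}{2}\big(\xNorm{c\bft}{2} + \scalarProduct{\bfalpha}{c\bft}\big) = \tfrac{1}{2}\, c\xNorm{\bft}{2}\big(c\xNorm{\bft}{2} + c\scalarProduct{\bfalpha}{\bft}\big) = c^2 \Lambda_{\bfalpha}(\bft),
\end{align*}
and combining with the first paragraph yields $S_{\bfalpha}(c\bfx, c\bfy) = c^2 \Lambda_{\bfalpha}(\bfx - \bfy) = c^2 S_{\bfalpha}(\bfx,\bfy)$, as claimed. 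There is no real obstacle here: the only point to be careful about is that the degree-one homogeneity of the norm requires $c > 0$ (which is assumed), and that both the norm factor and the inner-product factor each contribute one power of $c$, producing the overall order two.
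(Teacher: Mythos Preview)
Your proof is correct and follows essentially the same approach as the paper: both arguments reduce $S_{\bfalpha}(c\bfx,c\bfy)$ to $\Lambda_{\bfalpha}(c(\bfx-\bfy))$ and then pull out the factor $c^2$ using the degree-one homogeneity of the Euclidean norm together with the linearity of the inner product. The only difference is cosmetic---you isolate the identity $\Lambda_{\bfalpha}(c\bft)=c^2\Lambda_{\bfalpha}(\bft)$ as a separate step, whereas the paper does the expansion in a single chain---but the substance is identical.
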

\begin{proof}
Using basic properties of norms and inner products we have that
\begin{align*}
  S_{\bfalpha}(c\bfx,c\bfy) &= \Lambda_{\bfalpha}(c(\bfx-\bfy)) = \frac{1}{2}\xNorm{c(\bfx-\bfy)}{2}(\xNorm{c(\bfx-\bfy)}{2} + \scalarProduct{\bfu}{c(\bfx-\bfy)})\\
  &= c^2\frac{1}{2}\xNorm{\bfx-\bfy}{2}(\xNorm{\bfx-\bfy}{2} + \scalarProduct{\bfu}{\bfx-\bfy}) = c^2 \Lambda_{\bfalpha}(\bfx-\bfy) = c^2 S_{\bfalpha}(\bfx,\bfy). \qedhere
\end{align*}
\end{proof}

Univariate expectiles are attractive risk measures due to their coherence of which sub-additivity is a cornerstone.
For univariate expectiles we have for any random variables $X$ and $Y$ sub-additivity $\Expectile_{\alpha}(X+Y) \leq \Expectile_{\alpha}(X) + \Expectile_{\alpha}(Y)$ when $\alpha \geq 0.5$, while for $\alpha \leq 0.5$ we have super-additivity $\Expectile_{\alpha}(X+Y) \geq \Expectile_{\alpha}(X) + \Expectile_{\alpha}(Y)$.
It is important to recognize that $\Expectile_{0.5}(X) = \E[X]$, i.e. there is one point which separates the sub- and super-additive cases.

While the univariate notions of sub- and superadditivity are based on the ordering in $\R$, the multivariate case has no canonical ordering for $\R^{\dim}$, $\dim \geq 2$.
To circumvent this issue we utilize set inclusions that continue to be valid in higher dimensions.
Reconsidering the univariate case we can see that for any interval $I \subseteq (0,1)$ that includes $0.5$ we have $\{x \in \R : x = \Expectile_{\alpha}(X+Y), \alpha \in I\} \subseteq \{x \in \R : x = \Expectile_{\alpha}(X)+\Expectile_{\alpha}(Y), \alpha \in I\}$.
To propose a multivariate generalization based on this observation we replace the interval $I$ with a closed ball in $B$.
\begin{Definition}[Multivariate subadditivity for geometric risk measures]
Denote by $\bfX$ and $\bfY$ two $\dim$-dimensional random vectors, and by $\rho_{\bfalpha}$ a geometric risk measure based on an index $\bfalpha \in B$.
For $0 < r < 1$ define the sets
\begin{align*}
A_r(\bfX;\rho) &= \{\bfx \in \R^{\dim} : \bfx = \rho_{\bfalpha}(\bfX), \xNorm{\bfalpha}{2} \leq r\},\\
A_r(\bfX,\bfY;\rho) &= \{\bfx \in \R^{\dim} : \bfx = \rho_{\bfalpha}(\bfX) + \rho_{\bfalpha}(\bfY), \xNorm{\bfalpha}{2} \leq r\}.
\end{align*}
A multivariate geometric risk measure $\rho_{\bfalpha}$ is \emph{multivariate sub-additive}, if
\begin{align*}
	A_r(\bfX+\bfY;\rho) \subseteq A_r(\bfX,\bfY;\rho)
\end{align*}
for all $0 < r < 1$.
\end{Definition}
We use the numerical techniques discussed in Section~\ref{section_illustration} for a two dimensional illustration.
To this end we introduce a random vector $\bfZ = (Z_1,\ldots,Z_4)$, where the first marginal distribution $Z_1$ follows a Gumbel distribution, $Z_2 \sim t_4$, $Z_3$ follows a standard logistic distribution and $Z_4 \sim \N(0,1)$.
To introduce dependence between the components of $\bfZ$ we join the margins by a four dimensional Clayton copula $\Copula_{\theta}$ with parameter $\theta = 5$.
The bivariate random vectors are then given as $\bfX = (Z_1,Z_2)$ and $\bfY = (Z_3,Z_4)$.

In Figure~\ref{fig_sub_super_additivity} (left) we show $A_{0.2}(\bfX+\bfY;\VaR)$ and $A_{0.2}(\bfX,\bfY;\VaR)$ where it is clearly visible that geometric $\VaR$ is not multivariate sub-additive which is in line with the univariate case.
This behaviour can be explained when focusing on the case $\bfalpha = \bfzero$, in which case geometric $\VaR$ is the minimizer of the euclidean distance
\begin{align*}
  \VaR_{\bfzero}(\bfX) = \argmin_{\bfc\in\R^{\dim}} \E[\xNorm{\bfX - \bfc}{2}].
\end{align*}
There is no reason that the resulting optimum is additive, i.e., $\VaR_{\bfzero}(\bfX+\bfY) = \VaR_{\bfzero}(\bfX) + \VaR_{\bfzero}(\bfY)$.
Given that the sets $A_{r}(\bfX+\bfY;\VaR)$ and $A_{r}(\bfX,\bfY;\VaR)$ reduce to $\VaR_{\bfzero}(\bfX+\bfY)$ and $\VaR_{\bfzero}(\bfX)+\VaR_{\bfzero}(\bfY)$ when $\bfalpha \to \bfzero$, the sets necessarily intersect for some $r$ whenever $\VaR_{\bfzero}(\bfX+\bfY) \neq \VaR_{\bfzero}(\bfX) + \VaR_{\bfzero}(\bfY)$.
This behaviour is shown on the left in Figure~\ref{fig_sub_super_additivity}.

In Figure~\ref{fig_sub_super_additivity} (right) we show $A_{0.2}(\bfX+\bfY;\Expectile)$ and $A_{0.2}(\bfX,\bfY;\Expectile)$.
In this case we observe $A_{0.2}(\bfX+\bfY;\Expectile) \subseteq A_{0.2}(\bfX,\bfY;\Expectile)$.
Contrary to geometric $\VaR$ we have $\Expectile_{\bfzero}(\bfX) = \E[\bfX]$ in the case of geometric expectiles and therefore the additivity $\Expectile_{\bfzero}(\bfX+\bfY) = \Expectile_{\bfzero}(\bfX) + \Expectile_{\bfzero}(\bfY)$.
Constructing a counter example to multivariate sub-additivity along the same lines as for $\VaR$ is therefore ruled out.
Although numerical checks for a number of different joint models and $r$-levels suggest that geometric expectiles are multivariate subadditive a formal proof is not available at this point.

\begin{figure}
  \centering
  \includegraphics[width=0.49\textwidth]{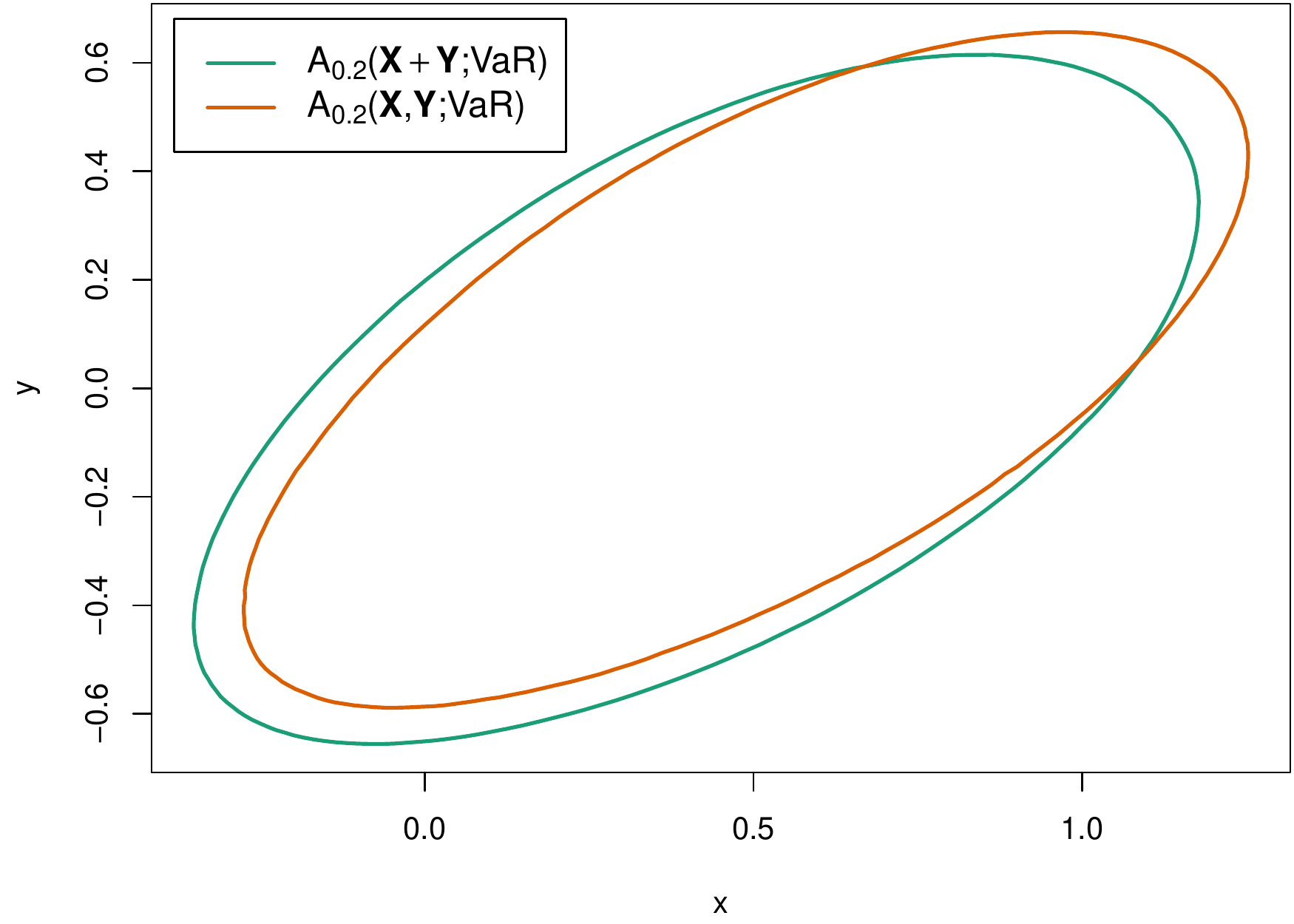}
  \includegraphics[width=0.49\textwidth]{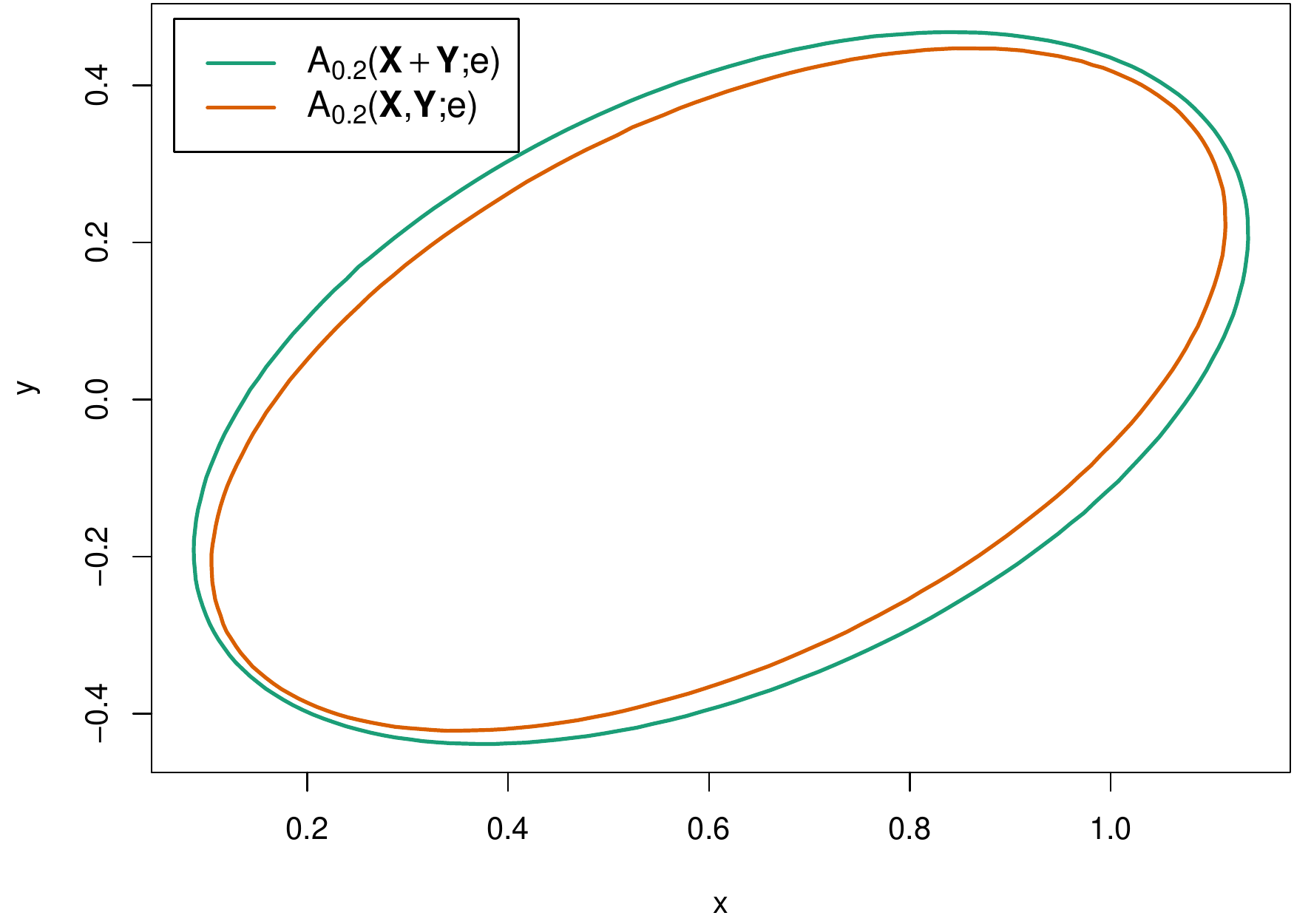}
\caption{Boundary of the sets $A_{0.2}(\bfX+\bfY;\VaR)$ (left, boundary in green), $A_{0.2}(\bfX,\bfY;\VaR)$ (left, boundary in orange), and $A_{0.2}(\bfX+\bfY;\Expectile)$ (right, boundary in green), $A_{0.2}(\bfX,\bfY;\Expectile)$ (right, boundary in orange). $\bfX = (Z_1,Z_2)$ and $\bfY = (Z_3,Z_4)$, where $Z_1$ follows a Gumbel distribution, $Z_2 \sim t_4$, $Z_3$ follows a standard logistic distribution and $Z_4 \sim \N(0,1)$. All margins are joined by a four dimensional Clayton copula $\Copula_{\theta}$ with parameter $\theta = 5$. Computations are based on $20000$ independent replications.}
    \label{fig_sub_super_additivity}
\end{figure}

\subsection{Asymptotics and Estimation}\label{section_estimation_asymptotics}
In Section~\ref{sec:Prop} we have established that geometric expectiles defined in \eqref{eq_multivariate_expectile} are a well-defined functional for random vectors with finite marginal second moments.
In terms of practical applications, this raises two questions.
First, the computation of closed-form solutions of $\Expectile_{\bfalpha}(\bfX)$ might not be possible for a given random vector $\bfX$ and numerical approximation needs to be invoked instead.
Second, in practical applications it is necessary to establish that a sample version of  $\Expectile_{\bfalpha}(\bfX)$ is a consistent estimator of $\Expectile_{\bfalpha}(\bfX)$.
While the implicit definition of $\Expectile_{\bfalpha}(\bfX)$ might seem challenging at first, our functional falls into the well-established framework of M-estimators; see \cite{HuberRonchetti2009} for an introduction.

To discuss consistency, we denote by $(\bfX_i)_{i=1}^{\infty}$ a sequence of independent and identically distributed (iid) random vectors with the same distribution as $\bfX$.
While a generalization to ergodic and (weakly) stationary random vectors is straight forward, we focus on the iid case for ease of presentation.
For a finite sample we replace the expectation in \eqref{eq_multivariate_expectile} by the sample average.
This provides a finite sample version, or Monte Carlo estimator, of $\phi$ defined in \eqref{eq_obj_func} by
\begin{align*}
\phi_n(\bfc) = \frac{1}{n} \sum_{i=1}^n \Lambda_{\bfalpha}(\bfX_i - \bfc),
\end{align*}
where we immediately get $\phi_n(\bfc) \as \phi(\bfc)$ (and thus $\phi_n(\bfc) \inprob \phi(\bfc)$) from the strong law of large numbers.
To also guarantee the convergence of the minimizers we invoke Proposition 7.4 of \cite{Hayashi2000}.
\begin{Corollary}[Consistency]\label{corolloary_consistency}
If (C) holds for a $\dim$-dimensional random vector $\bfX$ then
\begin{align*}
\argmin_{\bfc \in \R^d} \phi_n(\bfc) \inprob \Expectile_{\bfalpha}(\bfX).
\end{align*}
\end{Corollary}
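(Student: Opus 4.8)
The plan is to verify the hypotheses of the cited consistency result for M-estimators, namely Proposition~7.4 of \cite{Hayashi2000}. That result typically requires: (i) the parameter lies in an identifiably unique way at the minimizer of the limiting criterion function, (ii) the sample criterion functions converge uniformly (on compact sets, combined with a coercivity/tightness argument) to the population criterion, and (iii) some mild regularity such as continuity. Condition (C) is assumed, so by Theorem~\ref{theorem_second_order_condition} the population objective $\phi$ is finite everywhere, and by Theorem~\ref{theorem_risk_is_convex_and_cont} it is continuous and strictly convex on $\R^{\dim}$; by Theorem~\ref{theorem_existance_and_uniqueness_expectiles} it has the unique minimizer $\Expectile_{\bfalpha}(\bfX)$. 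Strict convexity gives exactly the identifiable uniqueness in (i): for any neighbourhood of $\Expectile_{\bfalpha}(\bfX)$, the infimum of $\phi$ over its complement is strictly larger than $\phi(\Expectile_{\bfalpha}(\bfX))$.

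First I would record the pointwise convergence $\phi_n(\bfc) \as \phi(\bfc)$ for each fixed $\bfc$, which follows from the strong law of large numbers applied to the iid summands $\Lambda_{\bfalpha}(\bfX_i - \bfc)$, each integrable by Theorem~\ref{theorem_second_order_condition}; this also gives convergence in probability. Then I would upgrade pointwise to locally uniform convergence: each $\phi_n$ is convex (a finite average of convex functions $\Lambda_{\bfalpha}(\cdot - \bfc)$, convex by Theorem~\ref{theorem_strictly_convex}), and it is a standard fact in convex analysis that pointwise convergence of a sequence of finite convex functions on $\R^{\dim}$ to a finite convex limit implies uniform convergence on every compact set. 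This is the natural route to feed into a consistency theorem for extremum estimators.

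Next I would handle the coercivity/tightness needed to confine the minimizers to a compact set with probability tending to one. Here I would reuse the bound from the proof of Theorem~\ref{theorem_coercive_lambda}, namely $\Lambda_{\bfalpha}(\bft) \ge 0.5(1-s)\xNorm{\bft}{2}^2$ with $s = \xNorm{\bfalpha}{2} < 1$, so that $\phi_n(\bfc) \ge 0.5(1-s)\,\tfrac{1}{n}\sum_{i=1}^n \xNorm{\bfX_i - \bfc}{2}^2$; together with the law of large numbers for $\tfrac{1}{n}\sum \xNorm{\bfX_i}{2}^2$ and $\tfrac{1}{n}\sum \bfX_i$, this shows that $\liminf_{\xNorm{\bfc}{2}\to\infty} \phi_n(\bfc)$ is eventually large, uniformly enough that the argmin of $\phi_n$ stays in a fixed large ball with probability approaching one. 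Combining this localization with the locally uniform convergence $\phi_n \to \phi$ and the identifiable uniqueness of $\Expectile_{\bfalpha}(\bfX)$, Proposition~7.4 of \cite{Hayashi2000} yields $\argmin_{\bfc}\phi_n(\bfc) \inprob \Expectile_{\bfalpha}(\bfX)$.

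I expect the main obstacle to be matching the precise hypotheses of the invoked proposition: consistency theorems for extremum estimators come in several flavours, and some state the compactness of the parameter space as a hypothesis rather than deriving it from coercivity. The delicate step is therefore the tightness argument in the previous paragraph — showing that the sample minimizers do not escape to infinity — since the parameter space $\R^{\dim}$ is not compact. Once that is in place the remaining ingredients (pointwise LLN, convexity-driven locally uniform convergence, strict-convexity-driven unique identification) are routine, and the conclusion follows directly.
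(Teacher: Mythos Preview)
Your proposal is correct and follows essentially the same route as the paper: both invoke Proposition~7.4 of \cite{Hayashi2000} after recording that $\phi$ is finite, uniquely minimized at $\Expectile_{\bfalpha}(\bfX)$, and that the criterion is convex. The paper's proof is terser because Hayashi's Proposition~7.4 is formulated directly for convex objectives (convexity of $\Lambda_{\bfalpha}$ plus pointwise convergence and unique identification suffice), so your separate coercivity/tightness paragraph, while correct, is not needed once you cite that specific result.
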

\begin{proof}
We apply Proposition 7.4 of \cite{Hayashi2000} which guarantees the consistency of M-estimators.
By Theorem~\ref{theorem_existance_and_uniqueness_expectiles} $\phi(\bfc)$ is uniquely minimized on $\R^{\dim}$ at $\Expectile_{\bfalpha}(\bfX)$, and $\phi(\bfc)$ exists and is finite for all $\bfc \in \R^{\dim}$.
Furthermore, $\Lambda_{\bfalpha}$ is convex.
While the existence of a minimizer in Proposition $7.4$ in \cite{Hayashi2000} is only asymptotic, it is clear that a minimizer exists for every $n \in  \NN$ in our case.
\end{proof}
Corollary~\ref{corolloary_consistency} also suggests a simple approach to compute $\Expectile_{\bfalpha}(\bfX)$ when a close form solution cannot be established.
If a sampling method for $\bfX$ is available then replacing the expectation by an empirical mean yields a valid approximation.
Denoting by $(\bfx_n)_{i=1}^n$ a realization of a sequence of random vectors $(\bfX_i)_{i=1}^n$, either obtained by simulation or from real data, it also important to notice that $\phi_n$ is strictly convex.
\begin{Corollary}[Strict convexity of $\phi_n$]\label{corolloary_sample_convexity}
Denote by $(\bfx_i)_{i=1}^n$ a sequence of vectors in $\R^{\dim}$.
The function $\phi_n \colon \R^d \to \R, \quad \bfc \mapsto \phi_n(\bfc) = \frac{1}{n} \sum_{i=1}^n \Lambda_{\bfalpha}(\bfx_i - \bfc)$ is strictly convex.
\end{Corollary}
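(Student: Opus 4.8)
The plan is to reduce strict convexity of $\phi_n$ to the strict convexity of each summand, which we have already established. First I would observe that $\phi_n$ is, up to the positive multiplicative constant $1/n$, a \emph{finite sum} of functions of the form $\bfc \mapsto \Lambda_{\bfalpha}(\bfx_i - \bfc)$. Each such map is the composition of the affine map $\bfc \mapsto \bfx_i - \bfc$ with $\Lambda_{\bfalpha}$, and since $\Lambda_{\bfalpha}$ is strictly convex on $\R^{\dim}$ by Theorem~\ref{theorem_strictly_convex} and the affine map is a bijection of $\R^{\dim}$, the composition $\bfc \mapsto \Lambda_{\bfalpha}(\bfx_i - \bfc)$ is itself strictly convex for each fixed $i$.

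Next I would invoke the elementary fact from convex analysis that a non-negative linear combination of convex functions is convex, and that the combination is strictly convex as soon as \emph{at least one} term carries a strictly positive weight \emph{and} is strictly convex. Here all $n$ weights equal $1/n > 0$ and every summand is strictly convex, so the conclusion is immediate. Concretely, for $\bfc_1 \neq \bfc_2$ in $\R^{\dim}$ and $\lambda \in (0,1)$ one has, for every $i$, $\Lambda_{\bfalpha}(\bfx_i - (\lambda \bfc_1 + (1-\lambda)\bfc_2)) < \lambda \Lambda_{\bfalpha}(\bfx_i - \bfc_1) + (1-\lambda)\Lambda_{\bfalpha}(\bfx_i - \bfc_2)$, and summing these $n$ strict inequalities and dividing by $n$ gives $\phi_n(\lambda \bfc_1 + (1-\lambda)\bfc_2) < \lambda \phi_n(\bfc_1) + (1-\lambda)\phi_n(\bfc_2)$, which is exactly strict convexity of $\phi_n$.

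I do not expect a genuine obstacle here: the statement is essentially a corollary of Theorem~\ref{theorem_strictly_convex}, and the only thing to be careful about is that we are working with the \emph{deterministic} empirical objective (a fixed data vector $(\bfx_i)_{i=1}^n$ rather than a random sample), so no integrability condition such as (C) is needed — finiteness of $\phi_n$ at every $\bfc$ is automatic since it is a finite sum of finite real numbers. The argument is in fact the deterministic, finite-sum analogue of the proof of Theorem~\ref{theorem_risk_is_convex_and_cont}, where the expectation preserved strict convexity; here the (normalized) counting measure plays the role of the probability measure, and the same chain of inequalities goes through verbatim.
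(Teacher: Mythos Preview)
Your proposal is correct and follows essentially the same approach as the paper: the paper's proof simply observes that $\phi_n$ is a convex combination of strictly convex functions (each $\bfc \mapsto \Lambda_{\bfalpha}(\bfx_i-\bfc)$, strictly convex by Theorem~\ref{theorem_strictly_convex}) and appeals to basic properties of convex functions. You have merely written out this one-line argument in more detail.
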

\begin{proof}
Given that $\phi_n$ is a convex combination of strictly convex functions the proof follows from basic properties of convex functions.
\end{proof}
The importance of Corollary~\ref{corolloary_sample_convexity} is that the minimization
\begin{align*}
\argmin_{\bfc\in\R^{\dim}} \phi_n(\bfc)
\end{align*}
is well behaved also in the finite sample case and there exists a unique minimizer that is consistent for the functional according to Corollary~\ref{corolloary_consistency}.
The minimizer, i.e., the finite sample version of $\Expectile_{\bfalpha}(\bfX)$, can then be obtained by numerical minimization techniques.
\section{Illustration}\label{section_illustration}
In this section, we discuss a special case for which it is possible to obtain a closed-form expression for multivariate geometric expectiles.
Moreover, we provide numerical illustrations for a number of different random vectors in order to highlight the impact of changing margins and dependence structures.
\subsection{Analytic Solution for the Uniform Distribution}
We consider the case of a bivariate uniform distribution and denote by $\bfU = (U_1,U_2)$ a random vector with density $\frac{1}{(b_1-a_1)(b_2-a_2)}$ where $b_j > a_j$ and $b_j,a_j \in \R$ for $j=1,2$.
We first compute the expectation of the squared norm in terms of $\bfc = (c_1,c_2)$ as
\begin{align*}
	g(c_1,c_2) &=
        \E[\xNorm{\bfU-\bfc}{2}^2] =
        \int_{a_1}^{b_1} \int_{a_2}^{b_2} \frac{(u_1-c_1)^2 + (u_2-c_2)^2}{(b_1-a_1)(b_2-a_2)} du_2 du_1\\
        &= \frac{1}{(b_1-a_1)(b_2-a_2)} \int_{a_1-c_1}^{b_1-c_1} \int_{a_2-c_2}^{b_2-c_2} u_1^2 + u_2^2 du_2 du_1\\
        &= \frac{(b_2-a_2)((b_1-c_1)^3-(a_1-c_1)^3) + (b_1-a_1)((b_2-c_2)^3-(a_2-c_2)^3)}{3(b_1-a_1)(b_2-a_2)}.
\end{align*}
Defining the real valued functions $h_1$ and $h_2$ as
\begin{align*}
h_1(x,y) &= \frac{1}{2} \(y \sqrt{x^2+y^2} + x^2 \log\(y+\sqrt{x^2+y^2}\)\)\quad\mbox{and}\quad\\
h_2(x,y) &= \frac{1}{96} \(-3 x^4 + 20 x^2 y \sqrt{x^2+y^2} + y^3 \(3 y + 8 \sqrt{x^2+y^2}\) + 12 x^4 \log\(y + \sqrt{x^2+y^2}\)\),
\end{align*}
we further have that
\begin{align*}
        \frac{d}{dx} h_2(x,y) = x h_1(x,y) \quad \mbox{and that} \quad \frac{d}{dy} h_1(x,y) = \sqrt{x^2 + y^2}.
\end{align*}
Therefore,
\begin{align*}
        \int_{a_1}^{b_1} x \int_{a_2}^{b_2} \sqrt{x^2 + y^2} dy dx &= \int_{a_1}^{b_1} x (h_1(x,b_2) - h_1(x,a_2)) dx\\
                                                                   &= \int_{a_1}^{b_1} x h_1(x,b_2) dx - \int_{a_1}^{b_1} x h_1(x,a_2) dx\\
                                                                   &= h_2(b_1,b_2) - h_2(a_1,b_2) - h_2(b_1,a_2) + h_2(a_1,a_2).
\end{align*}
This finally leads to
\begin{align*}
	&g_1(c_1,c_2) =
        \E[\xNorm{\bfU-\bfc}{2}(U_1-c_1)] = \int_{a_1}^{b_1} \int_{a_2}^{b_2} \frac{(u_1-c_1) \sqrt{(u_1-c_1)^2 + (u_2-c_2)^2}}{(b_1-a_1)(b_2-a_2)} du_2 du_1,\\
        &= \frac{1}{(b_1-a_1)(b_2-a_2)} \int_{a_1-c_1}^{b_1-c_1} u_1 \int_{a_2-c_2}^{b_2-c_2} \sqrt{u_1^2 + u_2^2} du_2 du_1,\\
        &= \frac{h_2(b_1-c_1,b_2-c_2) - h_2(a_1-c_1,b_2-c_2) - h_2(b_1-c_1,a_2-c_2) + h_2(a_1-c_1,a_2-c_2)}{(b_1-a_1)(b_2-a_2)},
\end{align*}
where we define $g_2$ analogously as $g_2(c_1,c_2) = \E[\xNorm{\bfU-\bfc}{2}(U_2-c_2)]$.
Taking the preceding results together, we have for $\bfalpha = (\alpha_1,\alpha_2)$ that
\begin{align*}
  \phi(\bfc) &= \E[\Lambda_{\bfalpha}(\bfU-\bfc)]
                = \frac{1}{2}\E[\xNorm{\bfU-\bfc}{2}^2 + \alpha_1\xNorm{\bfU-\bfc}{2}(U_1-c_1) + \alpha_2\xNorm{\bfU-\bfc}{2}(U_2-c_2)]\\
		&= \frac{g(c_1,c_2)}{2} + \frac{\alpha_1 g_1(c_1,c_2)}{2}  + \frac{\alpha_2 g_2(c_1,c_2)}{2}.
\end{align*}
The geometric expectiles $\Expectile_{\bfalpha}(\bfU)$ are now found as
\begin{align*}
        \Expectile_{\bfalpha}(\bfU) = \argmin_{\bfc \in \R^2} \phi(\bfc).
\end{align*}
This example highlights more than anything that finding a closed-form solution can be challenging even in the simplest of cases.
In this sense, the numerical approximation introduced in Section~\ref{section_estimation_asymptotics} takes a more prominent role.
Full-fledged examples utilizing this method can be found in the following sections.
\subsection{Numerical Illustration}\label{section_numerical_illustration}
In this section we visualize geometric expectiles for selected bivariate random vectors.
To this end, we define four random vectors $\bfX_1,\ldots,\bfX_4$ with different margins and dependence structures; see Table~\ref{table_example_specification}.
The dependence structure is formalized in terms of copulas, see, for example, \cite{nelsen2006} or \cite{Joe2014} for textbook introductions.
As a baseline for our comparison, $\bfX_1 = (X_{11},X_{12})$ follows a bivariate normal distribution with independent standard normal margins.
Considering $\bfX_2$ we keep the independence between the components, but we change the margins.
$X_{21}$ now follows a skew normal distribution, see \cite{Azzalini1985}, with parameters $(\xi, \omega, \alpha) = (-1,1,2)$ and  $X_{22}$ follows a Student~$t$ distribution with $\nu = 4$ degrees of freedom.
In case of $\bfX_3$ we only change the dependence structure compared to $\bfX_1$, that is $X_{31}$ and $X_{32}$ still follow a standard normal distribution each but the dependence structure is now given by a Gumbel copula with parameter $\theta = 2$.
Finally $\bfX_4$ differs from $\bfX_1$ in terms of margins and dependence structure, where we employ the skew normal and Student~$t$ margins of $\bfX_2$ with the Gumbel dependence structure of $\bfX_3$.
\begin{table}
\centering
\begin{tabular}{clll}
\hline\noalign{\smallskip}
Vector & Copula & $X_{i1} \sim$ & $X_{i2} \sim$\\
\hline\hline\noalign{\smallskip}\vspace{0.4cm}
$\bm{X}_1 = (X_{11},X_{12})$ & Independence & $\mathcal{N}(0,1)$ & $\mathcal{N}(0,1)$\\
\vspace{0.4cm}
$\bm{X}_2 = (X_{21},X_{22})$ & Independence & $\mathcal{SN}(-1,1,2)$ & $t_4$\\
\vspace{0.4cm}
$\bm{X}_3 = (X_{31},X_{32})$ & Gumbel, $\theta=2$ & $\mathcal{N}(0,1)$ & $\mathcal{N}(0,1)$\\
\vspace{0.4cm}
$\bm{X}_4 = (X_{41},X_{42})$ & Gumbel, $\theta=2$ & $\mathcal{SN}(-1,1,2)$ & $t_4$\\
\hline
\end{tabular}
\caption{Specification of the random vectors $\bfX_1,\ldots,\bfX_4$.}
  \label{table_example_specification}
\end{table}

To illustrate the impact of different indices we consider two parameterizations for $\bfalpha$.
First, we choose $\bfalpha$ according to $\bfalpha_1(\varphi) = 0.98 (\cos(\varphi), \sin(\varphi))$, $\varphi \in [0,2\pi)$, which describes a circle of radius $0.98$.
The magnitude $0.98$ corresponds to a confidence level of $0.99$ in the univariate case.
Second, we choose $\bfalpha$ according to $\bfalpha_2(\varphi) = (0.98 \cos(\varphi), 0.90 \sin(\varphi))$, $\varphi \in [0,2\pi)$, which describes an ellipse in $B$, where a magnitude of $0.90$ corresponds to a confidence level of $0.95$ in the univariate case.
Both choices of indices are visualized in Figure~\ref{fig_possible_directions}.
For further reference we indicate the resulting indices $\bfalpha_j(\varphi_k)$, $j\in\{1,2\}$, for $\varphi_k = k 2\pi/8 $, $k\in\{0,\ldots,7\}$, by the respective value of $k$.
As there are no closed-form solutions available to compute $\Expectile_{\bfalpha_j(\varphi)}(\bfX_\indI)$, $\indI \in \{1,2,3,4\}$, we instead draw an iid sample of size $10,000$ from the respective distribution of $\bfX_\indI$ and utilize the numerical procedure outlined in Section~\ref{section_estimation_asymptotics}; i.e., we use Monte Carlo integration.
\begin{figure}
\centering
\includegraphics[width=0.4\textwidth]{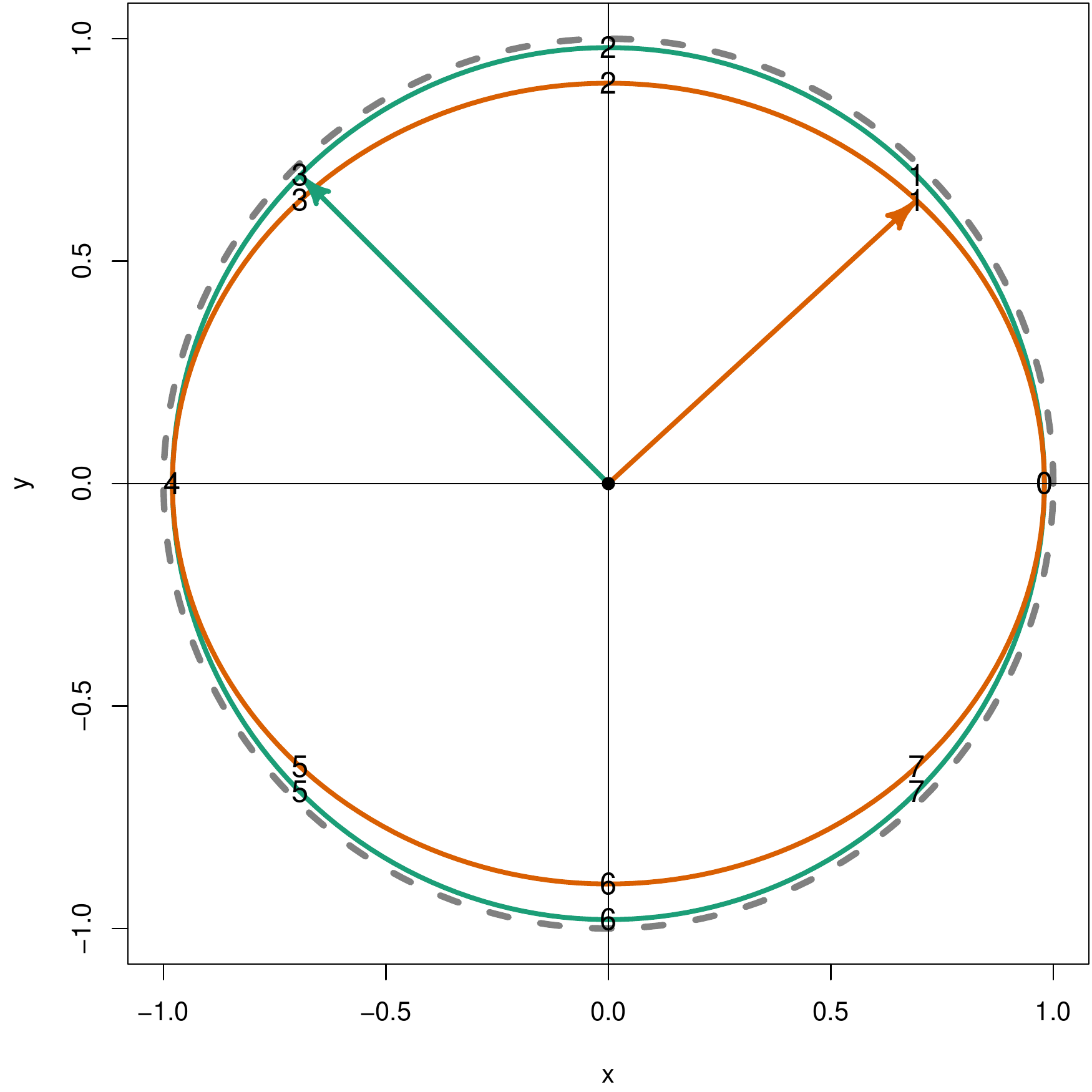}
	\caption{Bivariate indices $\bfalpha_1(\varphi) = 0.98(\cos(\varphi),\sin(\varphi))$ for $\varphi = 3\pi/4$ (green arrow) and $\bfalpha_2(\varphi) = (0.98\cos(\varphi),0.90\sin(\varphi))$ (orange arrow) for $\varphi = \pi/4$. The solid lines indicate possible indices when $\varphi$ varies in $[0,2\pi)$. Numbers indicate $\bfalpha_j(\varphi_k)$, $j \in \{1,2\}$, for $\varphi_{k} = k 2\pi / 8$ where $k \in \{0,\ldots,7\}$. The gray dashed line indicates the boundary of the open unit ball $\partial B = \{\bfx \in \R^2 : \xNorm{\bfx}{2} = 1\}$.
}
\label{fig_possible_directions}
\end{figure}

Figure~\ref{fig_numerical_expectiles} shows the resulting geometric expectiles and density contour lines for $\bfX_1$ (top left), $\bfX_2$ (top right), $\bfX_3$ (bottom left) and $\bfX_4$ (bottom right).
The gray lines indicate the density contours of the underlying bivariate distribution function.
To indicate the effects of different index choices the solid orange line represents the resulting geometric expectiles $\Expectile_{\bfalpha_1(\varphi)}(\bfX_\indI)$, $\indI \in \{1,\ldots,4\}$, for $\varphi \in [0,2\pi)$.
Likewise the solid green line indicates $\Expectile_{\bfalpha_2(\varphi)}(\bfX_\indI)$, $\indI \in \{1,\ldots,4\}$, for $\varphi \in [0,2\pi)$.
In concordance with Figure~\ref{fig_possible_directions} we mark the resulting geometric expectiles $\Expectile_{\bfalpha_j(\varphi_k)}(\bfX_i)$ for indices $\bfalpha_j(\varphi_k)$ based on $\varphi_k = k 2\pi/8 $, $k \in \{0,\ldots,7\}$, by the respective value of $k$.
\begin{figure}
\centering
\includegraphics[width=0.49\textwidth]{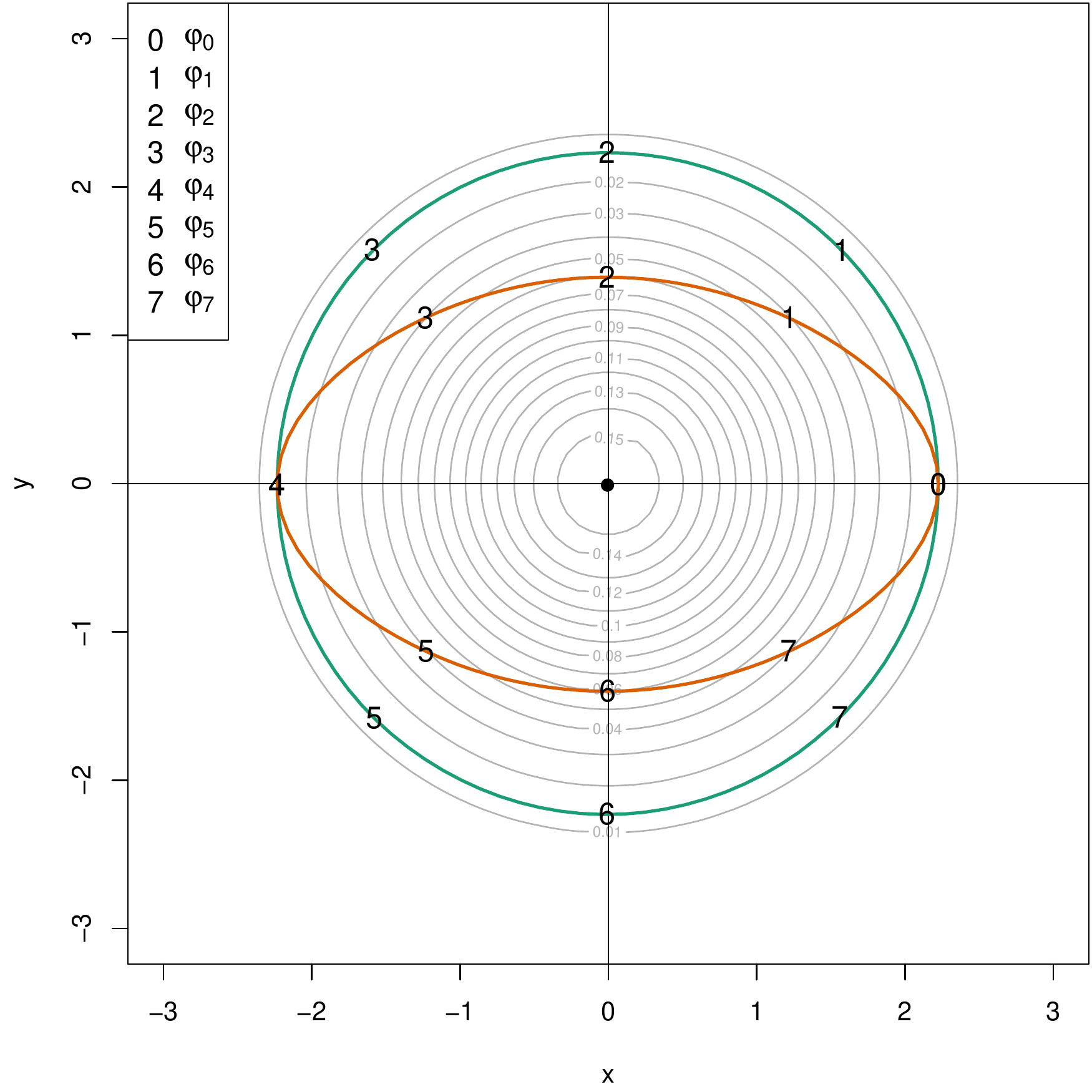}
\includegraphics[width=0.49\textwidth]{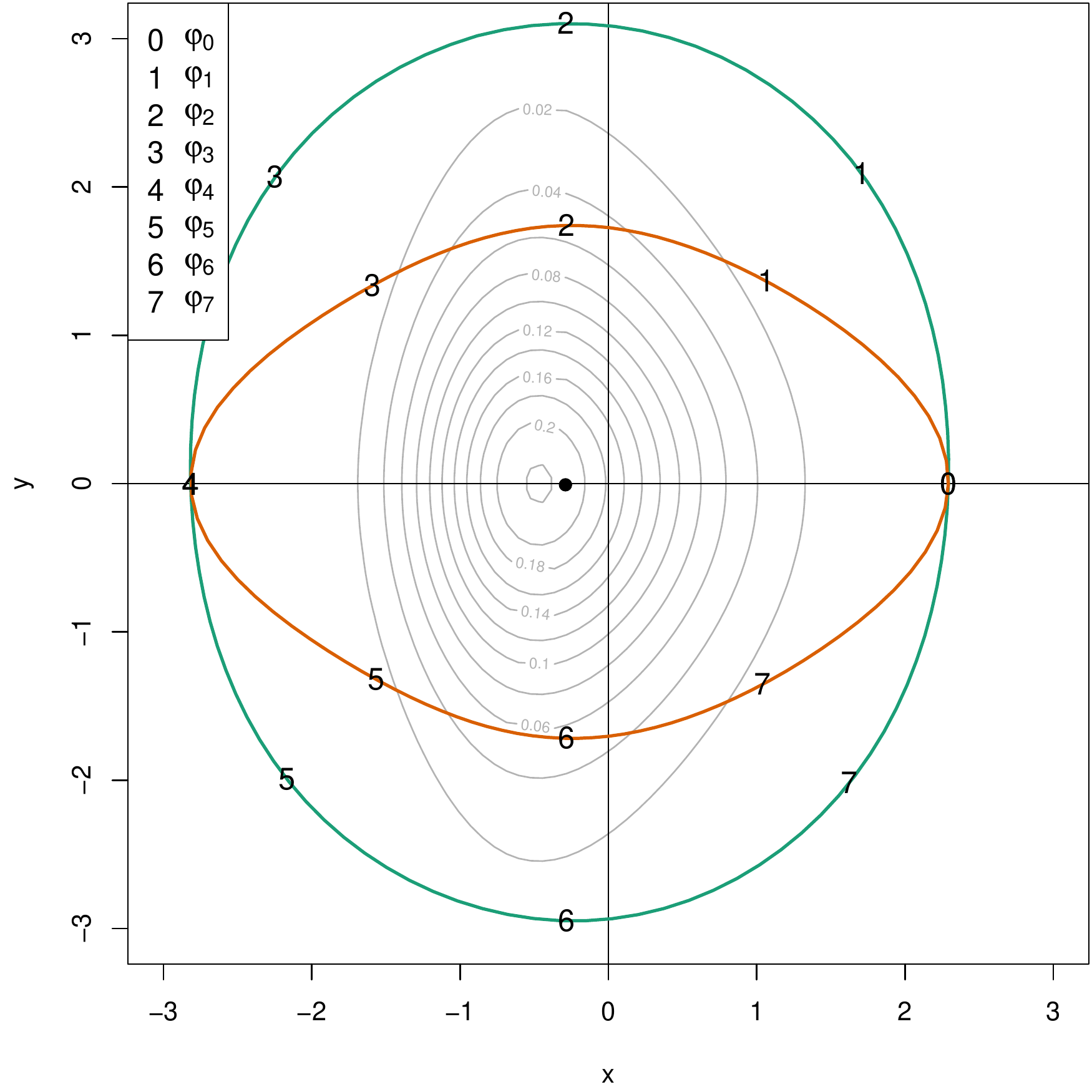}
\includegraphics[width=0.49\textwidth]{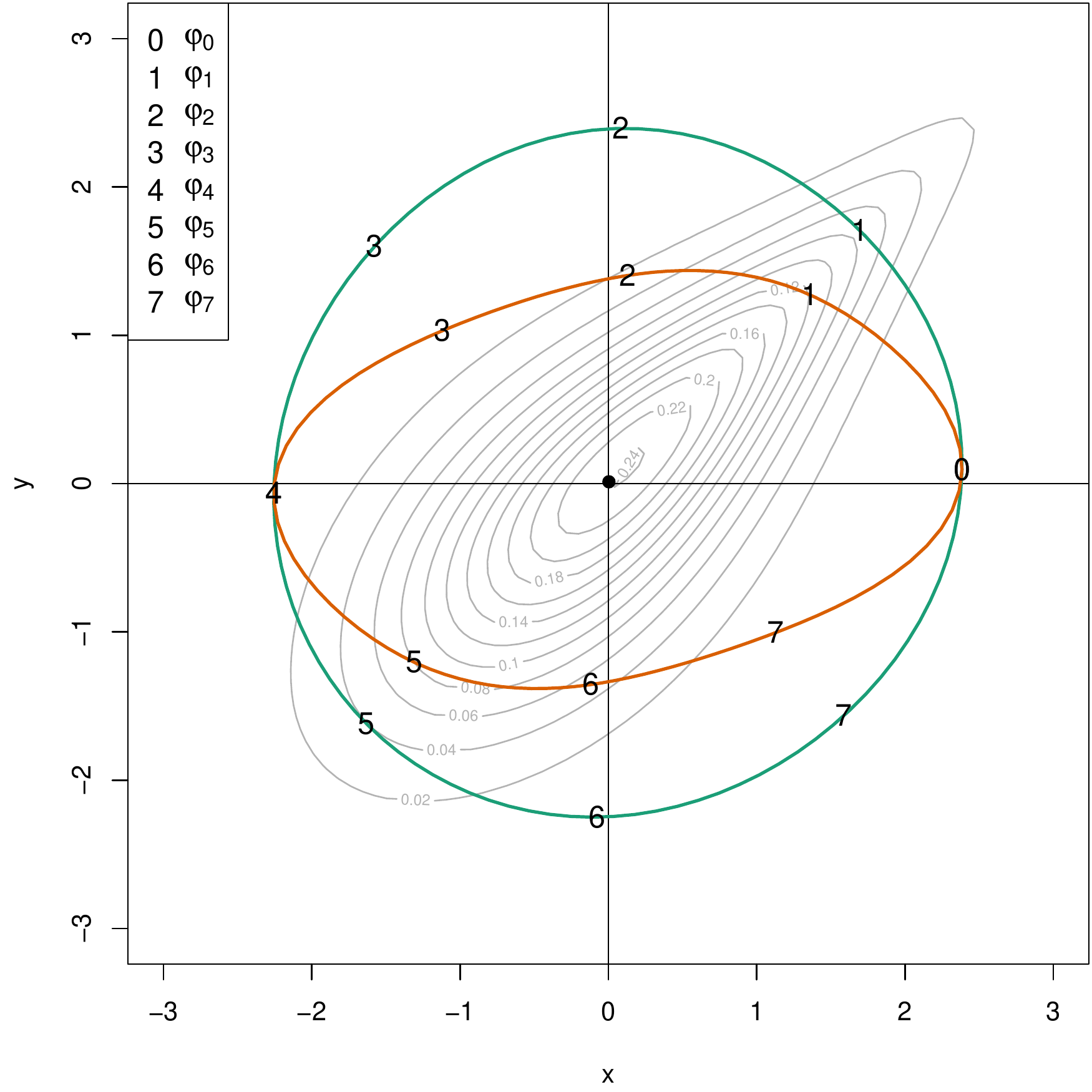}
\includegraphics[width=0.49\textwidth]{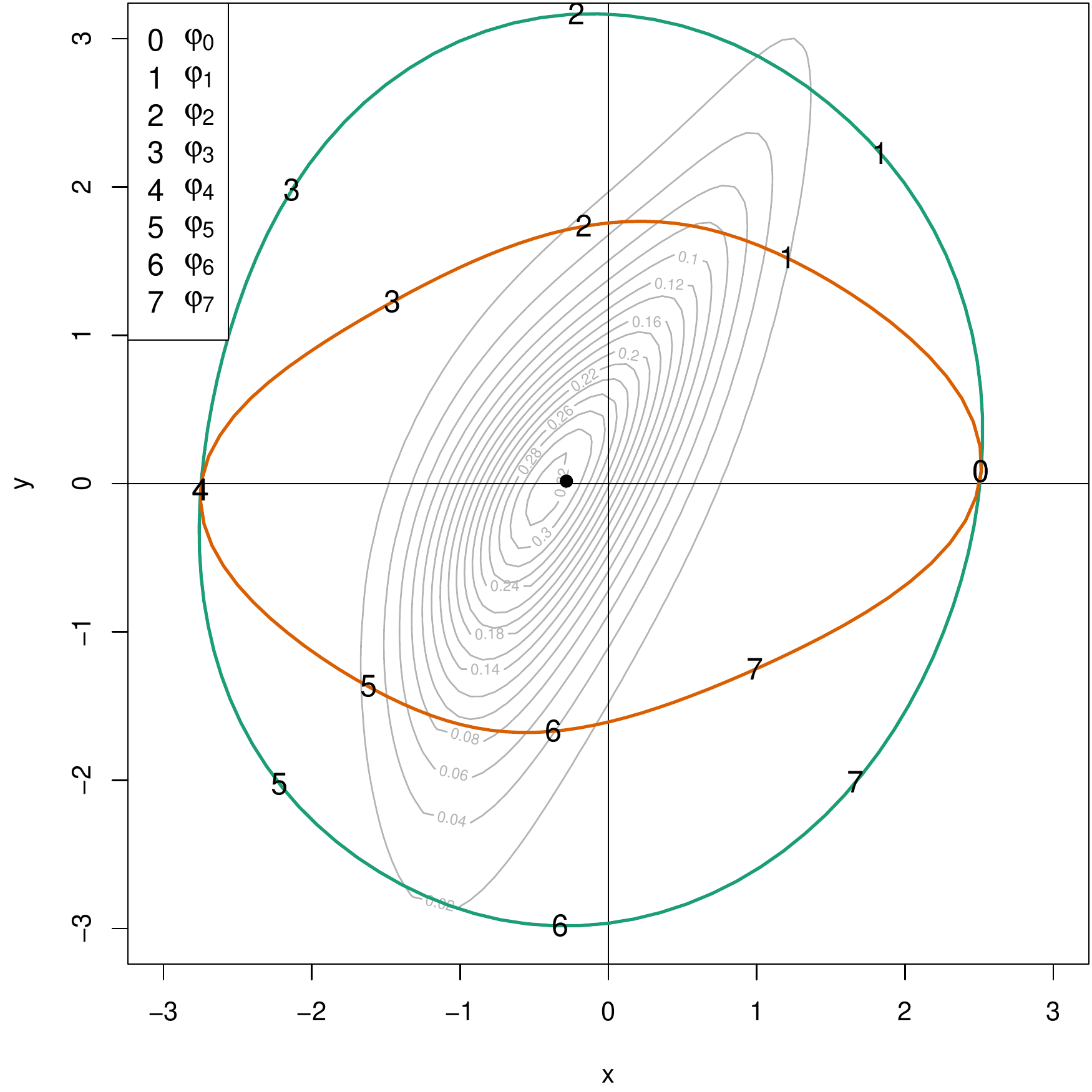}
	\caption{Geometric expectiles $\Expectile_{\bfalpha_j}(\bfX_\indI)$ for $\indI=1$ (top left), $\indI=2$ (top right), $\indI=3$ (bottom left) and $\indI=4$ (bottom right) and $j \in \{1,2\}$. The density contour lines of $\bfX_\indI$ are given in gray. The solid green lines indicate $\Expectile_{\bfalpha_1(\varphi)}(\bfX_\indI)$, $\varphi \in [0,2\pi)$, for $\bfalpha_1(\varphi) = 0.98(\cos(\varphi),\sin(\varphi))$. The solid orange lines indicate $\Expectile_{\bfalpha_2(\varphi)}(\bfX_\indI)$, $\varphi \in [0,2\pi)$, for $\bfalpha_2(\varphi) = (0.98\cos(\varphi),0.90\sin(\varphi))$. Numbers indicate $\Expectile_{\bfalpha_j(\varphi_k)}(\bfX_\indI)$ for $\varphi_{k} = k 2\pi / 8$ where $k\in\{0,\ldots,7\}$. The mean vector $(\E[X_{\indI,1}],\E[X_{\indI,2}])$ is represented by the black dot. Computations are based on $10,000$ iid realizations of $\bfX_\indI$ for each $\indI\in\{1,\ldots,4\}$.}
\label{fig_numerical_expectiles}
\end{figure}

From Figure~\ref{fig_numerical_expectiles} it becomes apparent that geometric expectiles adapt to the underlying distribution.
For the radially symmetric distribution of $\bfX_1$ (top left panel) the lines indicating $\Expectile_{\bfalpha_j(\varphi)}(\bfX_1)$ for all possible $\varphi \in [0,2\pi)$ resemble the shape of the index $\bfalpha_j(\varphi)$.
Furthermore we visually observe the symmetry established in Proposition \ref{proposition_symmetry}.
However, for skewed and heavier tailed margins (top right panel) the geometric expectiles adapt by bulging out.
This also slightly changes the orientation in that, for example, $\Expectile_{\bfalpha_1(\varphi_2)}(\bfX_2)$ is not centered on the $y$-axis anymore.
Introducing dependence between the components of $\bfX_3$ (bottom left panel) forces the geometric expectiles to deform.
The deformation, compared to the top left panel, is, however, not by bulging out as in the top right panel, but rather by compressing and rotating.
Finally when combining both effects in $\bfX_4$ (bottom right panel) we see that geometric expectiles widen and deform according to a superposition of the previously observed effects.

\subsection{Comparing Geometric Value-at-Risk and Expectiles}\label{sub_section_Comparing}
In continuation of the numerical examples in Section~\ref{section_numerical_illustration} we now discuss differences between geometric $\VaR$ and geometric expectiles, as well as their univariate counterparts.
For a fixed $\alpha_1 \in (0,1)$ we therefore consider the corresponding index $\bfalpha = (2\alpha_1-1) (1,0)$, where we make the necessary adjustment to the magnitude of the index discussed in Section~\ref{ssec:MGeoVaR}.
We then compute the univariate $\VaR_{\alpha_1}(X_{11})$ and expectile $\Expectile_{\alpha_1}(X_{11})$ at level $\alpha_1$ for the first component of $\bfX_1 = (X_{11},X_{12})$, see Table~\ref{table_example_specification}, and also the geometric $\VaR_{\bfalpha}(\bfX_1)$ and geometric expectile $\Expectile_{\bfalpha}(\bfX_1)$ based on $\bfalpha$.
Comparing the univariate risk measures to the first component of their multivariate counterparts in Figure~\ref{fig_compare_first_entry_normal}, we see that the multivariate risk measures are more conservative, i.e., higher in absolute value.
In fact, the geometric $\VaR$ provides the most conservative reserve estimates for a given level $\alpha_1$, while the univariate expectiles are the least conservative for the same level.

We further compare geometric $\VaR$ and geometric expectiles by computing the magnitude for a given direction that leads to equal values in each component of the resulting multivariate risk measure.
We therefore fix an element $\bfu \in \partial B = \{\bfx \in \R^{2} : \xNorm{\bfx}{2} = 1\}$ and $\theta \in [0,1)$ to obtain a starting index $\bfalpha = \theta \bfu$ for which we compute $\Expectile_{\bfalpha}(\bfX_1)$.
For $m \in [0,1)$ we then aim to find an optimal $m^*$ that yields $\VaR_{m^*\bfu}(\bfX_1) = \Expectile_{\bfalpha}(\bfX_1)$ in the least-square sense, that is
\begin{align*}
  m^* = \argmin_{m \in [0,1)} \sum^{2}_{i=1} \left( \VaR_{m\bfu}(\bfX_1)_i -  \Expectile_{\bfalpha}(\bfX_1)_i \right)^2.
\end{align*}
In Figure~\ref{fig_equal_magnitude} we show the resulting plot for values of $\theta$ in $[0,0.999]$ and $\bfu = (1,1)/\sqrt{2}$.
In line with Figure~\ref{fig_compare_first_entry_normal} we find that the associated magnitude $m^*$ for the geometric $\VaR$ is lower than the corresponding magnitude $\theta$ for the geometric expectiles.
That is to say that the geometric $\VaR$ is more conservative than geometric expectiles in this example.
\begin{figure}
\centering
  \includegraphics[width=\textwidth]{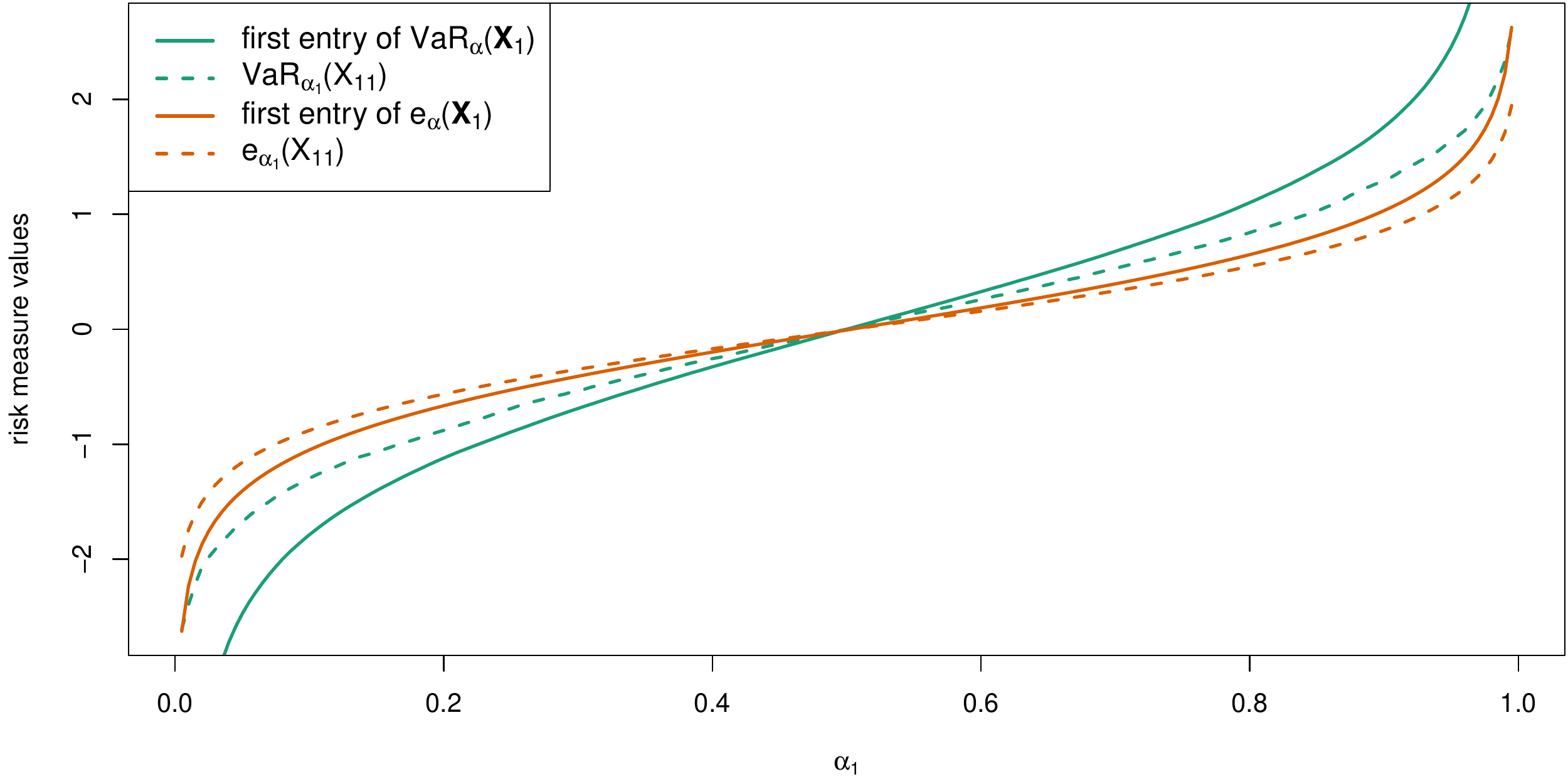}
  \caption{Solid lines indicate the first component of geometric $\VaR$ (green) and geometric expectiles (orange) based on $\bfalpha = (2\alpha_1-1)(1,0)$ applied to $\bfX_1 = (X_{11},X_{12})$, see Table~\ref{table_example_specification}. Dashed lines indicate univariate $\VaR$ (green) and expectiles (orange) at level $\alpha_1$ applied to $X_{11}$.}
\label{fig_compare_first_entry_normal}
\end{figure}
\begin{figure}
\centering
  \includegraphics[width=\textwidth]{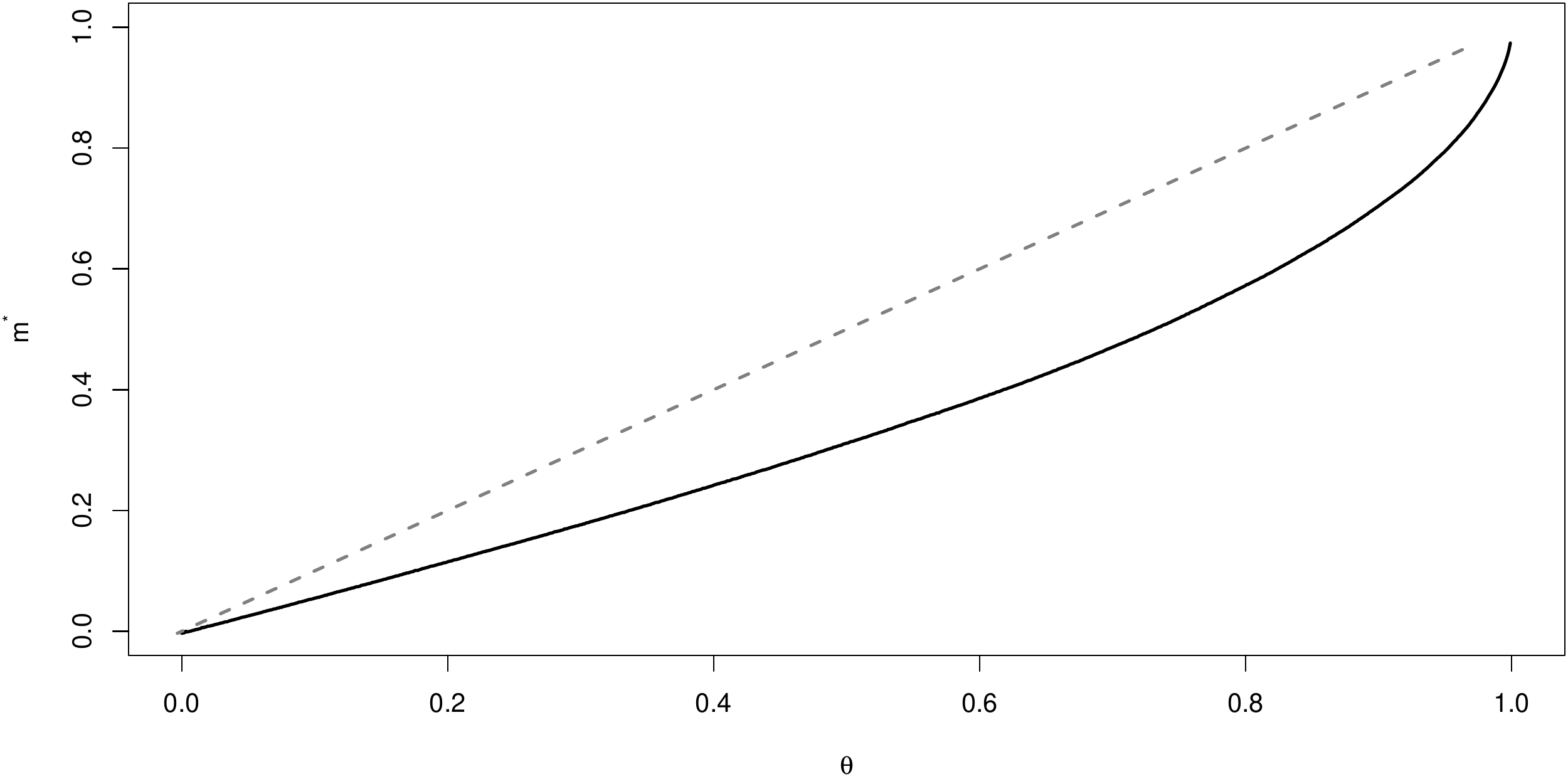}
	\caption{Pairs of magnitude $(\theta,m^*)$ such that $\VaR_{m^*\bfu}(\bfX_1) = \Expectile_{\bfalpha}(\bfX_1)$ in the least-square sense. Here $\bfu = (1,1)/\sqrt{2}$ and $\bfalpha = \theta \bfu$. $\bfX_1$ is specified in Table~\ref{table_example_specification}. The gray-dashed line indicates the $45$ degree line for comparison.}
\label{fig_equal_magnitude}
\end{figure}
\subsection{Higher Dimensional Marginalization}\label{section_higher_dim_margins}
While Section~\ref{sub_section_Comparing} has compared bivariate geometric expectiles to their univariate counterparts, it is of interest to compare geometric expectiles applied to higher dimensional margins to those applied to the full joint distribution.
Denote by $\bfX = (X_1,\ldots,X_d)$ a random vector of dimension $\dim$, and by $\bfY$ a sub-vector of $\bfX$ of dimensions $k < \dim$.
Without loss of generality we assume $\bfY = (X_1,\ldots,X_k)$.
Comparing $\Expectile_{\bfalpha}(\bfX)$ to $\Expectile_{\bfbeta}(\bfY)$ is challenging since the dimensions of the  respective indices $\bfalpha$ and $\bfbeta$ as well as the resulting vectors differ.
Disregarding the choice of indices for now it seems sensible to compare the first $k$ entries of $\Expectile_{\bfalpha}(\bfX)$ to $\Expectile_{\bfbeta}(\bfY)$.
This comparison would then focus on differences introduced by the dependence of $(X_1,\ldots,X_k)$ on $(X_{k+1},\ldots,X_d)$ which is neglected in $\Expectile_{\bfbeta}(\bfY)$.
Concerning the choice of indices $\bfalpha$ and $\bfbeta$, different scenarios are possible:
One possible choice is to first choose $\bfbeta \in B^{k}$ and then set $\bfalpha = (\bfbeta,0,\ldots,0)$.
In this case $\xNorm{\bfalpha}{2} = \xNorm{\bfbeta}{2}$ and $\bfalpha \in B^{\dim}$.
Alternatively, the vector $\bfalpha$ can be filled up with a vector $\bfz$ of non-zero values, that is $\bfalpha = (\bfbeta,\bfz)$.
In this case the condition $\xNorm{\bfalpha}{2} < 1$ needs to be obeyed whatever non-zero values are chosen, which immediately leads to $\xNorm{\bfalpha}{2} < 1$ if and only if $\xNorm{\bfz}{2} < \sqrt{1-\xNorm{\bfbeta}{2}^2}$.

To illustrate the effect of marginalization we consider the case $\dim = 3$ with $\bfX = (X_1,X_2,X_3)$ and $\bfY = (X_1,X_2)$.
We further set $\bfbeta_r(t) = r(\cos(t),\sin(t))^{\tr}$ where $0 < r < 1$ and $t \in [0,2\pi)$.
For $\bfalpha_r(t) = (\bfbeta_r(t),z(r))$ the possible values of $z(r)$ as a function of $r$ are then limited to the interval $(-\sqrt{1-r^2},\sqrt{1-r^2})$ to ensure $\xNorm{\bfalpha}{2} < 1$.

For the illustration the first marginal distribution $X_1$ of $\bfX$ follows a Gumbel distribution, $X_2 \sim t_4$ and $X_3$ follows a standard logistic distribution, while the dependence structure is given in terms of a Clayton copula with parameter $\theta=5$.
Consequently, $\bfY = (X_1,X_2)$ has the same Gumbel and $t_4$ margins also joined by a Clayton copula with parameter $\theta=5$.
In Figure~\ref{fig_2D_margins_expectile} we show the resulting geometric expectiles $\Expectile_{\bfbeta_r(t)}(\bfY)$ and the first two components of $\Expectile_{\bfalpha_r^i(t)}(\bfX)$, $i \in \{1,\ldots,7\}$, where $\bfalpha_r^i(t) = (\bfbeta_r(t),\ (-\tfrac{3}{4} + (i-1)\tfrac{1}{4})\sqrt{1-r^2})$.
Figure~\ref{fig_2D_margins_expectile} shows the results for $r=0.1$ (top left), $r=0.2$ (top right), $r=0.5$ (bottom left) and $r=0.9$ (bottom right).
From the figure we see that multiple intersections between the expectile curves $\Expectile_{\bfbeta_r(t)}(\bfY)$ and the first two components of $\Expectile_{\bfalpha_r^i(t)}(\bfX)$, $i \in \{1,\ldots,7\}$ are possible.
There is, however, one exception:
In case of $\bfalpha_r^4(t)$ we see that $\Expectile_{\bfbeta_r(t)}(\bfY)$ (orange) is always contained in the respective expectile curve based on $\bfalpha_r^4(t)$ (black).
For this choice of $\bfalpha$ the numerical result insinuates that the geometric expectiles of the sub-vector $\bfY$ are, as a set, contained in the respective components of the geometric expectiles of the full vector $\bfX$.
A partial explanation is that the components $(X_{k+1},\ldots,X_d)$ and their dependence with $(X_1,\ldots,X_k)$ are not at all taken into consideration when computing $\Expectile_{\bfbeta_r(t)}(\bfY)$.
While setting the respective elements in $\bfalpha^4$ to zero does eliminate the inner product terms associated with $(X_{k+1},\ldots,X_d)$ in \eqref{eq_multivariate_expectile}, see also \eqref{eq_multivariate_expectile_loss}, they are still contributing to the objective function via the norm term when computing $\Expectile_{\bfalpha_r^4(t)}(\bfX)$.
While this leads to comparatively wider spread contours, forcing $\xNorm{\bfalpha^4(t)}{2}=\xNorm{\bfbeta(t)}{2}$ continues to keep the results comparable.

In Figure~\ref{fig_2D_margins_VaR} we also compute geometric $\VaR_{\bfbeta_r(t)}(\bfY)$ and $\VaR_{\bfalpha_r^i(t)}(\bfX)$, $i \in \{1,\ldots,7\}$, for the same joint model $\bfX$ with $r=0.1$.
From the figure it is clear that geometric value-at-risk does not exhibit the ordering for indices $\bfbeta(t)$ and $\bfalpha_r^4(t)$ previously observed for geometric expectiles.
\begin{figure}
  \centering
  \includegraphics[width=0.49\textwidth]{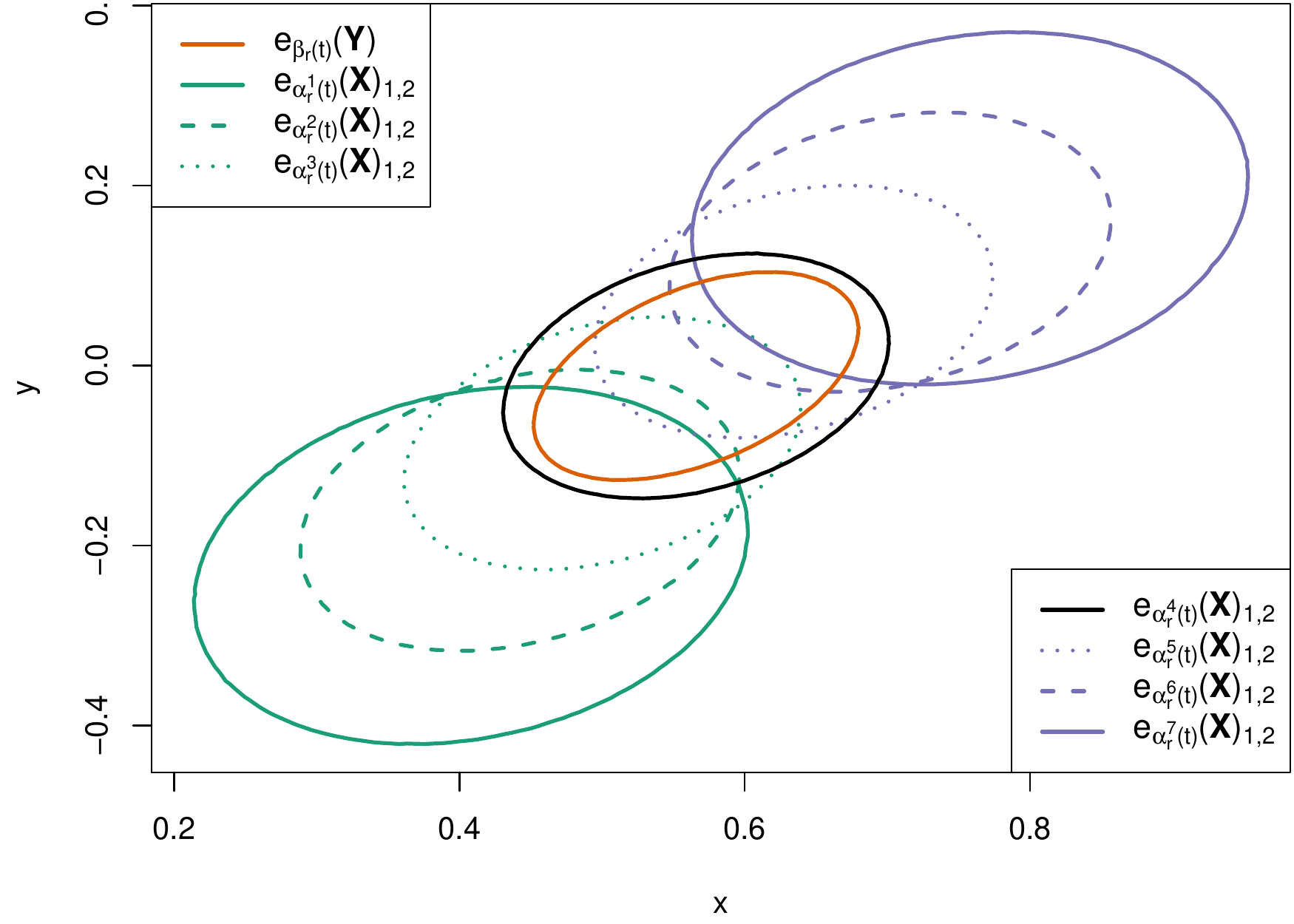}
  \includegraphics[width=0.49\textwidth]{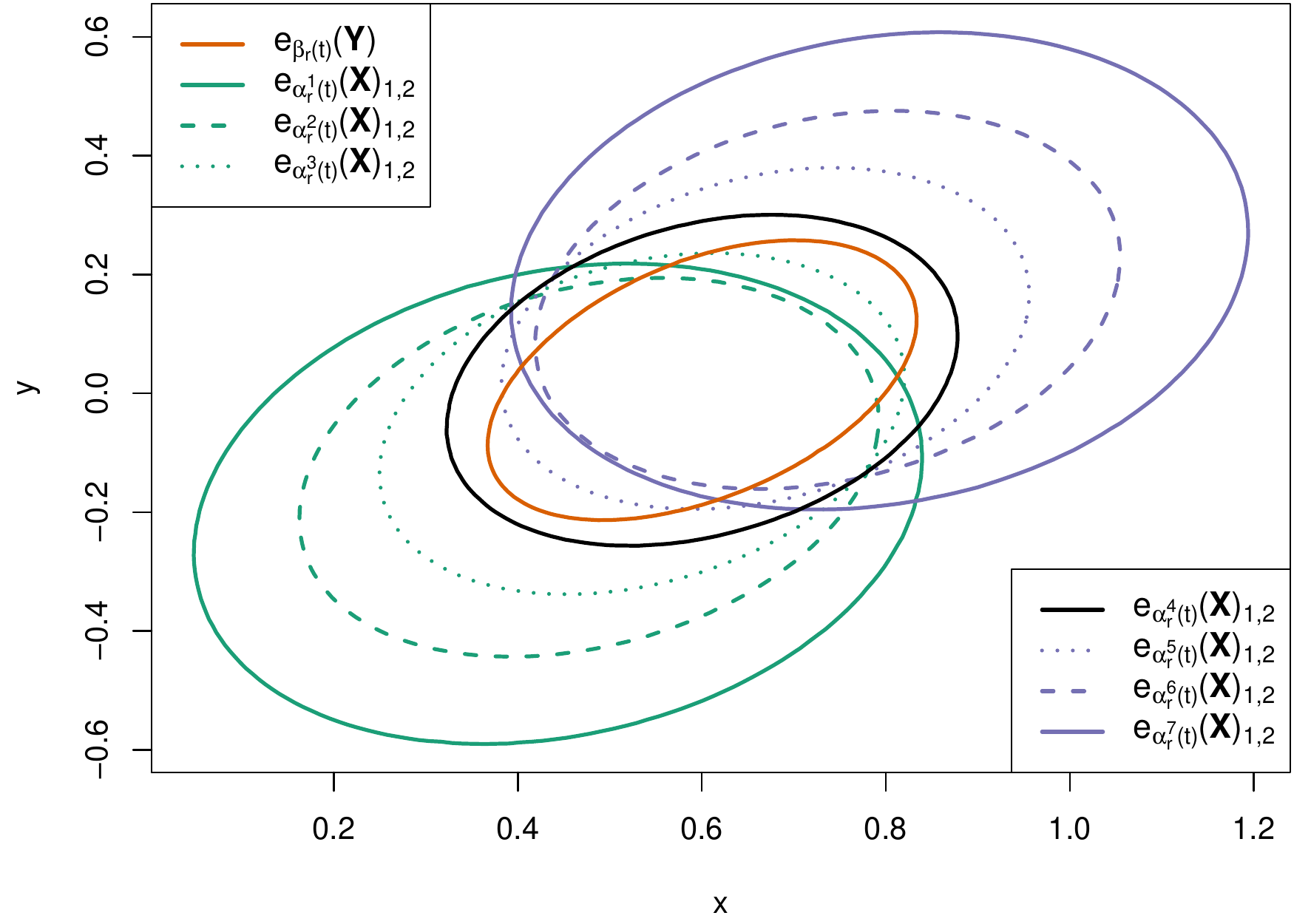}
  \includegraphics[width=0.49\textwidth]{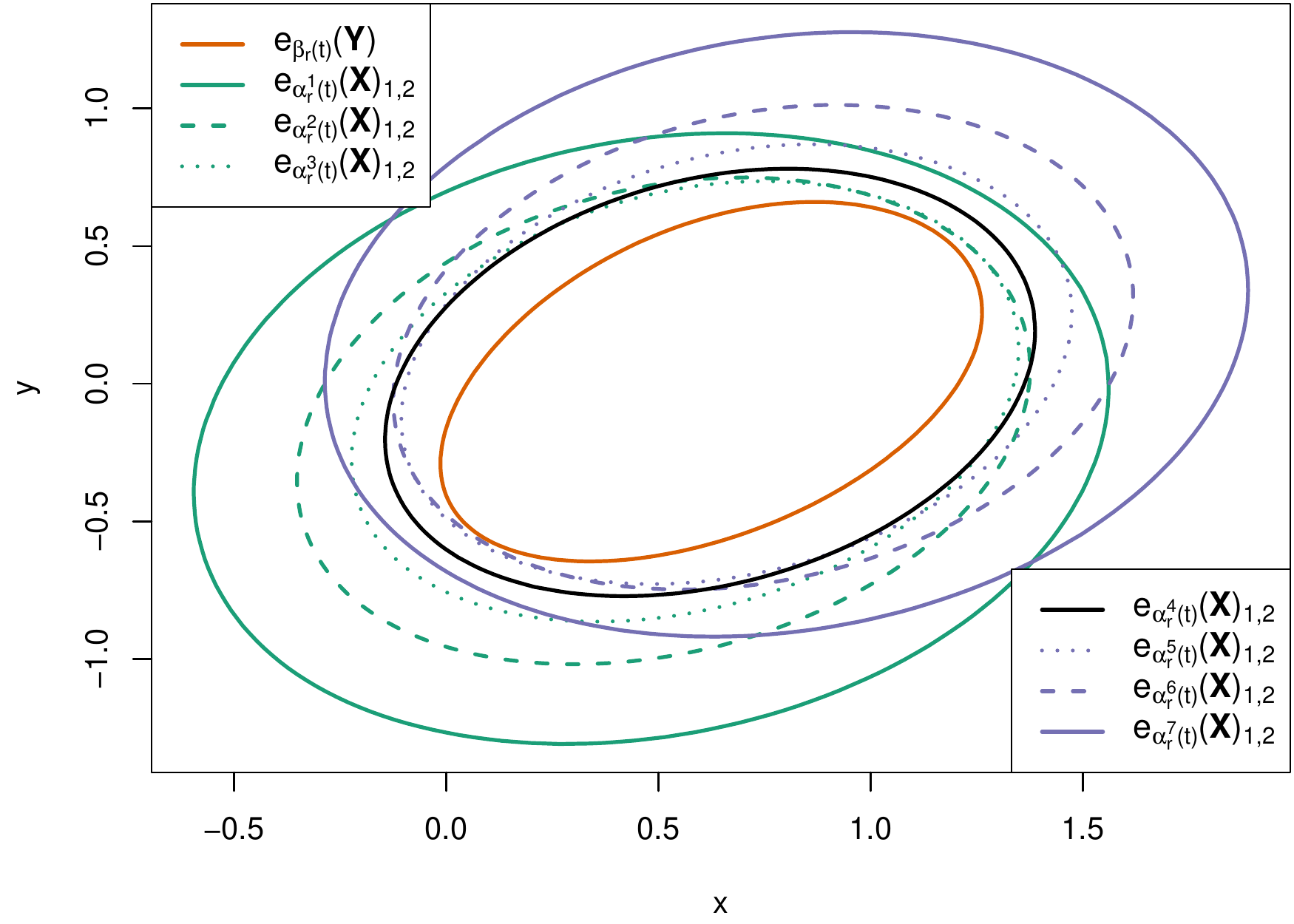}
  \includegraphics[width=0.49\textwidth]{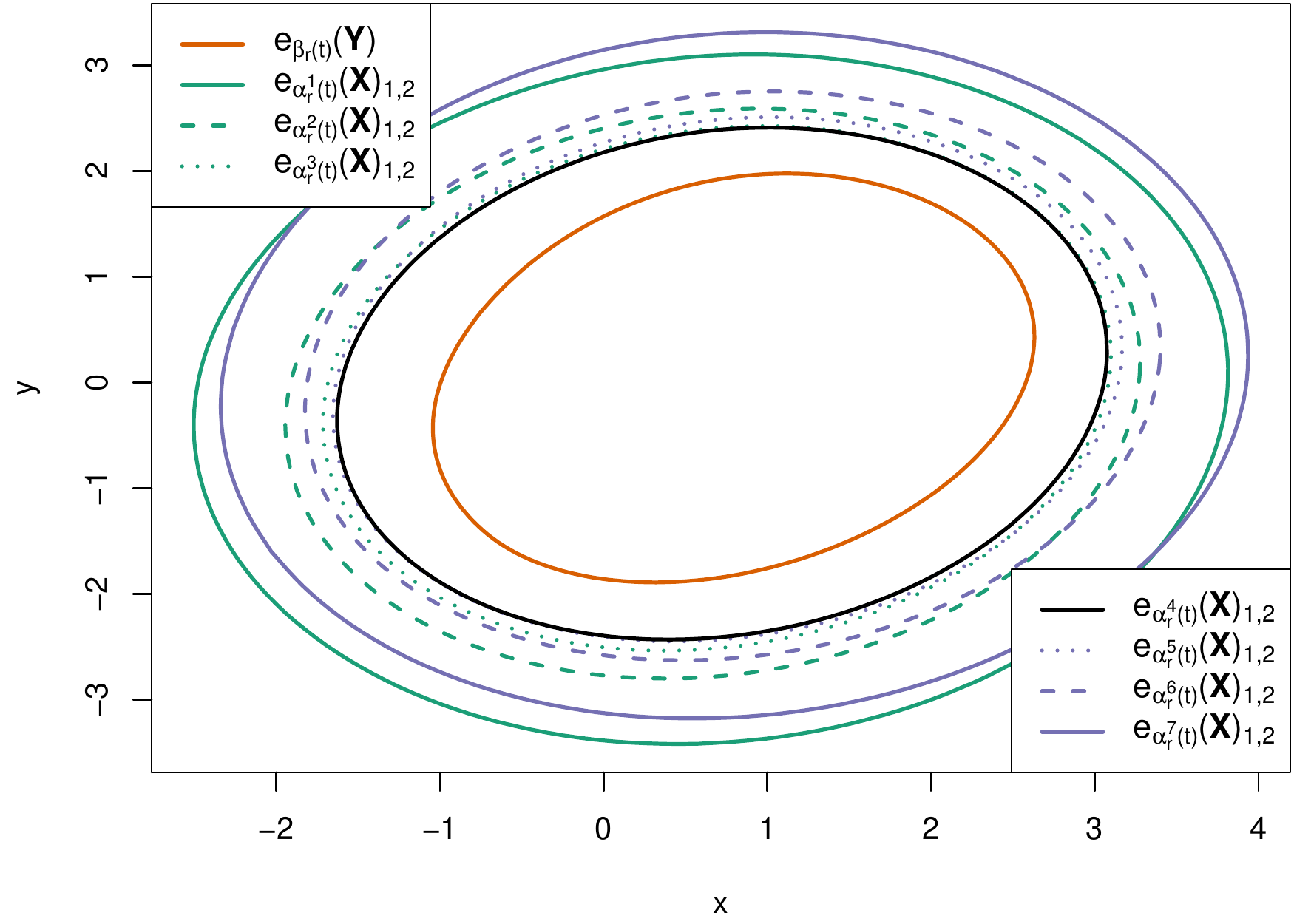}
	\caption{$\Expectile_{\bfbeta_r(t)}(\bfY)$ for $\bfbeta_r(t) = r(\cos(t),\sin(t))^{\tr}$, $t \in [0,2\pi)$ (orange). First two entries of $\Expectile_{\bfalpha^i_r(t)}(\bfX)$ for $i=1$ (green, solid), $i=2$ (green, dashed), $i=3$ (green, dotted), $i=4$ (black), $i=5$ (blue, dotted), $i=6$ (blue, dashed), $i=7$ (blue, solid). Radius $r=0.1$ (top left), $r=0.2$ (top right), $r=0.5$ (bottom left) and $r=0.9$ (bottom right). $\bfX=(X_1,X_2,X_3)$ and $\bfY=(X_1,X_2)$, where $X_1$ follows a Gumbel distribution, $X_2 \sim t_4$ and $X_3$ follows a standard logistic distribution. Dependence structure is given in terms of a Clayton copula with parameter $\theta=5$. Computations are based on $20000$ independent replications.}
    \label{fig_2D_margins_expectile}
\end{figure}
\begin{figure}
  \centering
  \includegraphics[width=\textwidth]{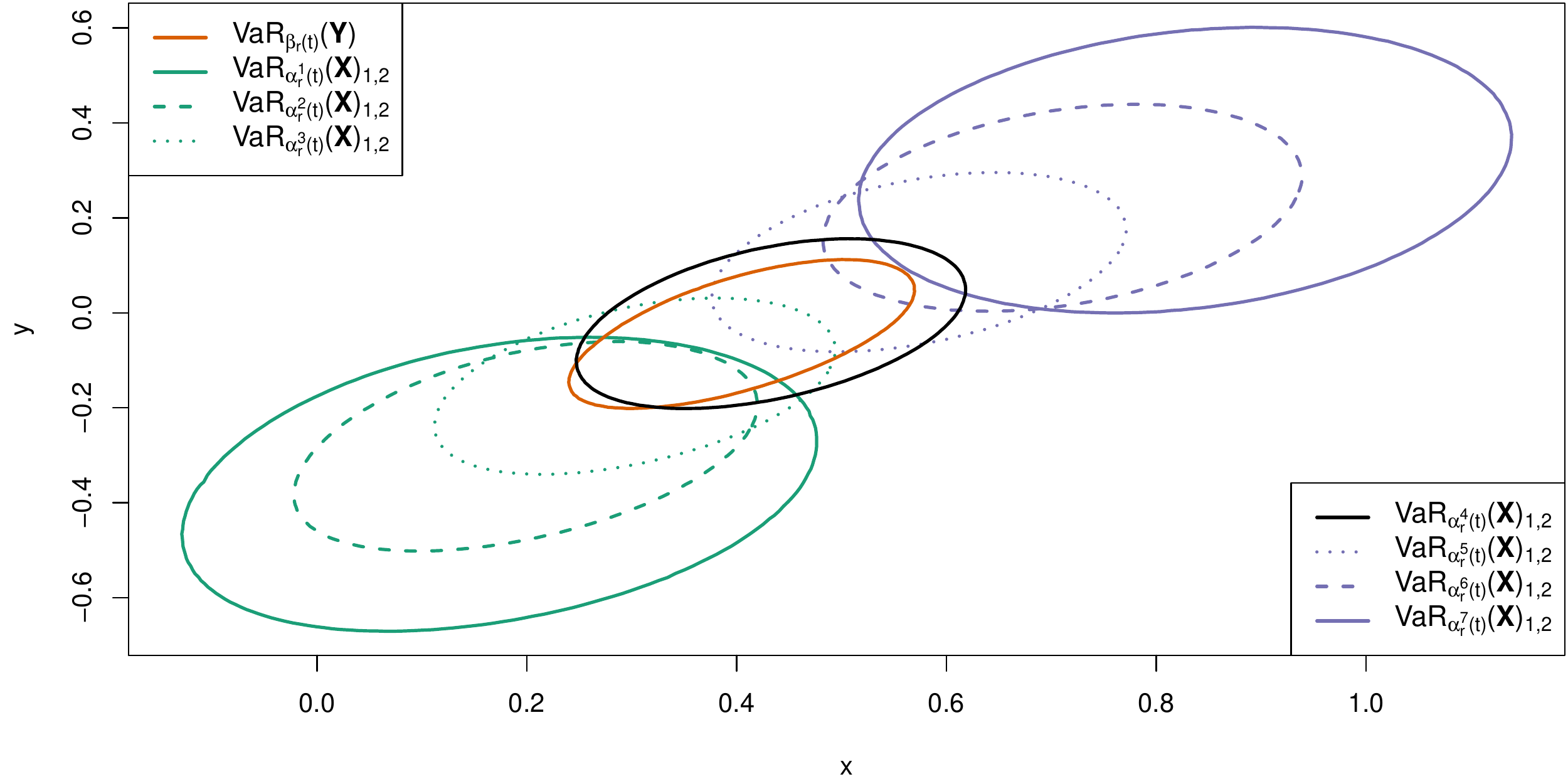}
	\caption{$\VaR_{\bfbeta_r(t)}(\bfY)$ for $\bfbeta_r(t) = r(\cos(t),\sin(t))^{\tr}$, $t \in [0,2\pi)$ (orange) and $r=0.1$. First two entries of $\VaR_{\bfalpha^i_r(t)}(\bfX)$ for $i=1$ (green, solid), $i=2$ (green, dashed), $i=3$ (green, dotted), $i=4$ (black), $i=5$ (blue, dotted), $i=6$ (blue, dashed), $i=7$ (blue, solid). $\bfX=(X_1,X_2,X_3)$ and $\bfY=(X_1,X_2)$, where $X_1$ follows a Gumbel distribution, $X_2 \sim t_4$ and $X_3$ follows a standard logistic distribution. Dependence structure is given in terms of a Clayton copula with parameter $\theta=5$. Computations are based on $20000$ independent replications.}
    \label{fig_2D_margins_VaR}
\end{figure}
\subsection{Bounded Random Vectors}
In this section we study the effect of applying geometric expectiles to a bounded random vector.
We therefore assume that $\bfX$ follows a Clayton copula $\Copula_{\theta}$ with parameter $\theta=5$, and compute $\Expectile_{\bfalpha(t)}(\bfX)$ for $\bfX \sim \Copula_5$ and $\bfalpha(t) = r(\cos(t),\sin(t))^{\tr}$ for $t \in [0,2\pi)$ and $r \in \{0.1,0.2,\ldots,0.9,0.95, 0.99, 0.9995, 0.9999, 0.99999\}$.
For extreme indices $\bfalpha$ the geometric expectile contours can be outside the support of $\bfX$ as shown in Figure~\ref{fig_bounded_support_expectile}.
Likewise, they can be outside of the convex hull of the data in an estimation setting.
This is in line with geometric $\VaR$, where $\xNorm{\VaR_{\bfalpha}(\bfX)}{2} \to \infty$ for sufficiently extreme indices $\xNorm{\bfalpha}{2} \to 1$, see \cite{GirardStupfler2017}.
To further study the behaviour when the norm of the underlying index tends to one we (numerically) study the function
\begin{align*}
d(r) = \xNorm{\Expectile_{\bfalpha(r)}(\bfX)-\E[\bfX]}{2},
\end{align*}
where $\bfalpha(r) = r\bfu$ for a fixed $\bfu$ with $\xNorm{\bfu}{2} = 1$ and $0 < r < 1$.
In Figure~\ref{fig_norm_function} we show an example of $d(r)$ for a four dimensional joint distribution when $\bfu = -(1,1,1,1)/\sqrt{4}$.
For the illustration the first marginal distribution $X_1$ follows a Gumbel distribution, $X_2 \sim t_4$, $X_3$ follows a standard logistic distribution and $X_4 \sim \N(0,1^2)$.
The dependence structure is given in terms of a Frank copula with parameter $\theta=3$.
Based on the numerical experiments it seems that $d(r)$ is monotonically increasing in $r$ with no limit in $\R$.
Aside from their related definitions this observation further supports the idea that geometric expectiles behave comparably to geometric $\VaR$ for extreme indices $\bfalpha$.
This potentially opens an avenue for studying the behaviour of geometric expectiles in a multivariate extreme value theory framework along the lines of \cite{GirardStupfler2015}.
\begin{figure}[!htbp]
  \centering
  \includegraphics[width=0.4\textwidth]{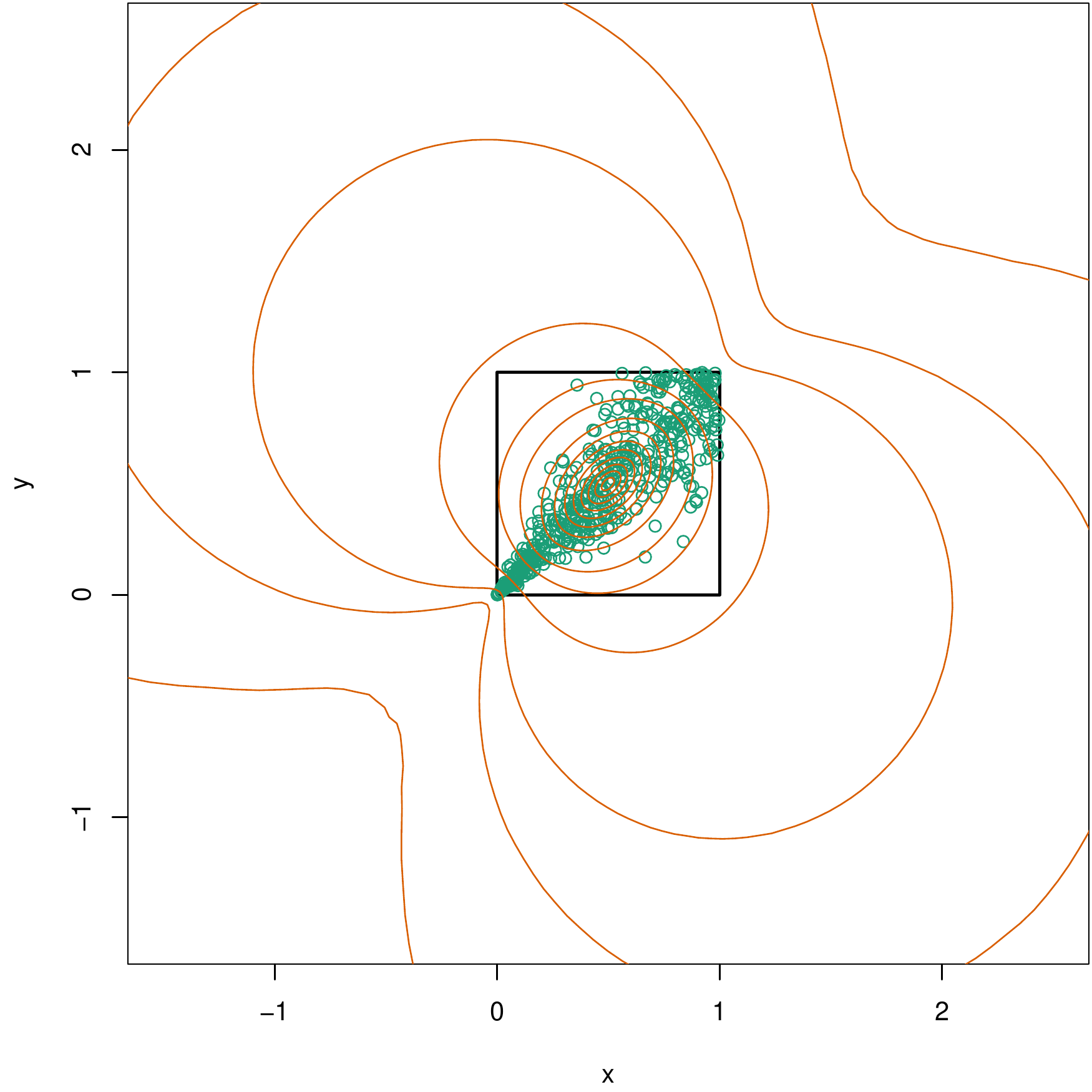}
	\caption{$\Expectile_{\bfalpha(t)}(\bfX)$ for $\bfX \sim \Copula_5$ and $\bfalpha(t) = r(\cos(t),\sin(t))^{\tr}$ for $r \in \{0.1,0.2,\ldots,0.9,0.95, 0.99, 0.9995, 0.9999, 0.99999\}$ and $t \in [0,2\pi)$ in orange. The black box indicates the support of $\bfX$, while green circles indicate a sample drawn from $\bfX$. Computations are based on $20000$ independent replications.}
    \label{fig_bounded_support_expectile}
\end{figure}
\begin{figure}
  \centering
\includegraphics[width=\textwidth]{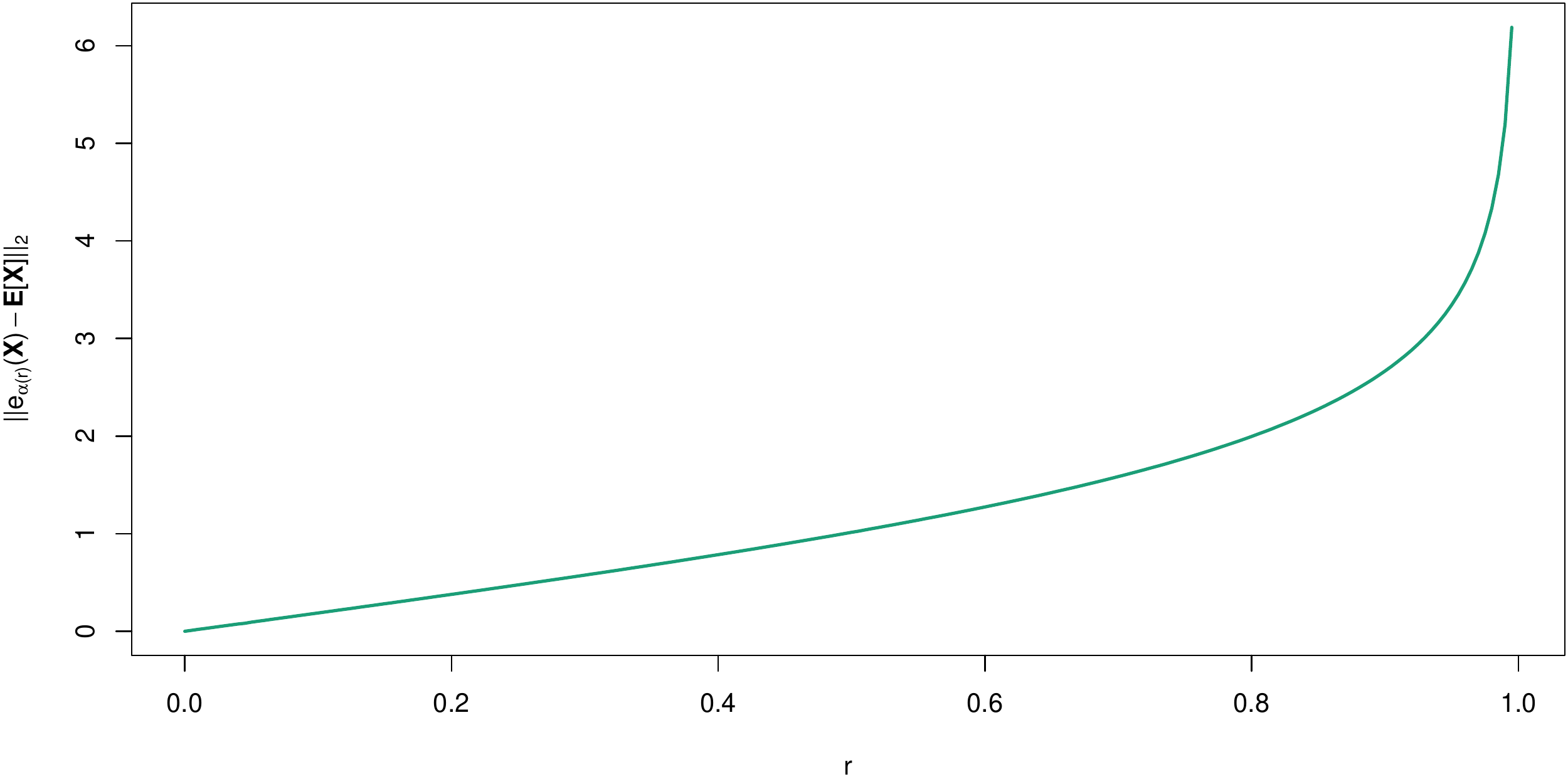}
\caption{$\xNorm{\Expectile_{\bfalpha(r)}(\bfX)-\E[\bfX]}{2}$ for $\bfX=(X_1,\ldots,X_4)$ when $X_1$ follows a Gumbel distribution, $X_2 \sim t_4$, $X_3$ follows a standard logistic distribution and $X_4 \sim \N(0,1^2)$. The dependence structure is given in terms of a Frank copula with parameter $\theta=3$. Furthermore, $\bfalpha(r) = -r(1,1,1,1)^{\tr}/\sqrt{4}$ for $r \in \{0, 0.005,0.01,0.015\ldots,0.995\}$. Computations are based on $20000$ independent replications.}
    \label{fig_norm_function}
\end{figure}
\subsection{Example Application}
To demonstrate how geometric expectiles can be used in a practical scenario we consider a data generating process that generalizes the well-known compound Poisson model.
By $\bfE = (E_1,E_2)$ we denote a random vector with exponentially distributed margins $E_\indI \sim \Exponential(\beta_\indI)$, $\indI = 1,2$.
The dependence structure between the components of $\bfE$ is given by a Clayton copula $\Copula_{\theta}$ with parameter $\theta > 0$.
For a Poisson random variable $N \sim \Poisson(\lambda)$ our final random vector $\bfX = (X_1,X_2)$ is then given by
\begin{align*}
\bfX = \sum^{N}_{k=1} \bfE_k,
\end{align*}
where $\bfE_k$ is an independent (of $N$ and $\bfE_j$ for $j \neq k$) copy of $\bfE$.
By construction we see that $X_j$, $j\in\{1,2\}$, is a compound Poisson model with exponentially distributed severities.
All in all the model captures the situation where a random number of risk occurs together, and the components of each incident are not independent.
Our example is motivated by considering vehicle insurance that
can\footnote{Coverage of medical costs depends on the respective jurisdiction. Vehicle insurance policies that cover medical and physical damage are common in the USA. On the other hand, there are, for example, no such products in the Qu\'{e}bec province of Canada since medical costs are in this case taken over by the province.}
cover medical payments for the insured party as well as physical damages to the insured vehicle.
From the point of the insurance company there will be a random number of accidents, where it is reasonable to assume a positive dependence between both components of the policy.

For our example we consider the parameters $\theta = 0.9$, $\beta_1 = 1/10$, $\beta_2 = 1/15$ and $\lambda = 1$.
The computation of the geometric expectiles is now based on a simulated iid sample $(\bfx_i)_{i=1}^{100}$ of $\bfX$.
The computation therefore utilizes the Monte Carlo estimator according to Corollary \ref{corolloary_consistency} and the discussion therein.
Figure~\ref{fig_numerical_expectiles_compound_model} shows the resulting geometric expectiles, where we again consider the previously introduced, see Figure~\ref{fig_possible_directions} and Section~\ref{section_numerical_illustration}, indices $\bfalpha_1(\varphi) = 0.98(\cos(\varphi),\sin(\varphi))$ and $\bfalpha_2(\varphi) = (0.98\cos(\varphi),0.90\sin(\varphi))$.
Given that in this example the margins are a.s.\ positive, we confine ourselves to directions in the first quadrant only, i.e., $\varphi \in [ 0,\pi /2 ]$.
In this case numbers indicate the resulting geometric expectiles for indices $\bfalpha_j(\varphi_k)$, $j\in\{1,2\}$, where $\varphi_k = k \pi/14 $, $k\in\{0,\ldots,7\}$.

Concerning the individual variables $X_1$ and $X_2$ the insurer can now reserve losses according to the indices $\bfalpha_j(\varphi_0)$ or respectively $\bfalpha_j(\varphi_7)$.
Taking $j=1$ corresponds to a traditional confidence level of $0.99$ for both components, while $j=2$ corresponds to a traditional confidence level of $0.99$ for $X_1$ and $0.95$ for $X_2$.
More importantly, by extending the univariate forecast model validation theory outlined, for example, in \cite{Gneiting2011} or \cite{NoldeZiegel2017} it might be possible to validate the proposed model against real data by backtesting.
Using geometric expectiles the backtest would then validate the full joint distribution function of $\bfX$, and not just the individual marginal distributions of $X_1$ and $X_2$.
\begin{figure}[!htbp]
\centering
  \includegraphics[width=\textwidth]{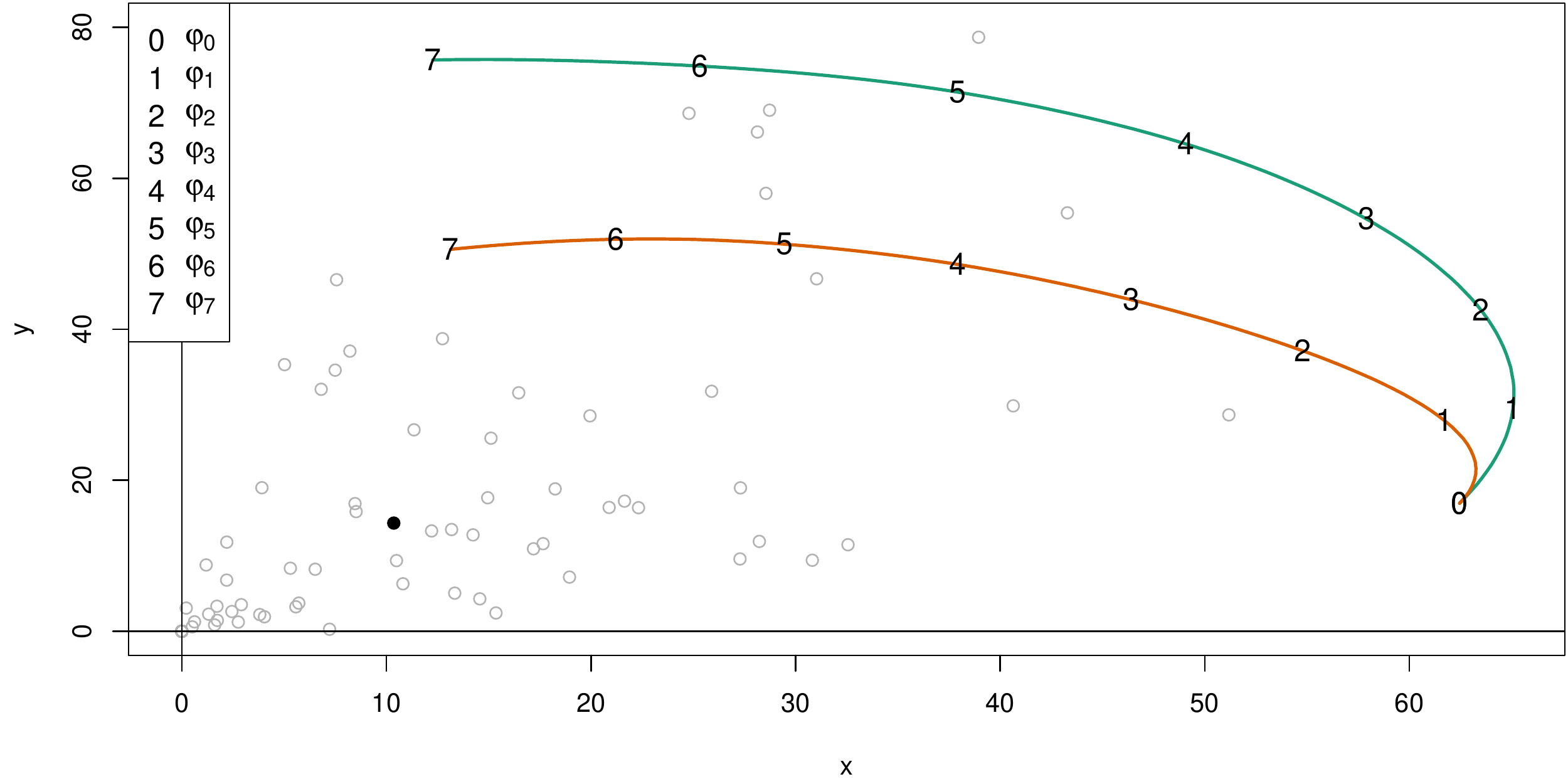}
	\caption{Geometric expectiles $\Expectile_{\bfalpha_1}(\bfX)$ (green line) and $\Expectile_{\bfalpha_2}(\bfX)$ (orange line) for the bivariate compound Poisson model. Indices are given as $\bfalpha_1(\varphi) = 0.98(\cos(\varphi),\sin(\varphi))$ and $\bfalpha_2(\varphi) = (0.98\cos(\varphi),0.90\sin(\varphi))$, where $\varphi \in [ 0,\pi/2 ]$. Numbers indicate the resulting geometric expectiles for indices $\bfalpha_j(\varphi_k)$, $j\in\{1,2\}$, where $\varphi_k = k \pi/14 $, $k\in\{0,\ldots,7\}$. The black dot indicates the bivariate mean $(\E[X_1],\E[X_2])$. Computations are based on $100$ iid realizations of $\bfX$ marked by gray circles.}
\label{fig_numerical_expectiles_compound_model}
\end{figure}
\section{Conclusion}\label{section_conclusion}
In this paper we introduced geometric expectiles for multivariate distribution functions with finite second moments of the margins.
This proposed functional naturally generalizes univariate expectiles introduced in \cite{NeweyPowell1987} to the multivariate case for any fixed dimension $\dim$.
Instead of a single real number, geometric expectiles are represented by a $\dim$-dimensional vector, which can be used for risk management purposes, for risk selection and comparison.
This approach is in line with other recently introduced multivariate risk measures.
Utilizing a framework comparable to the one introduced in \cite{Chaudhuri1996} to generalize quantiles, the resulting geometric expectiles are indexed by an element of the open unit ball of $\R^{\dim}$.

Seen as a statistical functional, geometric expectiles have a number of desirable properties.
First, they are well-defined and unique for any multivariate distribution function with margins with finite second moments.
Second, multivariate geometric expectiles have desirable properties under data transformations such as translating, re-scaling or re-ordering the data.
Generalizing a re-ordering, also multiplications with orthogonal matrices are well behaved.
Third, as in the univariate case, geometric expectiles are elicitable in a multivariate sense.
Comparable to the univariate case, this may provide one with a mechanism to rank competing multivariate forecasting procedures, or to backtest a multivariate model against real data.

Aside from population characteristics, we also studied properties and asymptotics of the corresponding finite sample version.
Here we find that the sample version is a consistent estimator of the population characteristics.
A Monte Carlo estimator of geometric expectiles is readily available when a closed-form solution is not.
Furthermore, to reduce the variance of the numerical estimates, quasi-Monte Carlo methods can be employed to improve the variance of the Monte Carlo estimators of the expectations in \eqref{eq_multivariate_var} and \eqref{eq_multivariate_expectile}.
This simplifies the computation of the minimizer from a numerical point of view.

In the presented examples, we utilized these simulation-based approximations to contrast geometric expectiles to the geometric quantiles introduced in \cite{Chaudhuri1996} as well as univariate expectiles and quantiles.
Our results indicate that geometric value-at-risk is more conservative than geometric expectiles for a given index.

In cases where the second moment condition on the margins is too restrictive it remains to be seen how tempering the margins interacts with geometric expectiles, providing a possible remedy.

Despite the extent of the present study, we can identify the following open questions concerning multivariate geometric expectiles:
It is unclear which stochastic order $\prec$ between random vectors is compatible with the corresponding geometric expectiles, so that $\Expectile_{\bfalpha}(\bfX) \sqsubset \Expectile_{\bfalpha}(\bfY)$ if $\bfX \prec \bfY$ in this order.
Furthermore, while the multivariate generalization of subadditivity proposed in this paper can numerically be verified for a wide range of distributions, it remains unclear how this property can be shown analytically.
The same holds true for the marginalization discussed in Section~\ref{section_higher_dim_margins} where we observed numerically the ordering of geometric expectiles when applied to (higher dimensional) margins and the full distribution.
Concerning the distance of $\Expectile_{\bfalpha}(\bfX)$ to $\E[\bfX]$, our findings are in line with geometric $\VaR$ and thus it is reasonable to expect a monotonic divergence to $\infty$.
In the special case of bounded random vectors this may hamper a straightforward application of geometric expectiles as risk measures, and addressing this issue will be part of further research.

Finally, while it is known, see \cite{Koltchinskii1997}, that geometric $\VaR_{\bfalpha}(\bfX)$ fully characterizes the joint distribution of $\bfX$, it is not clear if this also holds for geometric $\Expectile_{\bfalpha}(\bfX)$.

\subsection*{Acknowledgements}
This work was supported by NSERC under Grant RGPIN-5010-2015 and RGPIN-2015-05447. The authors would also like to thank Connor Jackman for communicating a vital step in the proof of Theorem~\ref{theorem_parallelogram_inequality}.

\bibliographystyle{plainnat}
\bibliography{proposition}
\end{document}